\documentclass[11pt]{article}

\usepackage[margin=1in]{geometry}
\usepackage{amsmath,amssymb,amsthm,mathtools}
\usepackage{enumitem}
\usepackage{needspace}
\usepackage{xcolor}
\usepackage{microtype}
\usepackage[authoryear,round]{natbib}
\usepackage[colorlinks=true,
  linkcolor=pink!60!blue,
  citecolor=pink!60!blue,
  urlcolor=pink!60!blue]{hyperref}

\hypersetup{
  pdftitle={The Dimension of Nonterminating Resampling Computations},
  pdfauthor={Yunbei Xu}
}

\newtheorem{theorem}{Theorem}[section]
\newtheorem{proposition}[theorem]{Proposition}
\newtheorem{corollary}[theorem]{Corollary}
\newtheorem{lemma}[theorem]{Lemma}
\theoremstyle{definition}

\theoremstyle{remark}

\newcommand{\Ncal}{\mathcal N}
\newcommand{\Prb}{\mathbb P}
\newcommand{\E}{\mathbb E}
\newcommand{\1}{\mathbf 1}

\newcommand{\eDimP}{\operatorname{Dim}_{\mathrm{eff},P}}
\newcommand{\edimP}{\operatorname{dim}_{\mathrm{eff},P}}

\title{The Dimension of Nonterminating Resampling Computations}
\author{Yunbei Xu\\National University of Singapore\\\texttt{yunbei@nus.edu.sg}}
\date{}

\begin{document}
\maketitle

\begin{abstract}
A randomized algorithm may terminate almost surely even though exceptional
random tapes make it run forever.  This paper studies the survival tail, the
Kolmogorov complexity of one such tape, and the Hausdorff dimension of all of
them.  For each $s>0$ at which the powered repair matrices commute, the main
theorem bounds $\sum_wP[w]^s$ over surviving prefixes $w$, uniformly over
deterministic nonanticipating selectors.  The case $s=1$ controls
termination; the full family gives weak-source and dimension bounds.

The source powers contain information absent even from the ordinary repair
kernel and the complete stopping-time law.  Under one common finite tape
source, two overlapping disagreement-repair rules on a four-vertex path have
the same ordinary kernels and the same stopping-time law for every selector,
yet their nontermination dimensions can be arbitrarily close to zero and one.
At one common source-power level, the same dominated tape source makes one
rule run forever but gives the other an exponential stopping tail.  The
separation is caused by action labels that produce
the same state transition and are therefore invisible at power one.
For bounded-dependence $k$-SAT, conditional block min-entropy above the
trace-growth threshold gives exponential termination, and the effective
dimension of an individual infinite run is bounded by the trace growth
induced by the clauses repaired infinitely often.  Tree formulas
asymptotically attain the maximum-degree dimension and global source bounds,
while clique formulas attain the graph-specific one-step threshold in the
stated regime.  An exact backward likelihood identity complements these
setwise results with tail and coding bounds for each run.
\end{abstract}

\section{Introduction}\label{sec:localization-motivation}
A randomized algorithm is deterministic once its random tape is fixed.
Although the algorithm may terminate with probability one, some tapes may
make it run forever.  Let $\Prb$ denote the source law of the tape and put
\[
 \Ncal=\{\omega:\tau(\omega)=\infty\},
\]
where $\tau$ is the number of repair steps before stopping.  The tail
$\Prb\{\tau\ge T\}$ measures how often the computation survives $T$ steps.
When the following extended limit exists, the \emph{survival exponent} is
\[
 \alpha=-\lim_{T\to\infty}\frac1T\log_2\Prb\{\tau\ge T\}.
\]
This is the base-two form of the usual escape rate
\citep{keller-liverani-2009}.

The survival exponent does not describe the exceptional tapes themselves.
For a computable process with a computable tape law $P$, $\Ncal$ is
effectively closed because termination, when it occurs, is determined by a
finite prefix.  Thus a $P$-null $\Ncal$ contains no
$P$--Martin--L\"of-random tape, but it may still have positive or full
Hausdorff dimension and contain tapes of maximal effective strong dimension.
Effective dimension measures the algorithmic randomness of one tape;
source-relative Hausdorff dimension measures the size of the whole
exceptional set.

The Lov\'asz local lemma has a different scope.  It gives conditions under
which all bad events can be avoided, and its algorithmic forms give
termination or tail guarantees from event probabilities and dependency
structure
\citep{erdos-lovasz-1975,moser-tardos-2010,haeupler-harris-2017}.  The
present paper fixes the resampling computation and asks for its survival rate
and for the complexity and geometry of its exceptional tapes.  These are
process-specific questions, rather than a stronger universal local-lemma
criterion.  The local lemma is valuable because it supplies efficiently
checkable quantitative bounds without constructing the full assignment
chain.

The motivation is data-dependent localization.  In statistical learning, the learned rule
$\widehat h(S)$ depends on the sample $S$, so a bound for a fixed rule cannot
simply be evaluated at $\widehat h(S)$; uniform and localized
empirical-process methods control the data-dependent region in which it lies
\citep{bartlett-bousquet-mendelson-2005,koltchinskii-2006,xu-zeevi-localization}.
A resampling computation creates an analogous dependence between a tape and
the repair history generated from it: survival for $T$ repairs restricts
tape space to a nested family of live cylinders.  Recent work studies
complexity at one trained neural network or one index of a random field
\citep{li-xu-2026,xu-2026-gaussian}; here the pointwise object is one random
tape.  This is an analogy, not a reduction to a learning bound: it isolates
the common question of how much complexity remains after data-dependent
localization.  The search for a smaller process state that preserves future source-power sums likewise parallels the notion of Bellman-sufficient representations \citep{xu-2026-bellman}, in which sequential interactive decisions are coupled with the underlying data-generating distributions.  Corollary~\ref{cor:recurrent-core-dimension} makes
this localization precise: the effective strong dimension of an infinite
$k$-SAT run is bounded by the trace growth of the subgraph induced by the
clauses repaired infinitely often.

For a computable full-support source $P=\pi^{\mathbb N}$, let
$\omega_{1:n}$ denote the first $n$ source symbols, and let
\[
 I_P(\omega_{1:n})=-\log_2P[\omega_{1:n}]
\]
be their self-information.  If $K(w)$ is prefix-free Kolmogorov complexity,
the effective strong $P$-dimension of one tape is
\begin{equation}\label{eq:intro-effective-dimension}
 \eDimP(\omega)
 =\limsup_{n\to\infty}\frac{K(\omega_{1:n})}{I_P(\omega_{1:n})}.
\end{equation}
This is the standard source-relative notion
\citep{staiger-1993,athreya-et-al-2007,lutz-2011}.  A value near one means
that the tape retains nearly the full algorithmic information of the source;
a smaller value means that continued execution has imposed additional
structure.  The source-relative Hausdorff dimension $\dim_H^P E$ of a tape
set $E$ is defined by assigning diameter $P[w]$ to the cylinder $[w]$
\citep{billingsley-1965,lutz-2011}.  Section~\ref{sec:consumed-tape} gives
the corresponding source tree when tape consumption is adaptive.

For the prefix-free family $\mathcal L_T$ of minimal tape prefixes that
remain live for $T$ repairs, define
\begin{equation}\label{eq:intro-source-power-sum}
 Z_T(s)=\sum_{w\in\mathcal L_T}P[w]^s.
\end{equation}
This R\'enyi sum is the central object.  At $s=1$ it is the survival
probability; its behavior across $s$ controls Hausdorff dimension through the
mass-distribution principle and Frostman's lemma
\citep{renyi-1961,falconer-2014}.  It also turns global source-power
domination---and, for a uniform reference, a min-entropy bound---into a
running-time bound for the original algorithm.

\subsection{Results}

Theorem~\ref{thm:powered-commutative} is a source-power extension of
matrix-commutative local repair.  For each flaw $f$ and power $s>0$, the
matrix $K_{f,s}$ sums powered action probabilities over state transitions.
When the matrices for nonadjacent flaws commute, the theorem bounds
$Z_T^\Lambda(s)$ by the Cartier--Foata/Shearer trace series, term by term and
uniformly over every deterministic nonanticipating rule $\Lambda$.  The
direct full-prefix induction permits colliding actions and rules that remember
their labels.  At $s=1$ it gives familiar termination estimates; varying $s$
gives weak-source robustness and bounds on Hausdorff and effective dimension.

The principal exact application shows why the full source-power family is
needed.  In disagreement repair on a connected graph, Hamming weight reduces
every $Z_T(s)$ to one tridiagonal matrix, independently of the graph and the
repair order.  On $P_4$ the repairs overlap and nonadjacent repairs commute.
Under one common finite tape source, two rules can have identical ordinary
repair kernels and the same complete stopping-time law for every selector,
while their sharp source-domination thresholds and nontermination dimensions
are arbitrarily close to zero and one
(Theorem~\ref{thm:orientation-disagreement} and
Corollary~\ref{cor:disagreement-separation}).  Thus even complete running-time
information---or the ordinary transition kernel---need not determine the
geometry of exceptional tapes or robustness to weak randomness.

For $k$-SAT, conditional block min-entropy above $\log_2\kappa_D$ gives an
exponential tail, where $\kappa_D$ is the trace-growth rate of the
clause-overlap graph.  Corollary~\ref{cor:recurrent-core-dimension} gives a
pointwise upper bound in terms of the clauses repaired infinitely often.
Tree formulas asymptotically attain the maximum-degree dimension and global
source bounds; clique formulas attain the graph-specific one-step threshold
in the stated regime.  Further consequences include fresh-bit and
Santha--Vazirani guarantees and rainbow perfect matching.  Complementing
these setwise results, Theorem~\ref{thm:identity} reads one finite run backward
and expresses its information surplus as an exact likelihood ratio, yielding
a tail bound and a description-length saving for every surviving prefix.

\section{Source-power local repair}
\label{sec:powered-commutative}

This section bounds the complete source-power sum of an overlapping repair
process without constructing its full assignment chain.  The input is a
family of powered transition matrices and local scalar bounds on their
growth.  The conclusion is uniform over the order in which present flaws are
selected.

Let $\Omega$ be a finite state space and let $F_\sigma\subseteq F$ be the
flaws present at $\sigma$.  Addressing $f\in F_\sigma$ chooses an action
$a\in\mathcal A_f(\sigma)$ with probability
$\rho_f(a\mid\sigma)>0$ and moves to $\Phi_f(\sigma,a)$.  The action sets
are finite and satisfy
\[
 \sum_{a\in\mathcal A_f(\sigma)}\rho_f(a\mid\sigma)=1.
\]
Distinct actions may lead to the same terminal state.  A deterministic
nonanticipating selector may use the complete observed past, including past
action labels, but not future randomness.

Let $\sim$ be a symmetric relation on distinct flaws and define the inclusive
neighborhood
\[
 \Gamma(f)=\{f\}\cup\{g\in F:g\sim f\}.
\]
In local-repair applications, $\sim$ is also chosen to satisfy the
potential-causality condition
\begin{equation}\label{eq:powered-potential-causality}
 F_{\Phi_f(\sigma,a)}
 \subseteq (F_\sigma\setminus\{f\})\cup\Gamma(f).
\end{equation}
This is Kolmogorov's convention after adding a self-loop at every flaw.  The
matrix theorem itself uses only powered commutation; potential causality
explains why $D$ is a local dependency graph.  Write
$\operatorname{Ind}(S)$ for the subsets of $S$ whose distinct elements are
pairwise nonadjacent.

For $s>0$, define the powered action kernel
\begin{equation}\label{eq:powered-action-kernel}
 K_{f,s}(\sigma,\tau)
 =\1\{f\in F_\sigma\}
 \sum_{\substack{a\in\mathcal A_f(\sigma)\\
                  \Phi_f(\sigma,a)=\tau}}
 \rho_f(a\mid\sigma)^s.
\end{equation}
The kernel aggregates actions with the same endpoints, but for $s\ne1$ its
entries retain the powered probabilities of the individual actions and need
not be determined by the ordinary transition matrix $K_{f,1}$.
The process is \emph{$s$-commutative} if
\begin{equation}\label{eq:powered-matrix-commutation}
 K_{f,s}K_{g,s}=K_{g,s}K_{f,s}
 \qquad(f\ne g,\ f\not\sim g).
\end{equation}

Two familiar conditions imply this matrix identity.  First, suppose that
for nonadjacent $f,g$, action-labelled paths
\[
 \sigma\xrightarrow{f,a}\sigma'\xrightarrow{g,b}\tau
\]
are in product-preserving bijection with paths
\[
 \sigma\xrightarrow{g,b'}\sigma''\xrightarrow{f,a'}\tau,
\]
so that
\begin{equation}\label{eq:product-preserving-diamond}
 \rho_f(a\mid\sigma)\rho_g(b\mid\sigma')
 =\rho_g(b'\mid\sigma)\rho_f(a'\mid\sigma'').
\end{equation}
Second, suppose that each enabled instance of flaw $f$ has the same number
$M_f$ of equiprobable actions.

\begin{lemma}[Two routes to powered commutation]
\label{lem:all-power-commutation}
The product-preserving diamond property implies
\eqref{eq:powered-matrix-commutation} for every $s>0$.  Alternatively, if
each enabled $f$ has exactly $M_f$ actions of probability $1/M_f$, then
\begin{equation}\label{eq:uniform-action-scaling}
 K_{f,s}=M_f^{\,1-s}K_{f,1}.
\end{equation}
Consequently ordinary matrix commutation at $s=1$ implies
\eqref{eq:powered-matrix-commutation} for every $s>0$ in this fixed-arity
case.  Different action labels may lead to the same terminal state.
\end{lemma}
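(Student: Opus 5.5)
Both claims follow by writing out the definition \eqref{eq:powered-action-kernel} entrywise and comparing sums of powered products of action probabilities.

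\textbf{Diamond route.} Fix nonadjacent $f\ne g$, states $\sigma,\tau$, and $s>0$. Expanding the matrix product gives
\[
 (K_{f,s}K_{g,s})(\sigma,\tau)
 =\sum_{\sigma'}\sum_{\substack{a:\Phi_f(\sigma,a)=\sigma'}}\ \sum_{\substack{b:\Phi_g(\sigma',b)=\tau}}
 \1\{f\in F_\sigma\}\1\{g\in F_{\sigma'}\}\,
 \bigl(\rho_f(a\mid\sigma)\rho_g(b\mid\sigma')\bigr)^s .
\]
This is a sum over action-labelled two-step paths $\sigma\xrightarrow{f,a}\sigma'\xrightarrow{g,b}\tau$, with both flaws enabled where they act. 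The same expansion writes $(K_{g,s}K_{f,s})(\sigma,\tau)$ as a sum over paths $\sigma\xrightarrow{g,b'}\sigma''\xrightarrow{f,a'}\tau$.

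The assumed bijection between the two path sets preserves the product \eqref{eq:product-preserving-diamond}. It therefore preserves each summand's $s$-th power. Reindexing one sum by the bijection gives equality for every $s>0$.

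The only point needing care is that the path sets must include exactly the enabled paths. The bijection is between legitimate paths, meaning those whose actions lie in $\mathcal A_f(\sigma)$ and $\mathcal A_g(\sigma')$. Paths are counted with their action labels, so collisions of terminal states do not matter: the sum is over labels, not over endpoints.

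\textbf{Fixed-arity route.} If $f\in F_\sigma$, every $a\in\mathcal A_f(\sigma)$ has $\rho_f(a\mid\sigma)^s=M_f^{-s}=M_f^{1-s}\cdot M_f^{-1}$. Summing over actions with $\Phi_f(\sigma,a)=\tau$ gives
\[
 K_{f,s}(\sigma,\tau)=M_f^{1-s}K_{f,1}(\sigma,\tau).
\]
If $f\notin F_\sigma$, both sides vanish, so \eqref{eq:uniform-action-scaling} holds entrywise. Then
\[
 K_{f,s}K_{g,s}=M_f^{1-s}M_g^{1-s}K_{f,1}K_{g,1},
\]
and the same identity holds with $f,g$ swapped. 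Hence commutation of $K_{f,1}$ and $K_{g,1}$ transfers to every $s$.

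\textbf{Remarks and main obstacle.} The final sentence of the lemma is a remark. Neither argument identifies actions with their endpoints, so it needs no separate proof.

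The main obstacle is only bookkeeping in the diamond case. One must make the product expansion an explicit sum over labelled paths, including the enabling indicators, so that the bijection can be applied. If the hypothesis is read as a bijection for each fixed pair $(\sigma,\tau)$, the argument is immediate. Otherwise, one first notes that the bijection preserves the endpoints $\sigma,\tau$, as its statement indicates.
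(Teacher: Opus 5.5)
Your proposal is correct and follows the paper's proof essentially line for line. You expand the $(\sigma,\tau)$ entry of $K_{f,s}K_{g,s}$ as a sum of powered products over action-labelled two-step paths and reindex it by the product-preserving diamond bijection; in the fixed-arity case, each powered action probability equals $M_f^{-s}$, which gives $K_{f,s}=M_f^{1-s}K_{f,1}$ entrywise (both sides vanish when $f$ is absent), and commutation at $s=1$ then transfers to every $s>0$ through the scalar factors.
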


The proof is in Appendix~\ref{app:powered-commutative}.

Ordinary matrix commutation at $s=1$ need not survive powering.  For example,
\[
 A_0=\begin{pmatrix}4/5&1/5\\2/5&3/5\end{pmatrix},
 \qquad
 B_0=\begin{pmatrix}7/10&3/10\\3/5&2/5\end{pmatrix}
\]
commute, whereas their entrywise square roots do not.  Thus the full
source-power family requires additional structure beyond ordinary matrix
commutation; Lemma~\ref{lem:all-power-commutation} gives two sufficient
conditions.

Fix an initial law $u$ on $\Omega$.  The initial source symbol is taken to
be the initial state, and its powered mass vector is
\[
 b_s(\sigma)=u(\sigma)^s.
\]
If instead an initial symbol $\alpha$ of probability $q(\alpha)$ produces
the state $\iota(\alpha)$, replace this by
\[
 b_s(\sigma)=
 \sum_{\alpha:\,\iota(\alpha)=\sigma}q(\alpha)^s.
\]
One later source symbol is one complete oracle action.  Let
$\mathcal L_T^\Lambda$ be the prefix-free family of minimal tape prefixes
that complete $T$ repairs under selector $\Lambda$, and set
\begin{equation}\label{eq:commutative-partition-sum}
 Z_T^\Lambda(s)
 =\sum_{w\in\mathcal L_T^\Lambda}P_{\rm tape}[w]^s,
\end{equation}
where $P_{\rm tape}[w]$ includes the initial source symbol.  With this
convention, $Z_T^\Lambda(1)=\Prb\{\tau\ge T\}$ when $\tau$ is the number of
completed repairs.

Write $f\simeq g$ when $f=g$ or $f\sim g$.  For a flaw word
$\ell=(f_1,\ldots,f_T)$, its \emph{full history DAG} has vertex set $[T]$
and an edge $i\to j$ exactly when $i<j$ and $f_i\simeq f_j$.  For such a
DAG $H$, put
\[
 K_{\varnothing,s}=I,
 \qquad
 K_{H,s}=K_{L(x),s}K_{H-x,s},
\]
where $x$ is any source and $L(x)$ is its label.  Under
\eqref{eq:powered-matrix-commutation}, this definition is independent of the
chosen source.

Two full history DAGs are identified when they are related by a
label-preserving directed-graph isomorphism.

Choose a strictly positive probability vector $v_s$ on $\Omega$ and define
\begin{align}
 \lambda_f(s)
 &=\max_{\tau\in\Omega}
 \frac{\sum_{\sigma\in\Omega}
       v_s(\sigma)K_{f,s}(\sigma,\tau)}{v_s(\tau)},
 \label{eq:powered-charge}\\
 \gamma_s
 &=\max_{\sigma\in\Omega}
   \frac{b_s(\sigma)}{v_s(\sigma)}.
 \label{eq:powered-initial-ratio}
\end{align}

Let $D$ be the graph on $F$ with edges $f\sim g$ and let
$\mathcal M(D)$ be the trace monoid in which nonadjacent flaws commute.  For
a trace $h$, write $N_f(h)$ for the multiplicity of $f$ and
$|h|=\sum_fN_f(h)$.  Define
\begin{align}
 a_T(s)
 &=\sum_{\substack{h\in\mathcal M(D)\\|h|=T}}
   \prod_{f\in F}\lambda_f(s)^{N_f(h)},
 \label{eq:powered-trace-weight}\\
 P_D(\mathbf z)
 &=\sum_{I\in\operatorname{Ind}(F)}
   (-1)^{|I|}\prod_{f\in I}z_f.
 \label{eq:powered-shearer-polynomial}
\end{align}
The open Shearer region consists of the vectors $\mathbf z\ge0$ for which
$P_D(\mathbf y)>0$ for every $0\le\mathbf y\le\mathbf z$.  The
Cartier--Foata identity gives
\begin{equation}\label{eq:powered-trace-series}
 \sum_{T\ge0}a_T(s)z^T
 =\frac{1}{P_D\bigl(z\boldsymbol\lambda(s)\bigr)}
\end{equation}
as a formal power series, where
$\boldsymbol\lambda(s)=(\lambda_f(s))_{f\in F}$.

For two formal series
$F(z)=\sum_{T\ge0}a_Tz^T$ and $G(z)=\sum_{T\ge0}b_Tz^T$, write
$F\preceq_{\rm term}G$ when $a_T\le b_T$ for every $T$.

At $s=1$, \citet[Lemma~3.5]{harris-iliopoulos-kolmogorov-2025}
bound the probability that a prescribed history DAG appears during a run.
Their induction allows the current repair to be absent from that DAG and
uses substochasticity in this case.  Here $H$ records the entire next
$T$-repair prefix, so the current repair must label a source of $H$.  The
omission case disappears.

This distinction should not be overstated.  At a fixed power, scalar
normalization makes every powered kernel sub-Markov.  For a rule determined
by the state history, a clock and a final flaw adjacent to every original
flaw reduce a length-$T$ prefix to the appearance event analyzed by
\citet{harris-iliopoulos-kolmogorov-2025}; the immediate killed-process
version of their lemma then recovers the corresponding special case below
after the scalar factors are restored.  The direct induction is useful
because it needs no auxiliary process and, more importantly, permits
colliding actions to be distinguished by a rule that observes their labels;
encoding those labels into a state-only process would require enlarging the
state space.

\begin{lemma}[Exact-prefix matrix bound]
\label{lem:exact-prefix-matrix}
Fix $s>0$ for which \eqref{eq:powered-matrix-commutation} holds.  For a
full history DAG $H$ with $T$ vertices, let
$\mathcal L_T^\Lambda(H)$ be the live prefixes whose first $T$ repair labels
induce $H$.  Every deterministic nonanticipating selector satisfies
\begin{equation}\label{eq:exact-prefix-matrix}
 \sum_{w\in\mathcal L_T^\Lambda(H)}P_{\rm tape}[w]^s
 \le b_s^{\mathsf T}K_{H,s}\mathbf1.
\end{equation}
No injectivity of the action-to-state map is required.
\end{lemma}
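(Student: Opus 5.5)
We prove \eqref{eq:exact-prefix-matrix} by induction on $T$, in a slightly stronger state-resolved and past-conditioned form. For a finite observed history $\eta$ (initial symbol and action labels so far), the selector $\Lambda$ together with $\eta$ defines a residual deterministic nonanticipating selector $\Lambda_\eta$ started from the current state $\sigma$. The inductive claim is: for every $\sigma\in\Omega$, every residual selector, and every full history DAG $H$ with $T$ vertices,
\[
 S_H(\sigma)\;:=\;\sum_{w}\rho[w]^s\;\le\;\bigl(K_{H,s}\mathbf1\bigr)(\sigma).
\]
Here $w$ ranges over the action strings that complete $T$ further repairs from $\sigma$ with labels inducing $H$, and $\rho[w]$ is the product of the action probabilities. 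The lemma follows by summing over the initial symbol. Since $P_{\rm tape}[w]^s$ factors as the initial mass to the power $s$ times $\rho[\cdot]^s$, summing over initial symbols with the same state gives $b_s(\sigma)$. Any further dependence of the selector on the initial label is absorbed because the claim holds for every residual selector.

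\textbf{Base case.} For $T=0$ the only prefix is the empty one, so $S_\varnothing(\sigma)=1=(I\mathbf1)(\sigma)$.

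\textbf{Inductive step.} Suppose the selector at $\sigma$ chooses $f\in F_\sigma$; the choice is deterministic given $\eta$.
\begin{itemize}[nosep]
 \item Every word contributing to $S_H(\sigma)$ begins with an $f$-action. Its first label is $f$, and vertex $1$ is always a source of the full history DAG. So if no source of $H$ carries label $f$, then $S_H(\sigma)=0$ and there is nothing to prove. This is precisely where the omission case of \citet{harris-iliopoulos-kolmogorov-2025} disappears.
 \item Otherwise let $x$ be the vertex $1$ of $H$, labelled $f$. The full history DAG of a word $(f,\ell')$ is obtained from that of $\ell'$ by adding a new source with label $f$ and edges to exactly the later vertices whose labels are $\simeq f$. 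Hence $(f,\ell')$ induces $H$ if and only if $\ell'$ induces $H-x$, where this reduction is well defined up to label-preserving isomorphism.
 \item We then decompose over the first action $a\in\mathcal A_f(\sigma)$:
 \[
  S_H(\sigma)=\sum_{a}\rho_f(a\mid\sigma)^s\,S^{(\eta,a)}_{H-x}\bigl(\Phi_f(\sigma,a)\bigr).
 \]
 Here the residual selector after observing $a$ may depend on the label $a$, not just on the resulting state.
 \item Applying the induction hypothesis to each term, uniformly over residual selectors, bounds it by $(K_{H-x,s}\mathbf1)(\Phi_f(\sigma,a))$. Grouping the actions by their endpoint $\tau$ yields
 \[
  \sum_\tau K_{f,s}(\sigma,\tau)\,(K_{H-x,s}\mathbf1)(\tau)=(K_{f,s}K_{H-x,s}\mathbf1)(\sigma).
 \]
 This equals $(K_{H,s}\mathbf1)(\sigma)$ by the source-independence of $K_{H,s}$ under \eqref{eq:powered-matrix-commutation}.
\end{itemize}

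Colliding actions cause no difficulty. Each action is bounded separately, using the same endpoint vector, before grouping, so no injectivity of $a\mapsto\Phi_f(\sigma,a)$ is used. In fact the argument gives equality up to the selector's freedom, and an inequality only because $K_{H-x,s}\mathbf1$ bounds every residual selector.

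The main care point is the DAG bookkeeping. We must check that the source-deletion map $(f,\ell')\mapsto\ell'$ is a bijection between words inducing $H$ and starting with $f$, and words inducing $H-x$. We also need the chosen source $x$ to be legitimate in the recursive definition of $K_{H,s}$. If $H$ has several sources labelled $f$, they are pairwise nonadjacent. Distinct vertices with the same label are adjacent via $\simeq$, so there is at most one such source, and $x$ is unique. Source-independence of $K_{H,s}$, assumed in the setup, then makes any other expansion order irrelevant.
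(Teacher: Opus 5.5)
Your proposal is correct and follows essentially the same route as the paper. Both use a past-conditioned induction that is uniform over residual selectors: the mass is zero when no source of $H$ carries the selected label $f$; otherwise one isolates the unique source $x$ labelled $f$, bounds each action separately by $(K_{H-x,s}\mathbf 1)(\Phi_f(\sigma,a))$ before grouping into $K_{f,s}$, and sums over initial symbols to obtain $b_s$. The only difference is that the paper re-proves the source-independence of $K_{H,s}$ inside the proof, whereas you cite it from the setup, which is legitimate.
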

For positive numbers $(\eta_f)_{f\in F}$, write
$\eta(I)=\prod_{f\in I}\eta_f$ and put
\begin{equation}\label{eq:powered-cluster-theta}
 \theta_s
 =\max_{f\in F}
 \frac{\lambda_f(s)}{\eta_f}
 \sum_{I\in\operatorname{Ind}(\Gamma(f))}\eta(I).
\end{equation}

\begin{theorem}[Powered matrix-commutative local lemma]
\label{thm:powered-commutative}
Fix $s>0$ and suppose that the powered kernels satisfy
\eqref{eq:powered-matrix-commutation}.  Then every deterministic
nonanticipating selector $\Lambda$ satisfies
\begin{equation}\label{eq:powered-trace-bound}
 Z_T^\Lambda(s)\le\gamma_s a_T(s)
 \qquad(T\ge0),
\end{equation}
or, equivalently,
\begin{equation}\label{eq:powered-termwise-bound}
 \sum_{T\ge0}Z_T^\Lambda(s)z^T
 \preceq_{\rm term}
 \frac{\gamma_s}{P_D\bigl(z\boldsymbol\lambda(s)\bigr)}.
\end{equation}
If $q\boldsymbol\lambda(s)$ lies in the open Shearer region for some
$q>1$, then
\begin{equation}\label{eq:powered-shearer-tail}
 Z_T^\Lambda(s)
 \le\frac{\gamma_s}{P_D(q\boldsymbol\lambda(s))}\,q^{-T}.
\end{equation}
If instead $\theta_s<1$, then
\begin{equation}\label{eq:powered-selector-bound}
 Z_T^\Lambda(s)
 \le
 \gamma_s\left(\sum_{R\in\operatorname{Ind}(F)}\eta(R)\right)
 \theta_s^T
 \qquad(T\ge0).
\end{equation}
All quantities on the right are independent of the selector.
\end{theorem}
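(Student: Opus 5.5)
The plan is to build everything on Lemma~\ref{lem:exact-prefix-matrix}. First, $\mathcal L_T^\Lambda$ is partitioned according to the full history DAG $H$ induced by the first $T$ repair labels, so
\[
 Z_T^\Lambda(s)=\sum_{H}\sum_{w\in\mathcal L_T^\Lambda(H)}P_{\rm tape}[w]^s
 \le\sum_{H}b_s^{\mathsf T}K_{H,s}\mathbf1,
\]
where $H$ ranges over full history DAGs with $T$ vertices, up to label-preserving isomorphism. These isomorphism classes are in bijection with traces $h\in\mathcal M(D)$ with $|h|=T$. Indeed, two flaw words induce isomorphic full history DAGs exactly when they differ by swaps of adjacent commuting (nonadjacent, distinct) letters; this is the standard heap/Cartier--Foata correspondence. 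Moreover, $K_{H,s}$ equals $K_{f_1,s}\cdots K_{f_T,s}$ for any linear extension, which is well defined by \eqref{eq:powered-matrix-commutation}.

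Next, I bound each term by the scalar charges. Set $w_0=v_s$ and peel off the matrices from the left. By \eqref{eq:powered-initial-ratio}, $b_s\le\gamma_s v_s$ entrywise. By \eqref{eq:powered-charge}, $v_s^{\mathsf T}K_{f,s}\le\lambda_f(s)v_s^{\mathsf T}$ entrywise, because all entries are nonnegative. Iterating along a linear extension of $H$ gives
\[
 b_s^{\mathsf T}K_{H,s}\mathbf1\le\gamma_s\prod_f\lambda_f(s)^{N_f(h)}\,v_s^{\mathsf T}\mathbf1,
\]
and $v_s^{\mathsf T}\mathbf1=1$. Summing over traces of length $T$ yields \eqref{eq:powered-trace-bound}. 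The termwise form \eqref{eq:powered-termwise-bound} is then just a restatement via \eqref{eq:powered-trace-series}.

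For \eqref{eq:powered-shearer-tail}: if $q\boldsymbol\lambda(s)$ lies in the open Shearer region, then $1/P_D(z\boldsymbol\lambda(s))$ is analytic on $|z|<q'$ for some $q'>q$. This holds because $P_D(y\boldsymbol\lambda)$ is positive for $0\le y\le q$, and by Shearer/Scott--Sokal the dominant singularity of a series with nonnegative coefficients is on the positive real axis. Since $a_T(s)\ge0$, we get $a_T(s)q^T\le\sum_{T'}a_{T'}(s)q^{T'}=1/P_D(q\boldsymbol\lambda(s))$. The convergence of this nonnegative series at $z=q$ is the one point needing care. I would handle it by monotone convergence, using Pringsheim's theorem: the radius of convergence equals the first positive zero of $P_D(z\boldsymbol\lambda)$, which exceeds $q$ by openness.

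For \eqref{eq:powered-selector-bound}, I would prove by induction on $T$ a cluster-expansion inequality for the trace weights. Decompose a trace by its first Cartier--Foata layer, an independent set $R$. Then show by induction on $T$ that
\[
 \sum_{h:\ |h|=T,\ \text{first layer}\ R}\ \prod_f\lambda_f^{N_f(h)}\le\eta(R)\,\theta_s^T .
\]
Removing a source $f\in R$ leaves a trace whose first layer lies in $(R\setminus\{f\})\cup I$ with $I\in\operatorname{Ind}(\Gamma(f))$. The definition \eqref{eq:powered-cluster-theta} then absorbs the factor $\lambda_f(s)\sum_I\eta(I)/\eta_f\le\theta_s$. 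Summing over $R$ gives $a_T(s)\le\theta_s^T\sum_R\eta(R)$.

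The main obstacle is making this layer recursion rigorous without overcounting. The cleanest route is to use the \emph{last-letter} (sink) decomposition instead: each trace of length $T$ ending in a sink $f$ arises from a shorter trace. I would verify the inductive hypothesis with the "stable set of maximal elements" as the tracked set, mirroring the Kolmogorov/Harris cluster bound. All constants involved are independent of $\Lambda$.
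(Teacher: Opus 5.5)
Your proposal is correct and follows the paper's proof. The shared steps are: Lemma~\ref{lem:exact-prefix-matrix}, the bounds $b_s\le\gamma_sv_s$ and $v_s^{\mathsf T}K_{f,s}\le\lambda_f(s)v_s^{\mathsf T}$, summation over traces via Cartier--Foata, convergence of the nonnegative series at $z=q$ (you justify this by Pringsheim, where the paper simply asserts it), and a trace cluster bound. For that last step the paper cites the sink-set enumeration of \citet[Proposition~3.9]{harris-iliopoulos-kolmogorov-2025}, but your source-layer induction already closes as written. If you fix $f$ as a function of $R$, the map $h\mapsto h-x$ is injective with inverse $h'\mapsto fh'$, so there is no overcounting and no need to switch to sinks; it even covers $\theta_s=0$ without a separate case.
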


\paragraph{Proof idea.}
Fix a full $T$-step history DAG $H$ and a past ending at state $\sigma$.
Write $M_{\mathfrak h}(H)$ for the powered mass of continuations whose next
$T$ repairs induce $H$.
If the rule next selects $f$ and no source of $H$ has label $f$, the powered
mass of the corresponding continuations is zero.  Otherwise that source
$x$ is unique.  Conditioning on the next action and applying induction to
$H-x$ gives
\[
 M_{\mathfrak h}(H)
 \le e_\sigma^{\mathsf T}K_{f,s}K_{H-x,s}\mathbf1.
\]
Matrices attached to different sources commute, so $K_{H,s}$ does not
depend on the order in which sources are removed.  Summing over the initial
symbol yields
\[
 \sum_{w\in\mathcal L_T^\Lambda(H)}P_{\rm tape}[w]^s
 \le b_s^{\mathsf T}K_{H,s}\mathbf1.
\]
Because $H$ records the entire repair prefix, there is no omission case;
actions with the same endpoints are added in $K_{f,s}$.  The positive vector
$v_s$ bounds the last display by
$\gamma_s\prod_{x\in H}\lambda_{L(x)}(s)$.  Full history DAGs, up to
label-preserving isomorphism, are traces.  Summing over traces and applying
the Cartier--Foata identity proves the term-by-term bound.  The appendix
gives the full induction and the Shearer and cluster estimates.

The term-by-term trace bound is the sharpest of these statements.  The
Shearer condition turns it into an exponential estimate, while the cluster
expression is a convenient local sufficient condition.  At $s=1$ they
recover established commutative-LLL bounds.  The new use is to control the
full source-power family, and hence weak randomness and exceptional-tape
dimension, by the same matrices.

Whenever either strict condition in Theorem~\ref{thm:powered-commutative}
holds, choose constants such that
\begin{equation}\label{eq:strict-powered-bound}
 Z_T^\Lambda(s)\le C_s\vartheta_s^T,
 \qquad C_s<\infty,\quad0<\vartheta_s<1,
\end{equation}
using the corresponding constants from
\eqref{eq:powered-shearer-tail} or \eqref{eq:powered-selector-bound}.
If the rate supplied by the theorem is zero, the same estimate remains valid
after replacing it by any fixed number in $(0,1)$.

\begin{corollary}[Dimension and pointwise coding]
\label{cor:powered-dimension}
Assume \eqref{eq:strict-powered-bound} and
$\rho_{\max}=\max_{\sigma,f,a}\rho_f(a\mid\sigma)<1$.  Then
\begin{equation}\label{eq:powered-dimension-bound}
 \dim_H^{P_{\rm tape}}\mathcal N\le s.
\end{equation}
Suppose in addition that $s$, the process, the selector, and computable
numbers $\widehat C_s\ge C_s$ and
$0<\widehat\vartheta_s<1$ with
$\widehat\vartheta_s\ge\vartheta_s$ are computable.  Include fixed
descriptions of these objects in $\textup{instance}$.  Every
$w\in\mathcal L_T^\Lambda$ then satisfies
\begin{equation}\label{eq:powered-complexity-saving}
 K(w\mid T,\textup{instance})
 \le s\log_2\frac1{P_{\rm tape}[w]}
   +\log_2\widehat C_s+T\log_2\widehat\vartheta_s+O(1),
\end{equation}
and every $\omega\in\mathcal N$ has
$\eDimP(\omega)\le s$ in this action-outcome source tree.
\end{corollary}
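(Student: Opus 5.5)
The three claims will come from the same estimate: the prefix-free covers $\mathcal L_T^\Lambda$ of $\mathcal N$ together with $Z_T^\Lambda(s)\le C_s\vartheta_s^T$ from \eqref{eq:strict-powered-bound}. The dimension bound follows from the first Hausdorff-measure argument, the coding bound from a Kraft--Chaitin construction, and the pointwise statement from the coding bound.

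\textbf{Dimension bound.} First, every $\omega\in\mathcal N$ completes $T$ repairs for every $T$, so for each $T$ there is exactly one $w\in\mathcal L_T^\Lambda$ with $\omega\in[w]$. Hence $\{[w]:w\in\mathcal L_T^\Lambda\}$ covers $\mathcal N$. Next, the diameters shrink uniformly. Each $w\in\mathcal L_T^\Lambda$ contains $T$ oracle actions after the initial symbol, each of probability at most $\rho_{\max}$, so
\[
 P_{\rm tape}[w]\le\rho_{\max}^T\to0 .
\]
Thus these are $\delta$-covers for large $T$, with $\delta_T=\rho_{\max}^T$. For any $t>s$,
\[
 \sum_{w\in\mathcal L_T^\Lambda}P_{\rm tape}[w]^t
 \le\rho_{\max}^{T(t-s)}Z_T^\Lambda(s)
 \le C_s\vartheta_s^T\rho_{\max}^{T(t-s)}\to0 .
\]
So $\mathcal H^t_{P_{\rm tape}}(\mathcal N)=0$ for every $t>s$, which gives \eqref{eq:powered-dimension-bound}. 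In fact $\vartheta_s<1$ already suffices at $t=s$ to make the sum tend to zero. One point needs care. The source-relative Hausdorff dimension is defined with cylinders of the action-outcome source tree, whose diameter is $P[w]$. Covers must be by such cylinders, and live prefixes are exactly such nodes, so this is consistent with the convention of Section~\ref{sec:consumed-tape}.

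\textbf{Coding bound.} Fix $T$ and the instance. The set $\mathcal L_T^\Lambda$ is finite and computable from $(T,\textup{instance})$, because the process and selector are computable and we can simulate all branches to depth $T$. Assign to each $w\in\mathcal L_T^\Lambda$ the requested length
\[
 \ell(w)=\Bigl\lceil s\log_2\tfrac1{P_{\rm tape}[w]}+\log_2\widehat C_s+T\log_2\widehat\vartheta_s\Bigr\rceil .
\]
Then
\[
 \sum_w2^{-\ell(w)}\le\frac{Z_T^\Lambda(s)}{\widehat C_s\widehat\vartheta_s^T}\le1 .
\]
By the Kraft--Chaitin (KC) theorem, relativized to the conditioning $(T,\textup{instance})$, this gives $K(w\mid T,\textup{instance})\le\ell(w)+O(1)$, which is \eqref{eq:powered-complexity-saving}. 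This is the main obstacle: the lengths involve the real numbers $P_{\rm tape}[w]^s$ and $\log_2\widehat C_s$. Since these quantities are computable, I would first compute rational upper approximations of $2^{-\ell(w)}$ to within a factor $2$. Feeding these into KC costs only an additive constant, and the ceiling absorbs the rest.

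\textbf{Pointwise dimension.} Let $\omega\in\mathcal N$ and let $n$ be large. Choose $T$ maximal with the live prefix $w_T\preceq\omega_{1:n}$. Then $\omega_{1:n}$ is $w_T$ followed by a short remainder: at most one partial action, together with the bounded number of source symbols inside an action. Its cost is $O(1)$ beyond the information already counted in $I_P$. Describing $T$ costs $O(\log T)$, and $T\le I_P(\omega_{1:n})/\log_2(1/\rho_{\max})$ when $\rho_{\max}<1$. Also $T\log_2\widehat\vartheta_s\le0$. Together these give
\[
 K(\omega_{1:n})\le s\,I_P(\omega_{1:n})+O\bigl(\log I_P(\omega_{1:n})\bigr),
\]
using that the symbol probabilities are bounded below, so that $I_P(\omega_{1:n})-I_P(w_T)=O(1)$. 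Taking the $\limsup$ of the ratio in \eqref{eq:intro-effective-dimension} yields $\eDimP(\omega)\le s$.
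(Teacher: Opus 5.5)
Your proposal is correct and follows essentially the same route as the paper: the live cylinders indexed by $\mathcal L_T^\Lambda$ cover $\mathcal N$ with source diameters at most $\rho_{\max}^T$, so \eqref{eq:strict-powered-bound} makes the $s$-dimensional measure vanish. The coding bound then comes from prefix-free coding of the subprobability masses $P_{\rm tape}[w]^s/(\widehat C_s\widehat\vartheta_s^T)$ on $\mathcal L_T^\Lambda$, removing the condition $T$ costs $O(\log T)$, and linear growth of self-information in $T$ gives the pointwise bound. The only differences are cosmetic. The paper applies the coding theorem for uniformly computable semimeasures directly, which makes your ceiling-and-rational-approximation step unnecessary. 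Also, in the action-outcome tree every prefix of $\omega\in\mathcal N$ already lies in some $\mathcal L_T^\Lambda$, so the remainder you handle is empty.
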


The metric in this corollary is tied to complete oracle actions.  A binary
random-bit implementation gives the same conclusion when it is related to
this tree by a computable prefix-free sampler with bounded distortion; no
invariance under arbitrary implementations is asserted.

\begin{corollary}[Dependent repair choices]
\label{cor:powered-dependent-source}
Let $\nu$ be a possibly history-dependent law that uses only actions
available at the corresponding reference history.  If a constant $D<\infty$
satisfies
\begin{equation}\label{eq:cylinder-power-domination}
 \nu[w]\le D P_{\rm tape}[w]^s
 \qquad(w\in\mathcal L_T^\Lambda, T\ge0),
\end{equation}
then
\begin{equation}\label{eq:dependent-source-tail}
 \Prb_\nu\{\tau\ge T\}
 \le D C_s\vartheta_s^T.
\end{equation}
\end{corollary}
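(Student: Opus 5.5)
The plan is to write the event $\{\tau\ge T\}$ as a disjoint union of cylinders indexed by the prefix-free family $\mathcal L_T^\Lambda$, and then to dominate $\nu$ cylinder by cylinder. The first step is to check that, under $\nu$, the event that the selector $\Lambda$ completes $T$ repairs is exactly the union of the cylinders $[w]$ with $w\in\mathcal L_T^\Lambda$. Since $\Lambda$ is deterministic and nonanticipating, whether $T$ repairs are completed is determined by the tape prefix consumed up to the $T$th repair. The hypothesis that $\nu$ uses only actions available at the corresponding reference history ensures that every $\nu$-possible prefix of that length is a prefix in the reference source tree. Minimality of the prefixes in $\mathcal L_T^\Lambda$ then makes these cylinders pairwise disjoint, so
\[
 \Prb_\nu\{\tau\ge T\}=\sum_{w\in\mathcal L_T^\Lambda}\nu[w].
\]

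The second step applies \eqref{eq:cylinder-power-domination} to each term, which gives
\[
 \sum_{w\in\mathcal L_T^\Lambda}\nu[w]\le D\sum_{w\in\mathcal L_T^\Lambda}P_{\rm tape}[w]^s=D\,Z_T^\Lambda(s).
\]
The third step is to invoke \eqref{eq:strict-powered-bound}, which follows from Theorem~\ref{thm:powered-commutative} uniformly in the selector and bounds $Z_T^\Lambda(s)$ by $C_s\vartheta_s^T$. Combining the two displays yields \eqref{eq:dependent-source-tail}.

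The only delicate point is the first step, namely justifying that $\nu$-probabilities of the live event decompose over the same family $\mathcal L_T^\Lambda$ that was defined relative to $P_{\rm tape}$. I would argue this by induction on the repair count, showing that the live event after $t$ repairs is measurable with respect to the consumed prefix. A $\nu$-history can leave the reference tree only by taking an action that is unavailable at the reference history, and the hypothesis on $\nu$ rules this out. Any tape under $\nu$ that survives $T$ repairs therefore has a unique minimal prefix in $\mathcal L_T^\Lambda$. With this established, no further probabilistic estimate is needed beyond summing over $w$.
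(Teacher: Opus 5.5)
Your proposal is correct and follows the paper's proof: the cylinders indexed by the prefix-free family $\mathcal L_T^\Lambda$ are disjoint, so $\Prb_\nu\{\tau\ge T\}=\sum_{w\in\mathcal L_T^\Lambda}\nu[w]\le D\,Z_T^\Lambda(s)$, and \eqref{eq:strict-powered-bound} finishes the argument. Your additional justification, that $\nu$-survival decomposes over the reference family because $\nu$ uses only available actions, spells out a step the paper leaves implicit.
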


When $0<s\le1$ and the reference oracle is uniform on the same $M_t$ actions
available at step $t$, the explicit condition
\[
 \max_a\Prb_\nu\{A_t=a\mid\mathcal H_{t-1}\}\le M_t^{-s}
\]
for every feasible positive-probability history implies
\eqref{eq:cylinder-power-domination} with $D=1$ when $\nu$ and
$P_{\rm tape}$ also use the same initial law.  The repair choices may be
correlated across time, but may alter neither the selector nor the valid
action sets.  For a different initial law,
or for $s>1$, the corresponding initial likelihood ratio is absorbed into
$D$ when it is bounded.

\section{Exact overlapping disagreement repair}
\label{sec:disagreement}

The next result is an exact, nonextremal application in which local repairs
overlap and the induced transition depends on the current state.  It also shows that
the complete stopping-time law does not determine the dimension of the tapes
on which the computation never stops.

Let $G=(V,E)$ be a connected graph with $N=|V|\ge2$, and give every edge a
fixed orientation.  A state is $x\in\{0,1\}^V$, and an edge is flawed when
its endpoints disagree.  Let $\mathcal A$ be a finite action alphabet with a
full-support law $\pi$, and write $P=\pi^{\mathbb N}$ for the action-tape law
(the initial assignment is fixed).  Partition the alphabet into four nonempty classes
$\mathcal A_-,\mathcal A_{\rm p},\mathcal A_{\rm r},\mathcal A_+$.  When a
selected oriented edge reads $z\in\{01,10\}$, a symbol in these four classes
replaces $z$, respectively, by $00$, by $z$, by the reversed word
$\bar z$, or by $11$.  Thus several action labels may produce the same state
transition.  The tape consists of independent draws from $\pi$, and any
deterministic nonanticipating rule may choose the next flawed edge, even using
the earlier action labels.

For $s>0$, define the three source-power masses
\begin{equation}\label{eq:disagreement-power-masses}
 D_s=\sum_{a\in\mathcal A_-}\pi(a)^s,
 \qquad
 N_s=\sum_{a\in\mathcal A_{\rm p}\cup\mathcal A_{\rm r}}\pi(a)^s,
 \qquad
 U_s=\sum_{a\in\mathcal A_+}\pi(a)^s.
\end{equation}
This is a four-outcome extension of discordant voting
\citep{cooper-et-al-2018-discordant}.  Conditional on a selected discordant
edge, the oblivious voting update is the boundary case with mass $1/2$ on
each of $00$ and $11$: the selected edge becomes monochromatic.  Here the
selected edge may remain discordant, either unchanged or reversed, and the
edge selector may be any deterministic nonanticipating rule.
\citet{cooper-et-al-2018-discordant} study expected consensus time; the result
below determines the complete stopping-time law and source-power geometry for
the extended rule.

For $P_4$, Hamming weight collapses the fourteen live assignments to three
states:
\begin{equation}\label{eq:p4-hamming-diagram}
 \begin{gathered}
 P_4:\quad \{x\in\{0,1\}^4:x\ \text{is live}\}
 \xrightarrow{\ M(x)=\sum_vx_v\ }\{1,2,3\},\\[-2pt]
 0_{\rm stop}\xleftarrow{D_s}1
 \xrightleftharpoons[D_s]{U_s}2
 \xrightleftharpoons[D_s]{U_s}3
 \xrightarrow{U_s}4_{\rm stop},
 \qquad R_s(k,k)=N_s.
 \end{gathered}
\end{equation}

\begin{theorem}[Exact disagreement repair with colliding actions]
\label{thm:orientation-disagreement}
Start from a fixed live assignment with $k_0$ ones.  For $s>0$, let $R_s$ be
the $(N-1)\times(N-1)$ tridiagonal matrix
\begin{equation}\label{eq:orientation-disagreement-matrix}
 R_s(k,k-1)=D_s,\qquad
 R_s(k,k)=N_s,\qquad
 R_s(k,k+1)=U_s,
\end{equation}
where entries outside $\{1,\ldots,N-1\}$ are omitted.  For every
deterministic nonanticipating selector,
\begin{equation}\label{eq:orientation-disagreement-partition}
 Z_T(s)=e_{k_0}^{\mathsf T}R_s^T\mathbf1.
\end{equation}
In particular, for fixed $N$ and $k_0$, the complete stopping-time law is
independent of $G$ and of the selector.  With $c_N=\cos(\pi/N)$, the exact
source-power rate is
\begin{equation}\label{eq:orientation-disagreement-pressure}
 \varrho_s=\rho(R_s)
 =N_s+2(D_sU_s)^{1/2}c_N.
\end{equation}
The process terminates almost surely with an exponential tail, and
\begin{equation}\label{eq:orientation-disagreement-dimension-root}
 \dim_H^P\mathcal N=\Delta,
 \qquad
 N_\Delta+2(D_\Delta U_\Delta)^{1/2}c_N=1,
\end{equation}
where the equation has a unique solution $\Delta\in(0,1)$.  If $\pi$, the
repair map, and the selector are computable, then
\begin{equation}\label{eq:orientation-disagreement-effective}
 \sup_{\omega\in\mathcal N}\eDimP(\omega)=\Delta,
\end{equation}
and the supremum is attained.

The same $\Delta$ is the exact threshold for global source-power
domination.  If a possibly dependent action source $\nu$ satisfies
\begin{equation}\label{eq:disagreement-power-domination}
 \nu[w]\le D P[w]^s
 \qquad(w\in\mathcal L_T, T\ge0)
\end{equation}
for some $D<\infty$ and $s>\Delta$, then
\begin{equation}\label{eq:disagreement-weak-source-tail}
 \Prb_\nu\{\tau\ge T\}
 \le D e_{k_0}^{\mathsf T}R_s^T\mathbf1
 =O(\varrho_s^T).
\end{equation}
At $s=\Delta$, there is a law $\nu_\Delta$ on the same action tree, supported
on $\mathcal N$, and a constant $D_\Delta<\infty$ for which
$\nu_\Delta[w]\le D_\Delta P[w]^\Delta$ for every prefix $w$; since
$P[w]\le1$, the same law is dominated at every $0<s\le\Delta$.
\end{theorem}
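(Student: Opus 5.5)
The proof rests on one exact fact: in this process every live prefix drifts the Hamming weight by a known amount, no matter which edge the selector picks. Once this is shown, $Z_T(s)$ is an honest matrix power, and every other claim is spectral analysis of $R_s$ plus the mass-distribution/Frostman machinery.

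\emph{Step 1: exact weight dynamics.} The central step is the identity \eqref{eq:orientation-disagreement-partition}. Consider a live assignment, meaning one with $1\le k\le N-1$ ones. Because $G$ is connected, such an assignment has at least one discordant edge, so the selector always has a valid choice. Whichever flawed edge it selects, that edge reads $01$ or $10$, and the action classes act as follows:
\begin{itemize}
\item a symbol in $\mathcal A_-$ lowers the weight by one;
\item a symbol in $\mathcal A_{\rm p}\cup\mathcal A_{\rm r}$ leaves the weight unchanged;
\item a symbol in $\mathcal A_+$ raises the weight by one.
\end{itemize}
Which class the next symbol falls in is therefore determined by the symbol alone, independently of the selected edge and of the past.

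I will prove, by induction on $T$ and conditioning on the current history $\mathfrak h$ at weight $k$, that
\[
\sum_{w}P[w\mid\mathfrak h]^s=e_k^{\mathsf T}R_s^T\mathbf1 .
\]
The sum runs over minimal continuations completing $T$ further repairs. To go from $T$ to $T+1$, split the continuations by their next symbol $a$. This gives $\sum_a\pi(a)^s$ times the inductive value at weight $k+\delta(a)$, where the weight $0$ and weight $N$ terms vanish because those states are terminal. That sum is exactly $e_k^{\mathsf T}R_sR_s^{T}\mathbf1$. The argument is an equality rather than the inequality of Lemma~\ref{lem:exact-prefix-matrix}, because every symbol at a live state completes a repair. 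At $s=1$ this yields the stopping-time law, which is visibly independent of $G$ and of $\Lambda$.

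\emph{Step 2: spectrum of $R_s$.} $R_s$ is a constant-coefficient tridiagonal Toeplitz matrix of size $N-1$, so its eigenvalues are $N_s+2\sqrt{D_sU_s}\cos(j\pi/N)$. Hence $\varrho_s$ is as stated, with a positive Perron eigenvector $\phi_s(k)\propto (D_s/U_s)^{k/2}\sin(k\pi/N)$. Two consequences follow.
\begin{itemize}
\item Both $Z_T(s)\asymp \varrho_s^T$ and the entries $e_{k_0}^{\mathsf T}R_s^T\mathbf1$ are of this order, by irreducibility and aperiodicity, the latter since $N_s>0$.
\item At $s=1$ we have $N_1+D_1+U_1=1$. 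By AM--GM and $c_N<1$, this gives $\varrho_1<1$, so the tail is exponential.
\end{itemize}
The map $s\mapsto\varrho_s$ is continuous and strictly decreasing, because each $\pi(a)^s$ decreases and every alphabet class is nonempty. It tends to $|\mathcal A_{\rm p}\cup\mathcal A_{\rm r}|+2\sqrt{|\mathcal A_-||\mathcal A_+|}c_N\ge 1+2c_N>1$ as $s\to0$, since both $\mathcal A_{\rm p}$ and $\mathcal A_{\rm r}$ are nonempty. This gives a unique root $\Delta\in(0,1)$.

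\emph{Step 3: dimension bounds.}
\begin{itemize}
\item \emph{Upper bound.} For $s>\Delta$ we have $\varrho_s<1$. Covering $\mathcal N$ by $\mathcal L_T$ then gives $\dim_H^P\mathcal N\le s$, exactly as in Corollary~\ref{cor:powered-dimension}, using $\pi_{\max}<1$ so that cylinder diameters shrink. The effective upper bound comes from the coding bound \eqref{eq:powered-complexity-saving}, applied with a computable $s$ slightly above $\Delta$.
\item \emph{Lower bound.} Define $\nu_\Delta$ as the Doob $h$-transform. At weight $k$, symbol $a$ receives probability
\[
\pi(a)^\Delta\,\phi_\Delta(k+\delta(a))/\phi_\Delta(k),
\]
which is a probability law because $R_\Delta\phi_\Delta=\phi_\Delta$. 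It never exits to weight $0$ or $N$, so it is supported on $\mathcal N$. By telescoping,
\[
\nu_\Delta[w]=P[w]^\Delta\,\phi_\Delta(k_T)/\phi_\Delta(k_0)\le D_\Delta P[w]^\Delta,
\]
with $D_\Delta=\max\phi_\Delta/\min\phi_\Delta$. The mass-distribution principle then gives $\dim_H^P\mathcal N\ge\Delta$.
\item \emph{Effective dimension.} For the supremum, I take a $\nu_\Delta$-typical tape. This requires $\Delta$, and hence $\nu_\Delta$, to be computable, which holds because $\Delta$ is the root of a computable strictly monotone function. A $\nu_\Delta$-Martin-L\"of random $\omega$ satisfies $K(\omega_{1:n})\ge -\log_2\nu_\Delta[\omega_{1:n}]-O(1)$. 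Combined with the telescoping identity, this gives $\ge\Delta I_P(\omega_{1:n})-O(1)$. The upper bound above then forces the limsup to equal $\Delta$.
\end{itemize}

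\emph{Step 4: domination threshold.} The weak-source tail is the one-line bound
\[
\Prb_\nu\{\tau\ge T\}=\sum_{w\in\mathcal L_T}\nu[w]\le D\,Z_T(s),
\]
followed by Step 1 and Step 2.

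\emph{Main obstacle.} I expect the only delicate point to be Step 1's claim that the identity is an exact equality for adaptive selectors that read action labels. The care needed is that a minimal live prefix is determined symbol by symbol, so the conditioning tree is exactly the full action tree restricted to live weights. The remaining steps are standard.
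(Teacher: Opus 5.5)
Your proposal is correct and follows the paper's proof essentially step for step: the weight-class recurrence giving $Z_T(s)=e_{k_0}^{\mathsf T}R_s^T\mathbf1$ as an exact equality even for label-aware selectors, the tridiagonal spectrum with eigenvector $(D_s/U_s)^{k/2}\sin(k\pi/N)$, the Doob transform $\nu_\Delta$ for the mass-distribution and Levin--Schnorr lower bounds, and disjointness of live cylinders for the domination tail. The differences are cosmetic. You obtain the effective upper bound by letting computable $s\downarrow\Delta$, whereas the paper uses the bounded powers of $R_\Delta$ at the critical power. Also, your limit bound should read $\lim_{s\downarrow0}\varrho_s\ge 2$ (from $|\mathcal A_{\rm p}\cup\mathcal A_{\rm r}|\ge2$), because $1+2c_N>1$ fails when $N=2$.
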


\paragraph{Proof sketch.}
Every flawed edge contains exactly one vertex of value one.  An action
therefore changes Hamming weight by $-1$, $0$, or $+1$, with total
source-power mass $D_s$, $N_s$, or $U_s$, independently of the selected edge.
Summing prefixes by their current weight gives the recurrence with $R_s$ and
proves \eqref{eq:orientation-disagreement-partition}.  A diagonal similarity
turns $R_s$ into a symmetric tridiagonal matrix with off-diagonal entry
$(D_sU_s)^{1/2}$, yielding \eqref{eq:orientation-disagreement-pressure}.
The finite-state pressure formula and its Perron--Doob law then give the two
dimensions and the sharp source-domination threshold.

The proof, including the critical law attaining the threshold and both
dimensions, is in Appendix~\ref{app:disagreement-proof}.

\begin{corollary}[A common-source same-kernel separation]
\label{cor:disagreement-separation}
For every $\varepsilon>0$, there exist a computable finite source $P$ and two
disagreement-repair maps $H$ and $L$ on $P_4$ with the following properties.
Their ordinary repair matrices agree for every flaw and state,
$K^H_{f,1}=K^L_{f,1}$.  Regardless of which deterministic nonanticipating
selector is used in either process, including one that remembers the action
labels, the stopping law depends only on the initial Hamming weight and is
common to $H$ and $L$.  Nevertheless, for every common live initial assignment
and every such pair of selectors, their nontermination languages, measured in
the same source metric, satisfy
\begin{equation}\label{eq:disagreement-near-maximal-separation}
 \dim_H^P\mathcal N_L=\Delta_L<\varepsilon,
 \qquad
 \dim_H^P\mathcal N_H=\Delta_H>1-\varepsilon.
\end{equation}
For computable selectors, $\Delta_L$ and $\Delta_H$ are also the attained
maximal effective strong dimensions.

Moreover, after fixing the initial assignment and the selectors, choose any
$s_0$ with $\Delta_L<s_0<\Delta_H$.  There exists a single tape source $\nu_H$,
satisfying $\nu_H[w]\le D P[w]^{s_0}$ for every finite action word $w$, such
that $H$ never terminates under $\nu_H$, whereas $L$, driven by the same
$\nu_H$, has an exponential stopping tail.  Thus the two rules have
opposite sharp behavior under the same imperfect source despite having the
same nominal transition matrices.
\end{corollary}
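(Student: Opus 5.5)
The approach is to build everything from Theorem~\ref{thm:orientation-disagreement} on $P_4$, so $N=4$ and $c_4=\cos(\pi/4)=2^{-1/2}$. The key observation is that the ordinary kernel $K_{f,1}$ sees only the class masses $D_1=\pi(\mathcal A_-)$, $\pi(\mathcal A_{\rm p})$, $\pi(\mathcal A_{\rm r})$ and $U_1=\pi(\mathcal A_+)$. The powered masses depend on how each class is split into labels. On a single oriented edge reading $z$, the class $\mathcal A_{\rm p}$ leaves $z$ fixed while $\mathcal A_{\rm r}$ reverses it, and these are different state transitions. So I will keep all four class masses equal for $H$ and $L$ and change only the internal splitting.

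Concretely, take one alphabet $\mathcal A$ with law $\pi$. Let it consist of one heavy symbol $h$ of mass $1-3\delta$ and three groups of light symbols, each group of total mass $\delta$ with $m$ equal atoms. The two maps are defined as follows.
\begin{itemize}
\item In $H$: the symbol $h$ together with some split of the remaining mass forms the ``no weight change'' classes $\mathcal A_{\rm p}\cup\mathcal A_{\rm r}$, and each of $\mathcal A_-$, $\mathcal A_+$ is one light group.
\item In $L$: the roles are permuted so that the same class masses arise.
\end{itemize}

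This needs care. Equal ordinary kernels force $\pi(\mathcal A_-^H)=\pi(\mathcal A_-^L)$, $\pi(\mathcal A_+^H)=\pi(\mathcal A_+^L)$ and equal $\mathcal A_{\rm p},\mathcal A_{\rm r}$ masses, but the label sets may differ. I will therefore use two disjoint light blocks of equal mass. One block $S$ has $m$ atoms and the other block $B$ is a single atom, of the same mass $\beta$. In $H$, assign $S$ to $\mathcal A_-\cup\mathcal A_+$ (split evenly) and $B$ to $\mathcal A_{\rm p}$. In $L$, swap them, with the heavy atom in $\mathcal A_{\rm r}$ in both maps. Then $K^H_{f,1}=K^L_{f,1}$. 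By Theorem~\ref{thm:orientation-disagreement}, the stopping law $e_{k_0}^{\mathsf T}R_1^T\mathbf 1$ is common and selector-independent, since $R_1$ depends only on $D_1,N_1,U_1$.

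For the dimensions, I study the root equation $N_s+2(D_sU_s)^{1/2}c_4=1$ in each case.
\begin{itemize}
\item For $H$: $D_s=U_s\approx \tfrac m2(\beta/m)^s$, which is huge for $s<1$ as $m\to\infty$. Hence $\Delta_H\to1$ as $m\to\infty$ with $\beta$ fixed.
\item For $L$: $D_s,U_s$ are single atoms $(\beta/2)^s$, and $N_s\ge(1-2\beta)^s+m(\beta/m)^s$. This $N_s$ forces exponential survival... but that pushes $\Delta_L$ up, not down.
\end{itemize}
So I will instead make $L$'s change-classes small, letting $\beta\to0$, while keeping $H$'s many atoms. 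The resolution is to choose parameters in two stages. With $\beta$ small, $L$'s root satisfies $(1-2\beta)^s+m^{1-s}\beta^s+\sqrt2(\beta/2)^s\approx1$. I must also keep the $m$ atoms out of $L$'s neutral class, so the swap must put $B$'s single atom in $L$'s neutral class and $S$'s atoms in $H$'s change classes. But then equal class masses force $L$'s change classes to be single atoms of mass $\beta/2$ and $H$'s neutral class to contain the single atom $B$. With the heavy atom of mass $1-2\beta$ neutral in both, $L$ has $N_s=(1-2\beta)^s+\beta^s$ and $D_s=U_s=(\beta/2)^s$. Its root $\Delta_L$ tends to $0$ as $\beta\to0$, because $(1-2\beta)^s+\beta^s+\sqrt2(\beta/2)^s>1$ for any fixed $s>0$ once $\beta$ is small. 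Then, for that $\beta$, taking $m$ large drives $\Delta_H>1-\varepsilon$. The effective-dimension clause follows directly from the theorem for computable selectors.

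For the final claim, fix $s_0\in(\Delta_L,\Delta_H)$. Let $\nu_H$ be the critical law $\nu_{\Delta_H}$ from Theorem~\ref{thm:orientation-disagreement} for $H$, so that $\nu_H[w]\le D_{\Delta_H}P[w]^{\Delta_H}\le D P[w]^{s_0}$, supported on $\mathcal N_H$. Hence $H$ never terminates. One point needs checking: that law was constructed on the live tree of $H$, and I must extend it to all finite words, e.g.\ by defining it as a law on infinite tapes, where the bound on non-live words follows from the prefix bound. Applying \eqref{eq:disagreement-weak-source-tail} to $L$ with $s_0>\Delta_L$ gives an exponential tail.

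The main obstacle is the parameter bookkeeping. The difficulty is arranging that the class masses match exactly while the atom counts make $L$'s root small and $H$'s root large simultaneously, including verifying monotonicity and uniqueness of roots under the limits.
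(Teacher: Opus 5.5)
\textbf{There is a genuine gap: the construction of $H$ and $L$, which is the heart of the corollary, fails.}

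\emph{First, your final pair is not a common-source pair.} Your alphabet is $\{h,B\}\cup S$, with masses $1-2\beta$, $\beta$, and $m$ atoms of mass $\beta/m$. You impose equal class masses $\beta/2,\beta,1-2\beta,\beta/2$ on $\mathcal A_-,\mathcal A_{\rm p},\mathcal A_{\rm r},\mathcal A_+$. For $\beta<1/3$ this forces the partition:
\begin{itemize}
\item $h$ fits only in $\mathcal A_{\rm r}$;
\item $B$ then fits only in $\mathcal A_{\rm p}$;
\item so $S$ must fill $\mathcal A_-\cup\mathcal A_+$ in both maps.
\end{itemize}
Hence $H$ and $L$ have identical $D_s,N_s,U_s$, and there is no separation at all. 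The $L$ you actually analyze has change classes made of single atoms of mass $\beta/2$. Those atoms are not in the common alphabet; adjoining them would make the total mass $1+\beta$. It also leaves the $m$ atoms of $S$ unassigned under $L$. Your earlier consistent variant, with $B$ replaced by two atoms of mass $\beta/2$, puts $S$ into $L$'s neutral class. As you noticed, that inflates $\Delta_L$.

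\emph{Second, even granting your $L$, the limit is inverted.} You show $\varrho_L(s)=(1-2\beta)^s+\beta^s+\sqrt2(\beta/2)^s>1$ for fixed $s\in(0,1)$ and small $\beta$. Since $\varrho_L$ is strictly decreasing with $\varrho_L(\Delta_L)=1$, this gives $\Delta_L>s$, hence $\Delta_L\to1$, not $0$. A neutral atom of mass near one is exactly what makes nontermination large. Every tape over $\{h,B\}$ survives, and that subshift alone has $P$-dimension equal to the root of $(1-2\beta)^s+\beta^s=1$, which tends to one.

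\emph{The missing idea is where the collision goes.} It must sit in a neutral class for $H$ but in a weight-changing class for $L$. In $L$, the opposite weight-changing class must have vanishing mass, and $L$'s neutral atoms must stay bounded away from mass one. This is the paper's table:
\begin{itemize}
\item $x_1,\dots,x_m$ (total mass $q=1/2-t$) give $z$ under $H$ and $00$ under $L$;
\item the single atom $y$ of mass $q$ does the reverse;
\item $u$ and $v$, each of mass $t$, give $11$ and $\bar z$ in both maps.
\end{itemize}
Then $\varrho_H(s)\to m^{1-s}2^{-s}$, whose root is $\log m/\log(2m)$. The collision enters $\varrho_L$ only through $\sqrt2\,m^{(1-s)/2}(qt)^{s/2}\to0$, so $\varrho_L(s)\to2^{-s}<1$ and $\Delta_L\to0$. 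Take $m$ large first, then $t$ small.

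Your remaining steps agree with the paper: the common stopping law from $R_1$, effective attainment from Theorem~\ref{thm:orientation-disagreement}, and feeding the critical law $\nu_{\Delta_H}$ to $L$ via the global-domination tail. Your worry about extending $\nu_H$ is vacuous, since finite words outside its support have mass zero.
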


Here is the construction.  Choose an integer $m\ge2$, a number
$0<t<1/2$, and put $q=1/2-t$.  The common alphabet is
$\{x_1,\ldots,x_m,y,u,v\}$, with
\begin{equation}\label{eq:common-source-probabilities}
 \pi(x_i)=q/m,\qquad \pi(y)=q,\qquad \pi(u)=\pi(v)=t.
\end{equation}
For a selected edge reading $z$, use the two maps
\begin{equation}\label{eq:common-source-action-table}
\begin{array}{c|cccc}
 &x_1,\ldots,x_m&y&u&v\\ \hline
 H&z&00&11&\bar z\\
 L&00&z&11&\bar z.
\end{array}
\end{equation}
Both maps therefore assign ordinary masses $(q,q,t,t)$ to
$(00,z,11,\bar z)$, so their $K_{f,1}$ matrices agree.  The labels
$x_1,\ldots,x_m$, however, collide at one transition.  The two exact
source-power rates on $P_4$ are
\begin{align}
 \varrho_H(s)
 &=m^{1-s}q^s+t^s+\sqrt2\,(qt)^{s/2},
 \label{eq:common-source-high-rate}\\
 \varrho_L(s)
 &=q^s+t^s+\sqrt2\,m^{(1-s)/2}(qt)^{s/2}.
 \label{eq:common-source-low-rate}
\end{align}
As $t\downarrow0$, their roots converge, respectively, to
\begin{equation}\label{eq:common-source-root-limits}
 \frac{\log m}{\log(2m)}
 \qquad\text{and}\qquad 0.
\end{equation}
Taking $m$ large and then $t$ small proves
\eqref{eq:disagreement-near-maximal-separation}.  The languages themselves
differ: from Hamming weight one, $x_1^\infty$ survives under $H$ and stops
under $L$, while $y^\infty$ does the reverse.  Appendix
\ref{app:disagreement-proof} gives the complete proof.

The role of collisions has a simple information-theoretic form.  If
$r=K_{f,1}(\sigma,\tau)>0$ and $\chi$ is the conditional law of the action
label among those producing $\sigma\to\tau$, then
\begin{equation}\label{eq:collision-renyi-identity}
 K_{f,s}(\sigma,\tau)
 =r^s\sum_a\chi(a)^s
 =r^s2^{(1-s)H_s(\chi)}.
\end{equation}
Here $H_s$ is base-two R\'enyi entropy, with its continuous interpretation at
$s=1$.
Thus $K_{f,1}$ records the total transition mass, whereas the source powers
also record the R\'enyi entropy hidden behind that transition.  Under action
recoverability every fiber is a singleton and this extra term vanishes.  In
\eqref{eq:common-source-action-table}, an $m$-way collision moves between the
neutral and downward transitions while all power-one data remain fixed.

Adjacent edge flaws on $P_4$ overlap and can occur simultaneously, whereas
the two end-edge repairs have disjoint scopes and commute at every source
power.  Thus commutation is nonvacuous, but the example is genuinely
nonextremal and overlapping.  The example shows that even the entire ordinary
repair kernel can lose essentially all information about weak-source
robustness and nontermination dimension.

\section{Weak randomness, localization, and sharpness}
\label{sec:weak-randomness}

For a variable-model event $A$, let $v(A)$ be its resampled coordinates and
let $\pi_{v(A)}$ be their product law.  Define its local source-power mass by
\begin{equation}\label{eq:local-renyi-event-mass}
 r_A(s)=\sum_{b\in A}\pi_{v(A)}(b)^s.
\end{equation}
Here $A$ is identified with its local configurations on $v(A)$.  Taking
$v_s(\sigma)$ proportional to $\pi(\sigma)^s$ in
\eqref{eq:powered-charge} gives the exact local charge
\begin{equation}\label{eq:variable-powered-charge}
 \lambda_A(s)=r_A(s).
\end{equation}
Thus Theorem~\ref{thm:powered-commutative} can be checked from event scopes,
local source masses, and the causality graph, without enumerating all
assignments.  Replacing the independent-set sum in
\eqref{eq:powered-cluster-theta} by
$\prod_{B\in\Gamma(A)}(1+\eta_B)$ gives a simpler polynomial-time
sufficient test, though generally a weaker one.  The independent-set refinement
is the standard cluster-expansion form
\citep{bissacot-et-al-2011,pegden-2014}; the use made here of the full
source-power family is separate.

For a finite graph $D$, let
\begin{equation}\label{eq:trace-growth-kappa}
 \kappa_D
 =\limsup_{T\to\infty}
 \bigl|\{h\in\mathcal M(D):|h|=T\}\bigr|^{1/T}.
\end{equation}
Equivalently, $\kappa_D=R_D^{-1}$, where $R_D$ is the first positive zero
of $z\mapsto P_D(z\mathbf1)$.

\begin{theorem}[Graph-specific min-entropy threshold for $k$-SAT]
\label{thm:ksat-graph-specific}
Let $\Phi$ be a $k$-CNF with clause-overlap graph $D$, start from a fixed
assignment, and use any deterministic nonanticipating selector.  Let $P$ be
the law of independent uniform $k$-bit redraws.  If a possibly
history-dependent redraw law $\nu$ satisfies, for some $D_0<\infty$ and
$s>0$,
\begin{equation}\label{eq:ksat-global-power-domination}
 \nu[w]\le D_0P[w]^s
\end{equation}
for every finite live redraw word $w$, then, for every $\xi>\kappa_D$,
\begin{equation}\label{eq:ksat-graph-specific-tail}
 \Prb_\nu\{\tau\ge T\}
 \le
 \frac{D_0\,2^n}{P_D(\xi^{-1}\mathbf1)}
 \bigl(\xi\,2^{-ks}\bigr)^T.
\end{equation}
In particular, the tail is exponential whenever
\begin{equation}\label{eq:ksat-global-threshold}
 s>\frac{\log_2\kappa_D}{k}.
\end{equation}

If $h>0$ and each redraw has conditional min-entropy at least $h$, then
\eqref{eq:ksat-graph-specific-tail} holds with $D_0=1$ and $ks=h$.
Consequently $h>\log_2\kappa_D$ implies exponential termination.  For the
uniform reference law,
\begin{equation}\label{eq:ksat-graph-specific-dimension}
 \dim_H^P\mathcal N
 \le
 \min\left\{1,\frac{\log_2\kappa_D}{k}\right\}.
\end{equation}
If the selector is computable, the same bound holds for
$\sup_{\omega\in\mathcal N}\eDimP(\omega)$.

If $D$ has maximum degree at most $d\ge2$, then
\begin{equation}\label{eq:max-degree-trace-growth}
 \kappa_D\le c_d,
 \qquad
 c_d=\frac{d^d}{(d-1)^{d-1}}.
\end{equation}
Thus $h>\log_2c_d$ is a selector-uniform sufficient condition, and the
right side of \eqref{eq:ksat-graph-specific-dimension} is at most
$\min\{1,\log_2(c_d)/k\}$.
\end{theorem}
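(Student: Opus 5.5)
The plan is to cast Moser--Tardos $k$-SAT resampling as a variable-model instance of Theorem~\ref{thm:powered-commutative} and then transfer to $\nu$ by the domination argument of Corollary~\ref{cor:powered-dependent-source}. The steps are as follows.

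\emph{Setting up the instance.} States are assignments in $\{0,1\}^n$. The flaw $f_C$ is present when clause $C$ is violated. Addressing it redraws the $k$ variables of $C$ with $2^k$ equiprobable actions, and distinct redraws give distinct states.

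\emph{Commutation and charges.} Nonadjacent clauses have disjoint scopes, so their redraws commute via a product-preserving diamond. Lemma~\ref{lem:all-power-commutation} then gives \eqref{eq:powered-matrix-commutation} for every $s>0$. For the charges I will take $v_s$ uniform, i.e.\ proportional to $\pi(\sigma)^s$ with $\pi$ uniform. By \eqref{eq:variable-powered-charge}, $\lambda_C(s)=r_C(s)=2^{-ks}$, since the violated event of a clause is the single local configuration of probability $2^{-k}$.

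\emph{Initial ratio.} The initial state is fixed, so $b_s=e_{\sigma_0}$ and $v_s=2^{-n}\mathbf1$. Hence $\gamma_s=2^n$.

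\emph{Trace bound.} Theorem~\ref{thm:powered-commutative} gives $Z_T^\Lambda(s)\le 2^n a_T(s)=2^n2^{-ksT}|\{h:|h|=T\}|$. To bound the trace count, fix $\xi>\kappa_D$. Then $\xi^{-1}\mathbf1$ lies in the open Shearer region, because $P_D(z\mathbf1)>0$ on $[0,R_D)$ and for $\mathbf y\le\xi^{-1}\mathbf1$ the multivariate series $1/P_D(\mathbf y)$ has nonnegative coefficients dominated by the diagonal one. Comparing coefficients in \eqref{eq:powered-trace-series} with $z=\xi^{-1}$ gives $|\{h:|h|=T\}|\le\xi^T/P_D(\xi^{-1}\mathbf1)$. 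Therefore
\[
Z_T(s)\le\frac{2^n}{P_D(\xi^{-1}\mathbf1)}\bigl(\xi2^{-ks}\bigr)^T .
\]

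\emph{Transfer to $\nu$.} Summing the domination hypothesis $\nu[w]\le D_0P[w]^s$ over the prefix-free live family $\mathcal L_T$ gives $\Prb_\nu\{\tau\ge T\}\le D_0Z_T(s)$, which is \eqref{eq:ksat-graph-specific-tail}. If $s>\log_2\kappa_D/k$, pick $\xi\in(\kappa_D,2^{ks})$ to make the rate below one.

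\emph{Min-entropy case.} Conditional min-entropy $h$ means each redraw outcome has conditional probability at most $2^{-h}=(2^{-k})^{h/k}$. Multiplying along a word gives $\nu[w]\le P[w]^{s}$ with $ks=h$ and $D_0=1$. This is the explicit condition after Corollary~\ref{cor:powered-dependent-source}, valid for $h\le k$, which is automatic.

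\emph{Dimension bounds.} For every $s>\min\{1,\log_2\kappa_D/k\}$ with $s\le1$, the strict bound \eqref{eq:strict-powered-bound} holds. Here $\rho_{\max}=2^{-k}<1$, so Corollary~\ref{cor:powered-dimension} gives $\dim_H^P\mathcal N\le s$ and the effective-dimension bound for computable selectors. The bound $\le1$ is trivial. Letting $s$ decrease to the threshold gives \eqref{eq:ksat-graph-specific-dimension}.

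\emph{Maximum degree.} It remains to show that degree at most $d$ implies $\kappa_D\le c_d$. The plan is to use the cluster test \eqref{eq:powered-cluster-theta} with $\lambda=z$ and $\eta_f=\eta$. The neighbourhood $\Gamma(f)$ consists of $f$ and at most $d$ neighbours, and an independent subset of it is either $\{f\}$ or a subset of the neighbours. This gives
\[
\theta\le\frac{z}{\eta}\bigl(\eta+(1+\eta)^d\bigr).
\]
Since $\eta$ alone dominates $1$ at the relevant scale, the cruder product form $\prod_{B\in\Gamma(f)}(1+\eta_B)$ is insufficient and loses a factor. The Shearer-style bound instead comes from the tree-like comparison $\kappa_D\le\kappa_{\text{tree}}$, the $d$-regular tree's threshold $(d-1)^{d-1}/d^d$. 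Concretely, I would verify that $z<(d-1)^{d-1}/d^d$ lies in the Shearer region via the standard Shearer/Lov\'asz choice $x=1/d$ in $z\le x(1-x)^{d-1}$. For this I would use the Kolmogorov/Shearer criterion with the self-loop neighbourhood $\Gamma(f)$ of size $d+1$, counting only the $d$ true neighbours as external.

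\emph{Main obstacle.} The delicate step is exactly this last one: getting the constant $c_d$ rather than $e d$ or $(d+1)^{d+1}/d^d$. I will first try Shearer's theorem that max degree $d$ puts $z\mathbf1$ in the Shearer region for $z<(d-1)^{d-1}/d^d$, derived by the tree-recursion fixed point. If that fails, I will fall back on bounding the trace count directly by the growth of heaps on the $d$-regular tree.
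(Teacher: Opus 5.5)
Your proposal is correct and follows the paper's proof essentially step for step: uniform $v_s$ gives $\lambda_C(s)=2^{-ks}$ and $\gamma_s=2^n$, Theorem~\ref{thm:powered-commutative} gives the trace bound, and the Cartier--Foata series is compared coefficientwise at $z=\xi^{-1}$. The remaining steps also match the paper: the domination hypothesis is summed over the disjoint live cylinders, the chain rule handles the min-entropy case, Corollary~\ref{cor:powered-dimension} with $\xi\in(\kappa_D,2^{ks})$ gives the dimension bounds, and Shearer's symmetric theorem is cited for $\kappa_D\le c_d$, exactly as the paper does. Your alternative \emph{concrete} verification of that last step is misphrased: an LLL condition that counts all $d$ neighbours yields $z\le x(1-x)^{d}$, and the exponent $d-1$ appears only in Shearer's recursion, where every vertex after the first has already lost one neighbour, so citing Shearer is the right move.
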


The one-step min-entropy hypothesis implies
\eqref{eq:ksat-global-power-domination} by the chain rule.  The converse can
fail: a globally dominated source may move randomness between different
times.  The tree family below makes the global threshold and the dimension
bound sharp, but does not by itself prove sharpness of the one-step
conditional threshold $\log_2c_d$.

For an infinite run $\omega$, let
\begin{equation}\label{eq:recurrent-flaw-core}
 \operatorname{Rec}(\omega)
 =\{A:\text{$A$ is repaired infinitely often on $\omega$}\}.
\end{equation}
This set is nonempty because there are finitely many clauses.

\begin{corollary}[Pointwise localization to recurrent clauses]
\label{cor:recurrent-core-dimension}
For a computable $k$-SAT repair process and every infinite run $\omega$,
\begin{equation}\label{eq:recurrent-core-ksat}
 \eDimP(\omega)
 \le
 \min\left\{1,
 \frac{\log_2\kappa_{D[\operatorname{Rec}(\omega)]}}{k}
 \right\}.
\end{equation}
For each nonempty $R\subseteq F$, the same right side with $D[R]$ bounds
the Hausdorff dimension of the tapes satisfying
$\operatorname{Rec}(\omega)=R$.
\end{corollary}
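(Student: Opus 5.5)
The plan is to reduce to Theorem~\ref{thm:ksat-graph-specific} applied to a \emph{tail} of the run, during which only clauses in $R=\operatorname{Rec}(\omega)$ are repaired.

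\textbf{Decomposition.} Fix a nonempty $R\subseteq F$. Let $\mathcal N_R$ be the set of infinite runs with $\operatorname{Rec}(\omega)=R$. Every such $\omega$ has a last time $t_0$ after which only clauses of $R$ are repaired. The set $\mathcal N_R$ is therefore contained in the countable union over finite live prefixes $u$, of those tapes which extend $u$ and for which every later repair lies in $R$ and the run never stops. Hausdorff dimension is countably stable, and prefixing by $u$ rescales $P$-cylinder diameters by the fixed factor $P[u]$. Hence it suffices to bound, uniformly in $u$, the dimension of the continuations in which only $R$-clauses are repaired forever.

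\textbf{Restricted process.} Fix $u$, with resulting assignment $\sigma_u$. Define a restricted repair process that starts at $\sigma_u$ and whose flaw set is $R$. It uses the selector $\Lambda$ conditioned on the history $u$, and it is killed as soon as $\Lambda$ would select a clause outside $R$ or all flaws are absent. This process is a deterministic nonanticipating selector on the flaw family $R$, with the same uniform $k$-bit redraws. Its powered kernels satisfy $K_{A,s}=2^{k(1-s)}K_{A,1}$ by Lemma~\ref{lem:all-power-commutation}. Moreover, $K_{A,1}$ and $K_{B,1}$ commute for disjoint clauses.

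Killing only removes mass. The exact-prefix bound of Lemma~\ref{lem:exact-prefix-matrix}, and hence Theorem~\ref{thm:powered-commutative}, applies to the live prefixes of the killed process with dependency graph $D[R]$. To justify this, I would check that the induction in the appendix tolerates a selector that may choose a flaw outside $R$: such continuations simply contribute zero. With $\lambda_A(s)=2^{-ks}$ this gives
\[
Z^{u}_T(s)\le \frac{2^n}{P_{D[R]}(\xi^{-1}\mathbf1)}\bigl(\xi2^{-ks}\bigr)^T \qquad\text{for every }\xi>\kappa_{D[R]}.
\]
This is exactly \eqref{eq:ksat-graph-specific-tail} with $D_0=1$ on $D[R]$.

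\textbf{Dimension and effective bounds.} Take any $s>\log_2\kappa_{D[R]}/k$ and choose $\xi$ so that $\xi2^{-ks}<1$. Corollary~\ref{cor:powered-dimension} then gives that the continuation set has dimension at most $s$, since $\rho_{\max}=2^{-k}<1$. Letting $s$ decrease to the threshold, and noting the trivial bound $1$, yields the Hausdorff statement.

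For the pointwise claim, given $\omega$, set $R=\operatorname{Rec}(\omega)$ and let $u=\omega_{1:t_0}$. The instance consisting of $u$, $R$, and rational choices of $\xi$ and $s$ is a finite computable datum. Corollary~\ref{cor:powered-dimension}'s coding bound \eqref{eq:powered-complexity-saving} therefore applies to every live prefix of the restricted process, and so to the tail $\omega_{t_0+1:n}$. The additional $K(u)+O(1)$ is constant in $n$, while $I_P(\omega_{1:n})\to\infty$. Taking the limsup gives $\eDimP(\omega)\le s$ for every rational $s$ above the threshold.

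\textbf{Main obstacle.} The delicate step is applying Theorem~\ref{thm:powered-commutative} to the killed, history-conditioned selector on the subgraph. Two points need care. First, potential causality only within $R$ is not needed, since the theorem uses powered commutation alone. Second, and more important, live prefixes are cut at the moment a non-$R$ clause is selected, so the full-prefix induction has no omission case. I would verify this by checking that the induction in Lemma~\ref{lem:exact-prefix-matrix} only sums over continuations whose next $T$ labels form $H$. Those labels then lie in $R$ automatically, and the charges are computed with $v_s$ uniform on assignments.
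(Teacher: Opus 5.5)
Your proposal is correct and follows essentially the same route as the paper. Both decompose countably over finite prefixes $u$ after which only clauses of $R$ are repaired, apply the trace bound of Theorem~\ref{thm:ksat-graph-specific} on $D[R]$ to the suffix process, and use prefix-free coding with the fixed prefix and $R$ absorbed into an additive constant. The one technical difference is how the suffix process handles a choice outside $R$. You kill it, which needs the one-line observation that such continuations contribute zero in the induction of Lemma~\ref{lem:exact-prefix-matrix}. The paper instead lets the reduced selector fall back to the first violated clause of $R$ in a fixed order, so that Theorem~\ref{thm:powered-commutative} applies verbatim to a genuine selector on the $R$-instance. Both versions bound the same set of infinite continuations.
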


This is the precise localization statement behind the motivation in
Section~\ref{sec:localization-motivation}: clauses repaired only finitely
often affect a finite prefix but not the asymptotic complexity of that tape.

For $d\ge2$, define
\begin{equation}\label{eq:ksat-entropy-constant}
 A_d=1+\frac{d^d}{(d-1)^{d-1}}.
\end{equation}
The graph-specific theorem is sharper but involves $P_D$.  The following
local relaxation has explicit constants and, relative to the worst-case
max-degree bound $c_d$, loses at most
$\log_2(A_d/c_d)\le\log_2(5/4)$ bits per repair.

\begin{theorem}[An explicit max-degree tail for $k$-SAT]
\label{thm:ksat-entropy-threshold}
Let $\Phi$ be a $k$-CNF on $n$ variables with $m$ clauses, each containing
$k$ distinct variables and overlapping at most $d\ge2$ other clauses.  Start
from an arbitrary assignment and use any deterministic nonanticipating rule
to select a violated clause.  Suppose that, at every feasible history, the
next $k$-bit redraw $B_t$ has worst-case conditional min-entropy at least
$h$, in the pointwise sense
\begin{equation}\label{eq:ksat-pointwise-min-entropy}
 \max_b\Prb\{B_t=b\mid\mathcal H_{t-1}\}\le2^{-h}.
\end{equation}
If $\varepsilon=h-\log_2A_d>0$, then
\begin{equation}\label{eq:weak-ksat-tail}
 \Prb\{\tau\ge T\}
 \le2^n\left(\frac d{d-1}\right)^m2^{-\varepsilon T}.
\end{equation}
For independent uniform redraws, this becomes
\begin{equation}\label{eq:uniform-ksat-tail}
 \Prb\{\tau\ge T\}
 \le2^n\left(\frac d{d-1}\right)^m
       \left(\frac{A_d}{2^k}\right)^T.
\end{equation}
Writing $P$ for the uniform independent repair-tape law,
\begin{equation}\label{eq:ksat-dimension-bound}
 \dim_H^P\mathcal N
 \le\min\left\{1,\frac{\log_2A_d}{k}\right\}.
\end{equation}
If the selector is computable, the same upper bound holds for
$\sup_{\omega\in\mathcal N}\eDimP(\omega)$.  The tail and Hausdorff-dimension
conclusions are uniform over every deterministic nonanticipating selector
and every history-dependent redraw law satisfying
\eqref{eq:ksat-pointwise-min-entropy}.
\end{theorem}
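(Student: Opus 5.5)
We prove Theorem~\ref{thm:ksat-entropy-threshold} by applying the cluster form \eqref{eq:powered-selector-bound} of Theorem~\ref{thm:powered-commutative} at $s=1$ with uniform weights $\eta_A=\eta$. Each clause has one violating local configuration, so by \eqref{eq:variable-powered-charge} its charge under the uniform reference is $\lambda_A(1)=2^{-k}$. The reference kernels commute because each enabled clause has $M_A=2^k$ equiprobable actions. Nonadjacent clauses have disjoint scopes, so the diamond property of Lemma~\ref{lem:all-power-commutation} holds trivially.

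For the initial term, the start is a fixed assignment $\sigma_0$. Take $v_1$ uniform on $\{0,1\}^n$; then $\gamma_1=2^n$, and the charge is exactly $2^{-k}$. Next we relax the independent-set sum. Each inclusive neighborhood contains $A$ and at most $d$ others. We bound
\[
\sum_{I\in\operatorname{Ind}(\Gamma(A))}\eta(I)\le \eta+(1+\eta)^d,
\]
since an independent set either contains $A$ alone or lies among the neighbors, and the latter sum is at most $(1+\eta)^d$. Then
\[
\theta=2^{-k}\bigl(1+(1+\eta)^d/\eta\bigr).
\]
The choice $\eta=1/(d-1)$ minimizes $(1+\eta)^d/\eta$, giving $d^d/(d-1)^{d-1}$, so $\theta=A_d2^{-k}$. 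The prefactor satisfies $\sum_{R\in\operatorname{Ind}(F)}\eta(R)\le(1+\eta)^m=(d/(d-1))^m$. This proves \eqref{eq:uniform-ksat-tail} for the uniform law and every deterministic nonanticipating selector.

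For the weak source, we transfer via Corollary~\ref{cor:powered-dependent-source}, but rather than fixing $s=h/k$ we rerun the argument with a power $s$. Set $s=h/k\le1$. Pointwise min-entropy \eqref{eq:ksat-pointwise-min-entropy} and the chain rule give $\nu[w]\le 2^{-hT}=P[w]^{s}$ for a word of $T$ redraws, since $P[w]=2^{-kT}$ and the initial state is common. So $D=1$ (this is the remark after Corollary~\ref{cor:powered-dependent-source}). At power $s$, Lemma~\ref{lem:all-power-commutation} gives $K_{A,s}=2^{k(1-s)}K_{A,1}$ and hence $\lambda_A(s)=2^{k(1-s)}2^{-k}\cdot$(uniform-$v$ column sum). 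Here care is needed: with $v_s$ uniform, the column sum of $K_{A,1}$ against uniform $v$ equals $2^{-k}\cdot 2^k\cdot 2^{-k}$. More cleanly, \eqref{eq:variable-powered-charge} gives $r_A(s)=2^{-ks}=2^{-h}$ directly. We take $\gamma_s\le 2^n$: with $b_s=e_{\sigma_0}$ and $v_s$ uniform, $\gamma_s=2^n$. The same cluster computation yields $\theta_s=2^{-h}A_d=2^{-\varepsilon}$ and hence \eqref{eq:weak-ksat-tail}.

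The main point to check is that Corollary~\ref{cor:powered-dependent-source} applies with history-dependent laws that remain within the valid action sets. This holds because every $k$-bit string is a valid redraw.

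Finally, the dimension bound \eqref{eq:ksat-dimension-bound} follows from Corollary~\ref{cor:powered-dimension} at any $s>\log_2A_d/k$ with $s\le1$. For such $s$, $\theta_s=A_d2^{-ks}<1$ and $\rho_{\max}=2^{-k}<1$. Letting $s\downarrow\log_2A_d/k$ gives the bound; the cap at $1$ is trivial. For computable selectors, $\widehat C_s=2^n(d/(d-1))^m$ and $\widehat\vartheta_s$ rational slightly above $\vartheta_s$ are computable. Hence the effective-dimension statement of Corollary~\ref{cor:powered-dimension} gives $\eDimP(\omega)\le s$ for each such $s$, and taking the infimum over rational $s$ completes the proof. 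Uniformity over selectors and redraw laws is inherited because all constants are selector-free.
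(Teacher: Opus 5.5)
Your proposal is correct and matches the paper's proof step for step. Both use the cluster bound with $\eta=1/(d-1)$, $\gamma_s=2^n$, the neighborhood estimate $\eta+(1+\eta)^d$ and the prefactor $(d/(d-1))^m$; both transfer to the weak source at $s=h/k$ with $D=1$ via Corollary~\ref{cor:powered-dependent-source}; and both take the dimension bounds from Corollary~\ref{cor:powered-dimension}. The blemishes are cosmetic: $\theta_s$ should carry $\le$ rather than $=$; the fixed-arity remark alone does not give commutation, though your disjoint-scope diamond argument does; and the garbled column-sum aside can be replaced by $\lambda_A(s)=r_A(s)=2^{-ks}$.
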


\begin{corollary}[Fresh fair bits]
\label{cor:few-bit-ksat}
Before repair $t$, let the past determine an injective map
\[
 E_t:\{0,1\}^r\longrightarrow\{0,1\}^k,
 \qquad 1\le r\le k,
\]
then read $r$ fresh independent fair bits $U_t$ and redraw the selected
clause to $E_t(U_t)$.  If $\varepsilon_r=r-\log_2A_d>0$, then
\eqref{eq:weak-ksat-tail} holds with $\varepsilon=\varepsilon_r$, and
\begin{equation}\label{eq:few-bit-expectation}
 \E\tau
 \le
 \left\lceil
  \frac{n+m\log_2(d/(d-1))}{\varepsilon_r}
 \right\rceil
 +\frac1{2^{\varepsilon_r}-1}.
\end{equation}
The expected number of fair bits is $r\E\tau$.  In particular, if
$A_d\le2^{k-1}$, taking
$r=\lceil\log_2A_d+1\rceil$ gives
\begin{equation}\label{eq:few-bit-asymptotic}
 \E[\textup{fair bits}]
 =O\bigl((n+m/d)\log(d+1)\bigr).
\end{equation}
The redraw map may change adaptively with the entire past, but must be fixed
before $U_t$ is read.
\end{corollary}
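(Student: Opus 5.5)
The plan is to reduce Corollary~\ref{cor:few-bit-ksat} to Theorem~\ref{thm:ksat-entropy-threshold} and then integrate the tail. The first step checks the pointwise min-entropy hypothesis \eqref{eq:ksat-pointwise-min-entropy} with $h=r$. Fix a feasible history $\mathcal H_{t-1}$. It determines the selected clause and the map $E_t$. Since $U_t$ is uniform on $\{0,1\}^r$ and independent of the past, and $E_t$ is injective, the redraw $B_t=E_t(U_t)$ is uniform on the $2^r$ points of the image of $E_t$. Hence
\[
 \max_b\Prb\{B_t=b\mid\mathcal H_{t-1}\}=2^{-r}.
\]
The selector stays a deterministic nonanticipating rule. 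Each redraw is a $k$-bit word assigned to the selected clause, so the process is exactly the one covered by Theorem~\ref{thm:ksat-entropy-threshold} with a history-dependent redraw law. That theorem then gives \eqref{eq:weak-ksat-tail} with $\varepsilon=\varepsilon_r=r-\log_2A_d>0$. One point needs care: the theorem must allow redraw laws that are not full support. Its statement only requires the pointwise bound at feasible histories, so this is fine. The bound $r\le k$ keeps the image inside $\{0,1\}^k$.

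The second step bounds the expectation. Write $C=2^n(d/(d-1))^m$ and use $\E\tau=\sum_{T\ge1}\Prb\{\tau\ge T\}\le\sum_{T\ge1}\min\{1,C2^{-\varepsilon_rT}\}$. Let $T_0=\lceil\log_2C/\varepsilon_r\rceil=\lceil(n+m\log_2(d/(d-1)))/\varepsilon_r\rceil$.
\begin{itemize}
\item For $T\le T_0$, bound each term by $1$, which contributes $T_0$.
\item For $T>T_0$, we have $C2^{-\varepsilon_rT_0}\le1$, so the remaining terms are at most $\sum_{j\ge1}2^{-\varepsilon_rj}=1/(2^{\varepsilon_r}-1)$.
\end{itemize}
This gives \eqref{eq:few-bit-expectation}. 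The fair-bit count is then $r\E\tau$ by Wald's identity, or simply because exactly $r$ bits are read per repair: the total is $r\tau$.

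The third step is the asymptotic \eqref{eq:few-bit-asymptotic}. Take $r=\lceil\log_2A_d+1\rceil$. The assumption $A_d\le2^{k-1}$ gives $r\le k$, and $\varepsilon_r\ge1$, so $1/(2^{\varepsilon_r}-1)\le1$. Using $\log_2(d/(d-1))\le 1/((d-1)\ln2)=O(1/d)$, we get $\E\tau=O(n+m/d)$. Since $A_d=1+c_d$ and $c_d=d^d/(d-1)^{d-1}=d(1+1/(d-1))^{d-1}\le ed$, we have $r=O(\log(d+1))$. Multiplying the two bounds gives the claim.

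The main obstacle is the first step. We must confirm that the adaptive choice of $E_t$ from the full past, including earlier bit values, fits the class of history-dependent redraw laws in Theorem~\ref{thm:ksat-entropy-threshold}. Conditioning on $\mathcal H_{t-1}$, with $E_t$ measurable before $U_t$ is read, is what makes the min-entropy bound hold pointwise. That is exactly why the corollary requires the map to be fixed before $U_t$ is read.
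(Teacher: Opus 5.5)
Your proposal is correct and follows the paper's proof essentially step for step. You verify the pointwise min-entropy bound with $h=r$ from injectivity of $E_t$ and apply Theorem~\ref{thm:ksat-entropy-threshold}, bound the first $\lceil\log_2 C/\varepsilon_r\rceil$ tail terms by one and sum the geometric remainder, and then use $\varepsilon_r\ge1$, $r=O(\log(d+1))$ and $\log_2(d/(d-1))=O(1/d)$. Your explicit estimates $c_d\le ed$ and $\ln(1+x)\le x$ only supply constants the paper leaves implicit.
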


Messner and Thierauf's \emph{Modified-Search} instead redraws a violated
$k$-clause uniformly from its $2^k-1$ satisfying local assignments and
obtains the corresponding $c_d\le 2^k-1$ condition under their lopsided
degree convention \citep{messner-thierauf-2012}.  Corollary~\ref{cor:few-bit-ksat}
addresses a different resource question: the adaptive redraw may use only
$2^r$ outcomes, and hence only $r=O(\!\log d)$ fresh fair bits, without
requiring that the violated local assignment be excluded.
\citet{csoka-et-al-2022} obtain a constant expected total number of random
bits on uniformly subexponential-growth dependency graphs by reusing bits
across distant parallel repairs.  The present result instead uses fresh
conditional entropy, applies to arbitrary bounded-degree graphs, and keeps
the original sequential repair map; it does not capture that reuse.

\begin{corollary}[A Santha--Vazirani tape]
\label{cor:sv-ksat}
Let the bits $(U_j)$ satisfy, for some $0<\alpha\le1/2$,
\[
 \alpha\le
 \Prb\{U_j=1\mid U_1,\ldots,U_{j-1}\}
 \le1-\alpha
\]
at every positive-probability history \citep{santha-vazirani-1986}.  Feed
successive blocks of $r$ bits through the adaptive injective maps in
Corollary~\ref{cor:few-bit-ksat}.  If
\[
 \varepsilon_\alpha
 =r\log_2\frac1{1-\alpha}-\log_2A_d>0,
\]
then \eqref{eq:weak-ksat-tail} and \eqref{eq:few-bit-expectation} hold with
$\varepsilon_\alpha$ in place of $\varepsilon_r$.  Taking $r=k$ and $E_t$
to be the identity runs ordinary clause resampling directly from the
Santha--Vazirani tape, without an extractor.
\end{corollary}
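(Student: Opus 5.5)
The plan is to reduce Corollary~\ref{cor:sv-ksat} to Theorem~\ref{thm:ksat-entropy-threshold} and to the argument of Corollary~\ref{cor:few-bit-ksat}. The reduction rests on one min-entropy bound for each repair's block of $r$ Santha--Vazirani bits.

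Fix a feasible history $\mathcal H_{t-1}$ before repair $t$. Since the selector, the past redraws, and the map $E_t$ are all determined by the bits consumed so far, conditioning on $\mathcal H_{t-1}$ amounts to conditioning on a positive-probability prefix $U_1,\dots,U_{j}$ of the Santha--Vazirani tape, where $j=r(t-1)$. If the history records only redraws and not the raw bits, it is a union of such prefix events, and a pointwise bound that holds for each prefix also holds for the mixture. For any target block $u\in\{0,1\}^r$, the chain rule and the Santha--Vazirani condition give
\[
 \Prb\{U_{j+1:j+r}=u\mid U_{1:j}\}=\prod_{i=1}^r\Prb\{U_{j+i}=u_i\mid U_{1:j+i-1}\}\le(1-\alpha)^r,
\]
because each conditional probability of either bit value lies in $[\alpha,1-\alpha]$. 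The map $E_t$ is fixed before the block is read and is injective, so each redraw value $b$ has at most one preimage. Hence
\[
 \max_b\Prb\{B_t=b\mid\mathcal H_{t-1}\}\le(1-\alpha)^r=2^{-h},\qquad h=r\log_2\frac{1}{1-\alpha}.
\]
This is exactly the pointwise condition \eqref{eq:ksat-pointwise-min-entropy}.

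With this $h$, we have $\varepsilon=h-\log_2A_d=\varepsilon_\alpha>0$. Theorem~\ref{thm:ksat-entropy-threshold} then yields \eqref{eq:weak-ksat-tail} with $\varepsilon_\alpha$, uniformly over deterministic nonanticipating selectors and history-dependent redraw laws.

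For \eqref{eq:few-bit-expectation}, I will reuse the summation from Corollary~\ref{cor:few-bit-ksat}, which uses only the tail bound. Write $\E\tau=\sum_{T\ge1}\Prb\{\tau\ge T\}$. For $T\le T_0=\lceil (n+m\log_2(d/(d-1)))/\varepsilon_\alpha\rceil$, bound each term by $1$. For $T>T_0$, use $2^{n}(d/(d-1))^m2^{-\varepsilon_\alpha T}\le 2^{-\varepsilon_\alpha(T-T_0)}$. The geometric tail then sums to $1/(2^{\varepsilon_\alpha}-1)$.

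For the final sentence, take $r=k$ and $E_t=\mathrm{id}$. This is injective and fixed in advance, so the algorithm is ordinary clause resampling that reads $k$ tape bits per repair, and the same bounds apply with no extractor.

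The main subtlety is the conditioning step. I must check that the block read at repair $t$ is the next contiguous segment of the Santha--Vazirani tape, and that the selector and $E_t$ depend only on earlier bits. Under that condition the Santha--Vazirani bound passes through the chain rule at every positive-probability history, including histories in which the selector remembers labels, and injectivity prevents any collision from concentrating mass on one redraw value.
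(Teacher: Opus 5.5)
Your proposal is correct and matches the paper's proof: the chain rule with the Santha--Vazirani bounds gives the $(1-\alpha)^r$ block atom bound at every positive-probability history, injectivity of the pre-chosen $E_t$ transfers it to each redraw atom, and Theorem~\ref{thm:ksat-entropy-threshold} then applies with $h=r\log_2(1/(1-\alpha))$. The expectation bound follows by the same tail summation as in Corollary~\ref{cor:few-bit-ksat}, which you also reproduce correctly.
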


\begin{proposition}[A one-step min-entropy obstruction]
\label{prop:ksat-entropy-lower}
Let $d\ge2$ and $k\ge2$.  If $d+1\le2^{k-1}$, there is a satisfiable
$k$-CNF of dependency degree $d$
whose repair scopes overlap without being equal.  From a specified initial
assignment, uniform independent $k$-bit redraws satisfy
\begin{align}
 \lim_{T\to\infty}\Prb\{\tau\ge T\}^{1/T}
 &=\frac{d+1}{2^k},
 \label{eq:ksat-lower-survival}\\
 \dim_H^P\mathcal N
 =\sup_{\omega\in\mathcal N}\eDimP(\omega)
 &=\frac{\log_2(d+1)}k.
 \label{eq:ksat-lower-dimension}
\end{align}
There is also a fixed clause-wise redraw source of conditional block
min-entropy $\log_2(d+1)$ on which every run is infinite.
The dependency graph of this instance is $K_{d+1}$, for which
$\kappa_{K_{d+1}}=d+1$.  Thus
Theorem~\ref{thm:ksat-graph-specific} is sharp for this graph: strict
inequality above $\log_2(d+1)$ gives exponential termination, whereas
equality can sustain an infinite run.  Among guarantees that use only the
maximum degree $d$, the one-step sufficient and obstructing thresholds are
separated by less than
\[
 \log_2c_d-\log_2(d+1)<\log_2e
\]
bits.  When $d+1=2^r$, the infinite-run source is an injective
$r$-fair-bit redraw.  This last source need not be
Santha--Vazirani, so the proposition does not claim sharpness for
Corollary~\ref{cor:sv-ksat}.
\end{proposition}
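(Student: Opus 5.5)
We would build an explicit instance whose dependency graph is $K_{d+1}$ and in which every live state is forced into a single combinatorial configuration. Use $d+1$ clauses $C_0,\dots,C_d$ that share one common set $S$ of $k-1$ variables, and give each clause its own private variable $y_i$. Then every two clauses overlap in $k-1$ variables, so the scopes overlap without being equal and the dependency graph is exactly $K_{d+1}$. Take $C_i$ to forbid the assignment ``$S=\mathbf 0$ and $y_i=0$''; the formula is satisfiable by setting any variable of $S$ to one. This design may need adjusting to realize exactly degree $d$ together with the stated survival rate (for example, by having the clauses forbid different patterns on $S$). The point of the design is that a redraw of $C_i$ either keeps the walk inside the ``bad core'' $S=\mathbf 0$ or kills all flaws at once. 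Start from the all-zero assignment, where every clause is violated.

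Next we compute $Z_T(s)$ exactly, as in Theorem~\ref{thm:orientation-disagreement}. While $S=\mathbf 0$, the violated clauses are those $C_i$ with $y_i=0$. Redrawing the selected clause keeps the process live only when the redraw sets $S=\mathbf 0$. The label of the next selected clause, and hence the exact surviving-word count, depends on the $y$'s. To make the count uniform and selector-independent, I would instead choose the forbidden patterns so that each live state has the same number $d+1$ of surviving $k$-bit redraws, one per target clause that becomes (or stays) the unique violated one. A natural choice is to let $C_i$ forbid $(S,y)=(\mathbf 0, \cdot)$ in a cyclic code. With that choice, every live prefix of length $T$ corresponds to one of $(d+1)^T$ words, each of probability $2^{-kT}$. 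Hence
\[
 Z_T(s)=(d+1)^T2^{-ksT}
\]
up to a bounded initial factor. This gives \eqref{eq:ksat-lower-survival} at $s=1$, and the Hausdorff dimension $\log_2(d+1)/k$, via the upper bound from the cover by $\mathcal L_T$ and the lower bound from the uniform measure on the $(d+1)$-ary surviving tree, which satisfies $\mu[w]=P[w]^{\Delta}$. The condition $d+1\le 2^{k-1}$ leaves room for the $d+1$ surviving patterns among the $2^{k-1}$ redraws with $S=\mathbf 0$.

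For the effective dimension, the upper bound follows from the coding Corollary~\ref{cor:powered-dimension}, or directly from enumerating the computable surviving tree. The attained lower bound comes from a tape that is Martin-L\"of random for the uniform $(d+1)$-ary measure on that tree. The infinite-run source is the law that redraws the selected clause uniformly among its $d+1$ surviving patterns. Its conditional block min-entropy is $\log_2(d+1)$, and when $d+1=2^r$ it is an injective $r$-fair-bit map.

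The remaining claims follow quickly. For $K_{d+1}$ the independent sets are singletons, so $P_D(z\mathbf1)=1-(d+1)z$ and $\kappa=d+1$; sharpness then follows from Theorem~\ref{thm:ksat-graph-specific}. The gap estimate is elementary: $c_d/(d+1)<d(1+1/(d-1))^{d-1}/(d+1)<e$. The main obstacle I expect is the combinatorial design in the second step. The clause patterns must be chosen so that the number of surviving redraws is exactly $d+1$ from every live state, independent of the selector, while the degree stays exactly $d$. I would first try a Hamming-weight or cyclic labelling of the private parts and verify the count by cases.
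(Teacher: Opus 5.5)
There is a genuine gap: the construction is the whole content of this proposition, and you never produce a working one. Your explicit design, in which every clause forbids $S=\mathbf 0$, fails. Survival forces the redraw to write $\mathbf 0$ on $S$, leaving only $y_i$ free. So at most $2$ of the $2^k$ redraws survive, and only one of them keeps the number of clauses with $y_j=0$ unchanged; once $y_i=1$, the clause $C_i$ can never be violated again. The live chain is therefore bidiagonal with diagonal $2^{-ks}$, which gives survival rate $2^{-k}$ and dimension $0$. Several clauses are also violated at once, so the selector matters.

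Your replacement target, exactly $d+1$ surviving redraws from every live state with ``one per target clause'', is unattainable once scopes differ. If $v(C_j)\neq v(C_i)$, a redraw of $C_i$ that makes $C_j$ violated is pinned only on $v(C_i)\cap v(C_j)$. So whenever one such redraw exists, there are $2^{|v(C_i)\setminus v(C_j)|}\ge 2$ of them. Hence the exact $(d+1)$-ary count behind $Z_T(s)=(d+1)^T2^{-ksT}$ cannot be realized, and the ``cyclic code'' is never specified. Your reading of $d+1\le 2^{k-1}$ is also off: with $|S|=k-1$, only two redraws have $S=\mathbf 0$, not $2^{k-1}$.

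The missing idea is the one you mention in passing and then abandon: give the clauses \emph{distinct} forbidden words $u_1,\dots,u_{d+1}$ on $\ell\le k-1$ shared variables. The hypothesis $d+1\le2^{k-1}$ makes this possible while leaving each clause a nonempty private block, so the scopes differ. Your own $|S|=k-1$ with one private variable already works once the words on $S$ are distinct. Then at most one clause is violated at a time, so the process is selector-independent. You must also give up uniform branching.

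Call $i$ active if its private block is zero. A private block is rewritten only while its clause is violated, so once it is nonzero it stays nonzero. From a state with $q$ active clauses, while repairing $C_i$:
\begin{itemize}
\item the $q$ redraws $(u_j,\mathbf 0)$ with $j$ active keep $q$;
\item the $(q-1)(2^{k-\ell}-1)$ redraws $(u_j,\text{nonzero})$ with $j\ne i$ active lower $q$ by one;
\item all other redraws terminate.
\end{itemize}
The live matrix is lower bidiagonal with diagonal entries $q2^{-ks}$, so its spectral radius is $(d+1)2^{-ks}$. Proposition~\ref{prop:pressure-details}, applied to the top component, then gives the survival limit and both dimensions.

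Equivalently, once this instance exists, the upper bounds follow from Theorem~\ref{thm:ksat-graph-specific} because $\kappa_{K_{d+1}}=d+1$. The lower bounds need only the $(d+1)$-ary subtree of redraws $(u_j,\mathbf 0)$ at the top level. Its uniform law is exactly the Frostman measure and infinite-run source you describe, so that part of your argument is right. It just rests on an instance you did not construct. Your remaining steps are fine: $\kappa_{K_{d+1}}=d+1$, the gap estimate, and the fair-bit remark. In the gap estimate, the first relation $c_d/(d+1)=\frac{d}{d+1}(1+\frac1{d-1})^{d-1}$ is an equality, not a strict inequality.
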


\begin{theorem}[Tree formulas approach the max-degree boundary]
\label{thm:tree-cnf-sharpness}
Fix integers $2\le d\le k$.  For every height $\ell\ge1$, there is a
satisfiable $k$-CNF $\Phi_{d,k,\ell}$ whose clause-overlap graph is the full
rooted $(d-1)$-ary tree $T_{d,\ell}$ of height $\ell$.  Every variable
occurs in at most two clauses, every clause overlaps at most $d$ others, and
every pair of adjacent bad events is disjoint.  Draw the initial assignment
uniformly, and then use independent uniform redraws; let $P$ denote this
sequential tape law.  For any deterministic current-assignment selector,
\begin{align}
 \lim_{T\to\infty}\Prb\{\tau\ge T\}^{1/T}
 &=2^{-k}\kappa_{T_{d,\ell}},
 \label{eq:tree-cnf-survival}\\
 \dim_H^P\mathcal N
 &=\frac{\log_2\kappa_{T_{d,\ell}}}{k}.
 \label{eq:tree-cnf-dimension}
\end{align}
If the selector is computable, the right side of
\eqref{eq:tree-cnf-dimension} is also the maximal effective strong
dimension.  Moreover,
\begin{equation}\label{eq:tree-kappa-limit}
 \kappa_{T_{d,\ell}}
 \uparrow\frac{d^d}{(d-1)^{d-1}}=c_d
 \qquad(\ell\to\infty).
\end{equation}
This tree threshold is the classical regular-tree boundary for the
independence polynomial \citep{scott-sokal-2005,harvey-et-al-2018}.
Thus the max-degree dimension bound in
Theorem~\ref{thm:ksat-graph-specific} is best possible in the supremum over
finite instances, already for tree dependency graphs.

For each fixed $\ell$ and each fixed selector in the theorem, put
$\delta_\ell=\log_2(\kappa_{T_{d,\ell}})/k$.  There are a probability law
$\nu_{\delta_\ell}$ supported on $\mathcal N$ and a finite $D_\ell$ such that
\begin{equation}\label{eq:tree-frostman-source}
 \nu_{\delta_\ell}[w]\le D_\ell P[w]^{\delta_\ell}
\end{equation}
for every finite tape prefix $w$.  Consequently the same law satisfies
$\nu_{\delta_\ell}[w]\le D_\ell P[w]^s$ for every
$0\le s\le\delta_\ell$.  Hence the global cylinder threshold
\eqref{eq:ksat-global-threshold} is sharp, including at the critical power.
This statement does not assert a matching one-step conditional min-entropy
law.
\end{theorem}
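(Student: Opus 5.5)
We will construct $\Phi_{d,k,\ell}$ directly from the tree, then compute $Z_T(s)$ exactly rather than only bounding it. Attach one clause $A_x$ to each node $x$ of $T_{d,\ell}$. For every tree edge $xy$, introduce one shared variable $v_{xy}$. The remaining slots of each clause are filled with private variables. Since $d\le k$ and each node has at most $d$ neighbours, every clause has enough slots. Each variable then lies in at most two clauses, and the overlap graph is exactly the tree. We choose literal signs so that $v_{xy}$ occurs positively in one endpoint clause and negatively in the other. The bad events of adjacent clauses are then disjoint, and the formula is satisfiable, for example by making every clause true through a private literal.

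Under the uniform initial draw and uniform redraws, the variable-model pair satisfies the product-preserving diamond property, and every bad event has mass $2^{-k}$. By \eqref{eq:variable-powered-charge} we get $\lambda_A(s)=r_A(s)=2^{-ks}$. Lemma~\ref{lem:all-power-commutation} therefore supplies $s$-commutation, and Theorem~\ref{thm:powered-commutative} yields the upper bound $Z_T(s)\le\gamma_s\,|\{h:|h|=T\}|\,2^{-ksT}$, with $\gamma_s$ of order $2^{n(1-s)}$. This gives the upper halves of \eqref{eq:tree-cnf-survival} and \eqref{eq:tree-cnf-dimension}.

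For the matching lower bound we use the standard fact for extremal instances: adjacent bad events are disjoint, so the Shearer/Kolmogorov bound is an equality for resampling under a uniform initial law. We would prove this through the witness-tree/trace coupling. Each trace $h$ of length $T$ corresponds to a set of tape prefixes of total $s$-powered mass exactly $2^{-ksT}$ times a fixed factor, and distinct traces give disjoint cylinder sets. This holds because the current-assignment selector is deterministic, and disjointness of adjacent events makes the history of repairs recoverable as a trace. The conclusion is $Z_T(s)\asymp\#\{h:|h|=T\}\,2^{-ksT}$. Taking $s=1$ gives \eqref{eq:tree-cnf-survival}. For general $s$, the pressure $\kappa_{T_{d,\ell}}2^{-ks}$ equals $1$ exactly at $\delta_\ell$. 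The mass-distribution principle then gives $\dim_H^P\mathcal N\ge\delta_\ell$ once we build the Frostman law, and the effective-dimension statement follows as in Corollary~\ref{cor:powered-dimension}, together with a computable point of the Frostman measure.

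The main obstacle is this exact lower bound together with the critical law $\nu_{\delta_\ell}$. The plan is to use the transfer-operator structure. Live states together with the Cartier--Foata "top layer" form a finite automaton whose powered matrix at $s=\delta_\ell$ has spectral radius $1$. This matrix is irreducible on the recurrent part, because the tree overlap graph is connected. Its Perron eigenvector $\psi$ defines a Doob-transform Markov law
\[
\nu_{\delta_\ell}[w]=P[w]^{\delta_\ell}\,\psi(\text{state after }w)/\psi(\text{start}).
\]
This law is supported on infinite runs and satisfies \eqref{eq:tree-frostman-source} with $D_\ell=\max\psi/\min\psi$. The monotonicity to smaller $s$ is immediate because $P[w]\le1$.

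Finally, for \eqref{eq:tree-kappa-limit}, $\kappa_{T_{d,\ell}}=1/R_{T_{d,\ell}}$. The root zero of the tree independence polynomial satisfies the standard recursion for ratios $z/(1+\cdots)^{d-1}$. As $\ell$ grows the zero decreases monotonically, since subgraph inclusion shrinks the Shearer region, and it converges to the Scott--Sokal fixed point $(d-1)^{d-1}/d^d$.
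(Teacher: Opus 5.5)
Your upper half is a legitimate substitute for the paper's direct count (a trace and a terminal assignment determine at most one live prefix). It uses Lemma~\ref{lem:all-power-commutation} and Theorem~\ref{thm:powered-commutative} with $\lambda_A(s)=2^{-ks}$ and $\gamma_s=2^{(1-s)n}$. The gap is the lower half, $Z_T^\Lambda(s)\ge c\,2^{-ksT}N_T$ with $N_T=|\{h\in\mathcal M(D):|h|=T\}|$. The survival limit, the dimension lower bound and the critical law all rest on it. Your transfer-operator plan presupposes it, because you know the powered matrix has spectral radius one at $\delta_\ell$ only after the two-sided estimate is proved. Your stated reason is that a deterministic selector and disjoint adjacent events make the history recoverable as a trace. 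That shows only that a live prefix determines its trace, which is the injective, upper-bound direction. What must be shown is that \emph{every} trace of length $T$ is realized by at least one live prefix under an \emph{arbitrary} current-assignment selector. An unspecified appeal to tightness of Shearer's bound on extremal instances does not supply this for each length, power and selector. Your stronger claim, that every trace carries powered mass exactly $2^{-ksT}$ times a fixed factor, is false. On $T_{3,1}$, the path $f\sim g\sim h$ with root $g$, the priority selector $f\succ h\succ g$ gives the length-one traces $f,h,g$ probabilities $2^{-k}$, $2^{-k}(1-2^{-k})$ and $2^{-k}$. Only comparability within bounded factors holds.

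The missing idea is backward reconstruction from a \emph{satisfying} terminal assignment $y$. Given a trace $H$, remove a source $x$ and reconstruct $H-x$ from $y$. Then restore on the scope of $L(x)$ the unique local assignment making $L(x)$ bad, recording the post-repair restriction as the redraw block. Incomparable nodes have disjoint scopes, so the result does not depend on the removal order. By induction, extremality and satisfiability of $y$ make the bad events of the reconstructed initial state exactly the source labels of $H$. Whatever a current-assignment selector picks is therefore a source label, so the run follows a linear extension of $H$ to $y$. This gives one live prefix per pair $(H,y)$, hence $Z_T^\Lambda(s)\ge2^{-sn}2^{-ksT}N_T$.

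Several secondary points also need repair:
\begin{itemize}
\item This lower bound needs a satisfying assignment, and your satisfiability argument fails when $d=k$ and $\ell\ge2$, since non-root internal clauses then have no private variable. Instead, orient edges away from the root, put $X_e$ at the parent and $\neg X_e$ at the child, and set every $X_e=0$. Satisfy the root, whose degree is $d-1<k$, through a private literal.
\item The live chain on assignments is not irreducible: a clause satisfied by a private literal is never violated again, so that literal is frozen. A globally positive Perron vector with $D_\ell=\max\psi/\min\psi$ is therefore unjustified. Pass to a reachable irreducible component whose powered matrix has spectral radius one, prefix a live word entering it, and absorb the prefix mass and Perron ratios into $D_\ell$.
\item The Cartier--Foata top layer is unnecessary, because the selector reads only the current assignment.
\item Effective attainment needs a point that is Martin--L\"of random for $\nu_{\delta_\ell}$, followed by Levin--Schnorr. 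A computable point has effective dimension zero.
\item The limits require Fekete's lemma for $N_T$, via $N_{m+n}\le N_mN_n$. They also require the dichotomy at $z_*=(d-1)^{d-1}/d^d$ for the recursion $r_{j+1}=z/(1-r_j)^{d-1}$.
\end{itemize}
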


The clique and tree lower families are extremal.  The clique has partially
overlapping scopes and gives the exact one-step threshold for its graph; the
trees make the maximum-degree dimension and global source threshold sharp.
For general degree-$d$ repairs with $d+1\le2^{k-1}$, the present one-step
lower and upper thresholds are $\log_2(d+1)$ and $\log_2c_d$ bits.  Because
both lower families are extremal, a matching lower bound for nonextremal
overlapping repairs remains open.

These are robustness results for the original local repair map, not stronger
satisfiability criteria.  The trace form gives the sharp graph-specific
upper bound, while Theorem~\ref{thm:ksat-entropy-threshold} trades at most
$\log_2(5/4)$ additional bits relative to the worst-case max-degree bound
for an explicit prefactor and the end-to-end randomness estimate.
Proposition~\ref{prop:ksat-entropy-lower} also shows
why logarithmic fresh-block randomness cannot be removed in that model.
Randomness reuse and derandomization can have different guarantees.  Viewing
the same trace bound across source powers also yields the dimension and
pointwise-localization statements.  For uniform fixed-arity redraws this is
equivalent to rescaling the $s=1$ history count.  The disagreement process in
Section~\ref{sec:disagreement} shows that this reduction to power one fails
for nonuniform state-dependent repair.

\begin{corollary}[Rainbow perfect matchings]
\label{cor:weak-rainbow}
Let $n\ge2$.  Color the edges of $K_{2n}$ so that each color occurs on at
most $q\ge2$ edges, and run Kolmogorov's perfect-matching repair oracle from
an arbitrary perfect matching.  A flaw is a pair of vertex-disjoint edges of
the same color.  Put
\[
 A=(2n-1)(2n-3),
 \qquad C=(2n-1)(q-1).
\]
Suppose, for some $0\le h\le\log_2A$, every repair choice has conditional
atom probability at most $2^{-h}$ at every feasible positive-probability
history.  If, for some $\varepsilon>0$,
\[
 h\ge\log_2\!\left(\frac{256}{27}C\right)+\varepsilon,
\]
then every deterministic nonanticipating selector satisfies
\begin{equation}\label{eq:weak-rainbow-tail}
 \Prb\{\tau\ge T\}
 \le(2n-1)!!\,e^{n/6}2^{-\varepsilon T}.
\end{equation}
For uniform repair choices this is nonvacuous when
$q-1<27(2n-3)/256$.  Writing $P$ for the uniform oracle-action law,
\begin{equation}\label{eq:rainbow-dimension-bound}
 \dim_H^P\mathcal N
 \le\min\left\{1,
 \frac{\log_2((256/27)(2n-1)(q-1))}
      {\log_2((2n-1)(2n-3))}\right\}.
\end{equation}
The same bound holds for
$\sup_{\omega\in\mathcal N}\eDimP(\omega)$ under the computability
assumptions of Corollary~\ref{cor:powered-dimension}.
\end{corollary}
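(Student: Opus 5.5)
The plan is to instantiate Theorem~\ref{thm:powered-commutative} in its cluster form \eqref{eq:powered-selector-bound}. The tail under the weak source then follows from Corollary~\ref{cor:powered-dependent-source}, and the dimension bound from Corollary~\ref{cor:powered-dimension}.

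\textbf{Commutation and charges.} Kolmogorov's perfect-matching oracle addresses a flaw $f=\{e_1,e_2\}$ by choosing uniformly among exactly $A=(2n-1)(2n-3)$ switching actions. So every enabled flaw has fixed arity $M_f=A$. I would quote the known fact that these kernels commute at $s=1$ for nonadjacent flaws, where $f\sim g$ when the two pairs share a vertex. Lemma~\ref{lem:all-power-commutation} then gives \eqref{eq:powered-matrix-commutation} for every $s>0$, together with $K_{f,s}=A^{1-s}K_{f,1}$.

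Take $v_s$ uniform on perfect matchings. At $s=1$ the charge is the uniform measure of the flaw, namely $1/A$; this is the standard computation, and it uses that the oracle is regenerating for the uniform law. Hence $\lambda_f(s)=A^{1-s}\cdot A^{-1}=A^{-s}$. The start is a fixed matching, so $\gamma_s=(2n-1)!!$.

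\textbf{Cluster estimate.} Independent subsets of $\Gamma(f)$ are vertex-disjoint families of flaws. Each flaw other than $f$ in such a family can be charged to one of the four vertices of $f$, and at most one flaw is charged to each vertex. Through a fixed vertex there are at most $(2n-1)$ edges, each with at most $q-1$ same-colored partners, so at most $C$ flaws. Using a common weight $\eta_f=\eta$, this gives
\[
\sum_{I\in\operatorname{Ind}(\Gamma(f))}\eta(I)\le (1+C\eta)^4 .
\]
Choosing $\eta=1/(3C)$, which minimizes $(1+x)^4/x$ at $x=1/3$, yields
\[
\theta_s\le A^{-s}\,C\,\tfrac{256}{27}.
\]
For the prefactor I would use $\sum_{R}\eta(R)\le\prod_f(1+\eta)\le e^{\eta|F|}$. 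Since $|F|\le n(2n-1)(q-1)/2$, this is at most $e^{n/6}$.

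\textbf{Weak source.} Set $s=h/\log_2 A\in[0,1]$, so that $A^{-s}=2^{-h}$. The hypothesis $h\ge\log_2((256/27)C)+\varepsilon$ gives $\theta_s\le2^{-\varepsilon}<1$. The reference oracle is uniform on the same $A$ actions and the initial state is fixed. The explicit criterion after Corollary~\ref{cor:powered-dependent-source} therefore applies: atom probability at most $A^{-s}=M_t^{-s}$ gives \eqref{eq:cylinder-power-domination} with $D=1$. This yields \eqref{eq:weak-rainbow-tail}.

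\textbf{Uniform case and dimension.} For uniform choices, $h=\log_2A$, and the condition $\theta_1<1$ reads $(256/27)C<A$, i.e.\ $q-1<27(2n-3)/256$. For \eqref{eq:rainbow-dimension-bound}, any
\[
s>\frac{\log_2((256/27)C)}{\log_2 A}
\]
gives $\theta_s<1$, hence \eqref{eq:strict-powered-bound}. Also $\rho_{\max}=1/A<1$ because $n\ge2$. Corollary~\ref{cor:powered-dimension} then gives $\dim_H^P\mathcal N\le s$ and, under the stated computability assumptions, the effective-dimension bound. Taking the infimum over such $s$ and capping at $1$ finishes the argument.

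\textbf{Expected obstacle.} The delicate step is justifying the $s=1$ inputs for Kolmogorov's oracle: ordinary commutativity, and the exact charge $1/A$ with respect to the uniform $v$. I would import both from the commutative-LLL literature. A secondary point to verify is the independent-set count: flaws sharing a vertex of $f$ are pairwise adjacent, so at most one per vertex can appear in an independent family. That count is what produces the fourth power and hence the constant $256/27$.
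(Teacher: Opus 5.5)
Your proposal is correct and follows the paper's proof essentially step by step. Both arguments use the cluster form with $\eta=1/(3C)$, the four-vertex covering bound $(1+C\eta)^4$, $\gamma_s=(2n-1)!!$, the prefactor $e^{\eta|F|}\le e^{n/6}$, and power domination at $s=h/\log_2A$; both obtain the dimension bound by letting $s$ decrease. The only cosmetic difference is in how powered commutation and the charge $A^{-s}$ are obtained. The paper cites Kolmogorov's strong commutativity and atomicity. You instead use ordinary commutation with fixed arity (Lemma~\ref{lem:all-power-commutation}) and regeneration at $s=1$, which here amounts to the same thing.
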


The matching oracle is a non-product-space example: its action count,
atomicity, and strong commutativity are established by
\citet{kolmogorov-2016}.  The dimensions in
\eqref{eq:rainbow-dimension-bound} refer to its uniform action-outcome tree.
The derivations of both applications are given in
Appendix~\ref{app:powered-applications}.

\subsection{Relation to earlier work and scope}

Action-level commutativity for resampling oracles was developed by
\citet{kolmogorov-2016}; the matrix formulation of
\citet{harris-iliopoulos-kolmogorov-2025} is broader.  At any fixed power,
scalar normalization and a killed-process construction recover the
state-history special case of the exact-prefix estimate from their matrix
history-DAG lemma.  The extension used here is the direct full-prefix
formulation, which gives weak-source and pointwise bounds across source
powers while permitting colliding actions and selectors that remember action
labels.  Corollary~\ref{cor:disagreement-separation} shows that this distinction
can be decisive: two implementations with one common source and identical
ordinary repair matrices can have nearly opposite dimension and weak-source
behavior because an $m$-way action collision is invisible at power one.  A
fixed-$s$ normalization can analyze $K_{f,s}$ once that matrix is supplied; it
cannot reconstruct $K_{f,s}$ from $K_{f,1}$.  The paper does not claim a
stronger universal satisfiability threshold.

\citet{harris-srinivasan-2017} study R\'enyi entropy of intermediate and
terminal Moser--Tardos output distributions, principally for enumerative
applications.  Here the powered probabilities belong instead to surviving
\emph{input-tape cylinders}, usually at powers below one, and describe
continued execution and the geometry of exceptional tapes.  Neither object
determines the other.  For extremal partial rejection, confluence and the
probability-weighted trace series are established
\citep{cartier-foata-1969,guo-jerrum-liu-2019,jerrum-2024-prs}, and
Bernoulli measures on infinite trace boundaries were characterized by
\citet{abbes-mairesse-2015}.  The finite-state links among weighted growth,
Hausdorff dimension, and Kolmogorov complexity are likewise classical in
symbolic dynamics and regular infinite languages
\citep{staiger-1993,parry-1964,bowen-1975,mauldin-williams-1988,simpson-2015}.
The present use of this machinery is computational: source powers are
inserted into local repair to analyze one fixed algorithm without
materializing its full assignment chain.

Reverse reconstruction is basic to entropy compression
\citep{moser-2009,moser-tardos-2010}.  \citet{messner-thierauf-2012}
reconstruct the uniform bits used by satisfiability search and apply
Kolmogorov complexity, while \citet{fialho-et-al-2022} identify the equality
of Moser--Tardos and entropy-compression bounds through the subset-gas cluster
expansion.  The backward theorem below supplies a nonuniform exact likelihood
identity, an arbitrary comparison law, and a relative-entropy formula for its
slack.

Finally, the weak-source results keep the selector and available actions
fixed; they are narrower than general simulation from weak randomness
\citep{chor-goldreich-1988,doron-et-al-2023}.  Exact source-power formulas are
proved here for disagreement repair and selected extremal families.  A
compact exact representation for general nonextremal overlapping repair
remains open.

\section{A backward identity for individual runs}\label{sec:reversible}

The source-power theorem sums over all live prefixes.  A complementary
identity controls each realized run without a commutativity or local-lemma
assumption.

Let $X_1,\ldots,X_n$ be independent finite-valued variables with product
law $\pi=\bigotimes_i\pi_i$.  Each bad event $A$ depends on coordinates
$v(A)$ and has probability $p_A>0$.  Draw $S_0\sim\pi$.  Whenever a bad
event is present, a deterministic rule selects $A_t$, an independent block
$Z_t\sim\pi_{v(A_t)}$ is written on its coordinates, and the resulting
assignment is $S_t$.  On the event $E_T=\{\tau\ge T\}$, put
\[
 \Xi_T=(S_0,Z_1,\ldots,Z_T),
 \qquad
 P_{\rm tape}(\Xi_T)
 =\pi(S_0)\prod_{t=1}^T\pi_{v(A_t)}(Z_t).
\]
Appendix~\ref{sec:consumed-tape} gives the formal prefix tree and its
source-relative metric.

Let $Y_t=S_{t-1}|_{v(A_t)}$ be the overwritten configuration and let
$L_T=(A_1,\ldots,A_T)$.  Because $A_t$ is true immediately before repair,
\begin{equation}\label{eq:tape-factor}
 P_{\rm tape}(\Xi_T)
 =\pi(S_T)\prod_{t=1}^T
 p_{A_t}\,\pi_{v(A_t)}(Y_t\mid A_t).
\end{equation}
For any subprobability law $Q_T$ that is positive on feasible length-$T$
repair histories, define the backward law
\begin{equation}\label{eq:record-law}
 G_T(s,\ell,y_{1:T})
 =\pi(s)Q_T(\ell)
  \prod_{t=1}^T\pi_{v(A_t)}(y_t\mid A_t).
\end{equation}
The map
\[
 \Xi_T\longmapsto(S_T,L_T,Y_{1:T})
\]
is injective: reverse the repairs, reading each used block from the current
assignment and then restoring the overwritten values.

\begin{theorem}[Backward likelihood identity]\label{thm:identity}
Every execution in $E_T$ satisfies
\begin{equation}\label{eq:identity}
 \log_2\frac{G_T(S_T,L_T,Y_{1:T})}{P_{\rm tape}(\Xi_T)}
 =\sum_{t=1}^T\log_2\frac1{p_{A_t}}
  -\log_2\frac1{Q_T(L_T)}.
\end{equation}
Call the right side $\Delta_T$ on $E_T$ and set $\Delta_T=-\infty$
otherwise.  Then, for every $u\ge0$,
\begin{equation}\label{eq:tail-bound}
 \Prb\{\Delta_T\ge u\}\le2^{-u}.
\end{equation}
If the instance and $(Q_T)_{T\ge1}$ are uniformly computable, then
\begin{equation}\label{eq:complexity-bound}
 -\log_2P_{\rm tape}(\Xi_T)
 -K(\Xi_T\mid T,\textup{instance})
 \ge\Delta_T-O(1).
\end{equation}
\end{theorem}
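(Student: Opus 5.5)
The proof runs in three steps: the identity is direct algebra from \eqref{eq:tape-factor}, the tail bound is a Markov/Kraft-type argument using injectivity of the backward map, and the complexity bound is a Shannon--Fano coding argument with respect to $G_T$.

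\textbf{Step 1 (the identity).} Divide \eqref{eq:record-law} by \eqref{eq:tape-factor}. The factor $\pi(S_T)$ and the conditional factors $\prod_t\pi_{v(A_t)}(Y_t\mid A_t)$ cancel, leaving
\[
 \frac{G_T(S_T,L_T,Y_{1:T})}{P_{\rm tape}(\Xi_T)}
 =\frac{Q_T(L_T)}{\prod_{t}p_{A_t}}.
\]
Taking $\log_2$ gives \eqref{eq:identity}.

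For completeness I would verify \eqref{eq:tape-factor} itself. For each $t$, split $S_{t-1}$ into $Y_t$ on $v(A_t)$ and the rest. Product structure gives
\[
 \pi(S_{t-1})=\pi_{v(A_t)}(Y_t)\,\pi_{\rm rest}.
\]
Because $A_t$ holds, $\pi_{v(A_t)}(Y_t)=p_{A_t}\pi_{v(A_t)}(Y_t\mid A_t)$. Also $\pi(S_t)=\pi_{v(A_t)}(Z_t)\,\pi_{\rm rest}$. Hence
\[
 \pi(S_{t-1})\,\pi_{v(A_t)}(Z_t)=\pi(S_t)\,p_{A_t}\,\pi_{v(A_t)}(Y_t\mid A_t).
\]
Telescoping this from $t=1$ to $T$, starting from $\pi(S_0)$, gives \eqref{eq:tape-factor}.

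\textbf{Step 2 (the tail bound \eqref{eq:tail-bound}).} On $E_T$, the event $\{\Delta_T\ge u\}$ says $P_{\rm tape}(\Xi_T)\le 2^{-u}G_T(\Phi(\Xi_T))$, where $\Phi$ denotes the backward map $\Xi_T\mapsto(S_T,L_T,Y_{1:T})$. Summing over surviving tape prefixes gives
\[
 \Prb\{\Delta_T\ge u\}\le 2^{-u}\sum_{\xi\in E_T}G_T(\Phi(\xi)).
\]
Since $\Phi$ is injective, the sum is at most the total mass of $G_T$. That mass is at most $1$: summing over $s$ gives $1$, over each $y_t$ given $\ell$ gives $1$, and $Q_T$ is a subprobability.

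The one point that needs care is that the events $\{\Xi_T=\xi\}$ are disjoint cylinders of probability $P_{\rm tape}(\xi)$. This holds because the selector is deterministic and nonanticipating, so $\xi$ determines the run. Outside $E_T$ we have $\Delta_T=-\infty$, so those tapes contribute nothing.

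\textbf{Step 3 (the complexity bound \eqref{eq:complexity-bound}).} This is a coding argument with respect to $G_T$, and it is the step I expect to be the main obstacle, because the semimeasure must be lower semicomputable uniformly in $T$. Given $T$ and the instance, define the semimeasure $m(\xi)=G_T(\Phi(\xi))$ on surviving tapes $\xi$. Step 2 shows that $\sum_\xi m(\xi)\le1$. Under uniform computability of the instance and of $(Q_T)$, $m$ is computable uniformly in $T$: enumerate finite prefixes $\xi$, simulate the run to test survival, and evaluate $\Phi$ and $G_T$. Computable probabilities are fine here, since approximation from below suffices. The conditional coding theorem, i.e.\ Kraft--Chaitin, then gives
\[
 K(\xi\mid T,\textup{instance})\le-\log_2m(\xi)+O(1).
\]
Substituting $-\log_2m(\Xi_T)=-\log_2P_{\rm tape}(\Xi_T)-\Delta_T$ from Step 1 yields \eqref{eq:complexity-bound}. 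If the paper's tape-prefix encoding differs from the raw $\Xi_T$, a computable bijection absorbs the difference into $O(1)$.
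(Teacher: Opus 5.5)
Your proposal is correct and follows the paper's proof essentially step for step. The identity comes from cancelling the final-state and overwritten-value factors. The tail bound is a Markov-type argument resting on injectivity of the backward map and on $G_T$ having total mass at most one. The complexity bound is Kraft--Chaitin coding of the uniformly computable subprobability $w\mapsto G_T(R_T(w))$. Your telescoping verification of \eqref{eq:tape-factor} is a correct addition; the paper simply asserts that factorization from the fact that $A_t$ holds before each repair.
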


The likelihood ratio also gives an exact slack identity.  Let
$\mathcal W_T$ be the prefix-free set of minimal prefixes producing $T$
repairs, let $p_T=P_{\rm tape}(E_T)$, and put
\[
 R_T(w)=(S_T(w),L_T(w),Y_{1:T}(w)),
 \qquad
 \mu_T(w)=G_T(R_T(w)),
 \qquad
 m_T=\sum_{w\in\mathcal W_T}\mu_T(w).
\]
When $p_Tm_T>0$, define
$P_T(w)=P_{\rm tape}(w)/p_T$ and $\nu_T(w)=\mu_T(w)/m_T$.

\begin{corollary}[Exact slack and the optimal history law]
\label{cor:exact-slack}
For every fixed $Q_T$,
\begin{equation}\label{eq:exact-slack}
 -\log_2p_T
 =\E_{P_T}\Delta_T+D_2(P_T\Vert\nu_T)-\log_2m_T.
\end{equation}
If $J_T=\sum_{t=1}^T\log_2(1/p_{A_t})$, then
\begin{equation}\label{eq:optimal-history-law}
 \sup_{Q_T}\E[\Delta_T\mid E_T]
 =\E[J_T\mid E_T]-H_2(L_T\mid E_T),
\end{equation}
where the supremum is over subprobability laws positive on feasible
histories.  It is attained by
$Q_T(\ell)=\Prb\{L_T=\ell\mid E_T\}$ on the conditional support.
\end{corollary}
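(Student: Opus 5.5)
The plan is to rewrite the pathwise identity of Theorem~\ref{thm:identity} as a pointwise log-ratio between two normalized laws on $\mathcal W_T$, and then take expectations. Here $D_2$ denotes base-two relative entropy and $H_2$ base-two Shannon entropy.

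\textbf{First identity.} Each $w\in\mathcal W_T$ determines the execution up to repair $T$. So $\Xi_T=w$, and by \eqref{eq:identity},
\[
 \Delta_T(w)=\log_2\frac{\mu_T(w)}{P_{\rm tape}(w)} .
\]
Substituting $P_{\rm tape}(w)=p_TP_T(w)$ and $\mu_T(w)=m_T\nu_T(w)$ gives
\[
 \Delta_T(w)=\log_2\frac{\nu_T(w)}{P_T(w)}+\log_2m_T-\log_2p_T .
\]
Before taking expectations I would check that the logarithms are finite. Every $w\in\mathcal W_T$ with $P_T(w)>0$ has $\mu_T(w)>0$, because:
\begin{itemize}[nosep]
 \item $\pi(S_T)>0$;
 \item $Q_T$ is positive on feasible histories;
 \item each realized overwritten block $Y_t$ satisfies $A_t$ and has positive conditional probability.
\end{itemize}
Hence $\nu_T(w)>0$ wherever $P_T(w)>0$. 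Taking the $P_T$-expectation then gives
\[
 \E_{P_T}\Delta_T=-D_2(P_T\Vert\nu_T)+\log_2m_T-\log_2p_T,
\]
which rearranges to \eqref{eq:exact-slack}. Since $\mathcal W_T$ is countable (finite for finite alphabets), the exchange of sum and logarithm needs no further justification. Injectivity of $\Xi_T\mapsto R_T$ also gives $m_T\le1$, although that is not needed here.

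\textbf{Optimal history law.} By \eqref{eq:identity}, $\Delta_T=J_T+\log_2Q_T(L_T)$ on $E_T$. Let $p(\ell)=\Prb\{L_T=\ell\mid E_T\}$. Then
\[
 \E[\Delta_T\mid E_T]=\E[J_T\mid E_T]+\sum_\ell p(\ell)\log_2Q_T(\ell).
\]
For any subprobability $Q_T$, Jensen's inequality gives
\[
 \sum_\ell p\log_2(Q_T/p)\le\log_2\sum_{\ell:\,p(\ell)>0}Q_T(\ell)\le0,
\]
so $\sum_\ell p\log_2Q_T\le-H_2(L_T\mid E_T)$, with equality iff $Q_T=p$ on the support of $p$. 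This is Gibbs' inequality, and it yields \eqref{eq:optimal-history-law}.

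\textbf{Main obstacle: admissibility of the maximizer.} The constraint that $Q_T$ be positive on all feasible histories could exclude $Q_T=p$ when some feasible history has zero conditional probability. I would handle this as follows. Define $Q_T=p$ on the conditional support. Off that support, assign arbitrary positive masses whose total is at most $1-\sum_\ell p(\ell)$; if that total is $0$, use the mixture $(1-\epsilon)p+\epsilon r$ instead. This mixture changes the objective by at most $\log_2(1-\epsilon)\to0$, which gives the supremum.

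Histories off the conditional support do not affect the expectation. Under the natural reading, feasible means positive probability on $E_T$; then the supports coincide and the supremum is attained exactly, as the statement asserts.
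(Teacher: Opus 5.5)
Your proposal is correct and follows the paper's proof. Like the paper, you rewrite $\Delta_T(w)=\log_2\bigl(m_T\nu_T(w)/(p_TP_T(w))\bigr)$ on $\mathcal W_T$ and take $P_T$-expectations, then write $\Delta_T=J_T+\log_2Q_T(L_T)$ and apply Gibbs' inequality; the paper phrases that step as $-H_2(r_T)-D_2(r_T\Vert Q_T)$ for normalized $Q_T$, plus the remark that deficient mass only lowers the sum. Your positivity check and your discussion of whether the maximizer is admissible add care that the paper compresses into ``on the conditional support''; note only that $1-\sum_\ell p(\ell)=0$ always, so under the broader reading of feasibility it is the mixture branch that actually applies.
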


Reverse reconstruction and its Kolmogorov-complexity use are established in
entropy compression \citep{moser-2009,messner-thierauf-2012}.  The point of
Theorem~\ref{thm:identity} is the common nonuniform likelihood: one ratio
gives the tail, the tape-by-tape coding saving, and the exact relative-entropy
slack.  Proofs are in Appendix~\ref{app:proof-backward}.

\section{Conclusion}

Stopping-time probabilities and exceptional-tape geometry answer different
questions.  The source-power sum $Z_T(s)$ places them in one family: power one
is survival, whereas variation in $s$ controls weak-source robustness and the
effective and Hausdorff dimensions of nontermination.

The main theorem makes this family local.  When powered matrices for
nonadjacent repairs commute, a direct full-prefix argument bounds $Z_T(s)$ by
a Cartier--Foata/Shearer trace series, uniformly over deterministic selectors
that may use the complete past.  This gives source-robust termination and
dimension bounds without constructing the full assignment chain.  For
$k$-SAT, conditional block min-entropy above $\log_2\kappa_D$ gives
exponential termination; tree and clique formulas establish the stated
sharpness regimes, while the recurrent clauses give a pointwise complexity
bound for one infinite run.

Exact disagreement repair shows that the full source-power family cannot in
general be recovered even from the ordinary repair matrices.  Already on
$P_4$, two implementations driven by one common finite source have identical
power-one kernels and the same complete stopping-time law for every selector,
yet their exact source-domination thresholds and nontermination dimensions
can approach zero and one.  At one common domination power, the same
imperfect source makes one implementation run forever and gives the other an
exponential stopping tail.  Thus nominal transition and running-time
data can miss almost all of the exceptional-tape geometry and weak-randomness
robustness.  The backward likelihood
identity gives the complementary pathwise result: one exact ratio yields a
tail bound, a description-length saving for each surviving prefix, and a
relative-entropy formula for the slack.

The limits are equally explicit.  The tree family is sharp for dimension and
global source domination, not for the one-step weak-source threshold of
general nonextremal overlapping repair.  For any fixed finite instance, the
full assignment gives an exact operator, but usually with exponentially many
states.  The central remaining problem is to find a natural causal or
boundary width that gives a compact exact source-power representation, or to
prove a matching hardness boundary for succinct bounded-scope repair.

\clearpage
\appendix
\section{Finite-state and tape foundations}\label{sec:pressure}

Let $\mathcal G$ be a finite directed multigraph of nonterminal states with
fixed initial state $s_0$.  Each outgoing edge $e:s\to t$ represents a
distinct next source outcome and has probability $p(e)>0$.  The outgoing
probabilities from each state sum to at most one; the remaining probability
terminates the computation.  A surviving path
$w=e_1\cdots e_n$, traversing states $s_0,s_1,\ldots,s_n$, determines a tape cylinder of probability
\[
 P[w]=\prod_{i=1}^n p(e_i).
\]
Let $\mathcal W_n$ be the set of surviving paths of length $n$ from $s_0$.
Assume that the reachable nonterminal graph is strongly connected,
contains a directed cycle, and satisfies $\max_e p(e)<1$.  For $\theta\ge0$, define
\begin{equation}\label{eq:pressure-matrix}
 (B_\theta)_{st}=\sum_{e:s\to t}p(e)^\theta,
 \qquad
 \Lambda(\theta)=\log_2\rho(B_\theta).
\end{equation}
For any square matrix $C$, its spectral radius is
$\rho(C)=\max\{|\lambda|:\lambda\text{ is an eigenvalue of }C\}$.  At
$\theta=1$, $B_1$ is the substochastic transition matrix among nonterminal
states: each row sum is at most one, and the missing mass is the one-step
termination probability.
The quantity $\Lambda(\theta)$ is the base-two pressure, the exponential
growth rate of the total $\theta$-weight of surviving paths.  Graphs that
are not strongly connected are handled by applying the result to each reachable strongly
connected component containing a directed cycle and taking the largest
dimension; if there is no such component, the nonterminating set is empty.

Use $P[w]$ as the diameter of the cylinder $[w]$.  The resulting Hausdorff
dimension is the source-relative dimension $\dim_H^P$ defined in the
introduction; for a uniform $q$-ary source it is ordinary Hausdorff dimension
in the $q$-adic metric.
For an infinite path $\omega=e_1e_2\cdots$, use the corresponding pointwise
definition
\[
 \eDimP(\omega)
 =\limsup_{n\to\infty}
 \frac{K(e_1\cdots e_n)}{-\log_2P[e_1\cdots e_n]},
\]
and define $\edimP$ by replacing the limsup with a liminf.

\begin{proposition}[Detailed finite-state formula]\label{prop:pressure-details}
There is a unique $\delta\in[0,1]$ satisfying $\rho(B_\delta)=1$.  Moreover,
\begin{align}
 \lim_{n\to\infty}\frac1n\log_2
 \sum_{w\in\mathcal W_n}P[w]^\theta
 &=\Lambda(\theta),\label{eq:partition-pressure}\\
 \lim_{n\to\infty}-\frac1n\log_2\Prb\{\tau\ge n\}
 &=-\Lambda(1),\label{eq:escape-pressure}\\
 \dim_H^P\Ncal&=\delta.\label{eq:dimension-pressure}
\end{align}
If $B_\delta h=h$ with $h_s>0$ for every state $s$, then
\begin{equation}\label{eq:equilibrium-law}
 Q_\delta(e\mid s)=p(e)^\delta\frac{h_t}{h_s}
\end{equation}
is a probability law on surviving paths.  For every
$w=e_1\cdots e_n\in\mathcal W_n$, define its probability under this law by
\[
 Q_\delta[w]=\prod_{i=1}^nQ_\delta(e_i\mid s_{i-1}).
\]
Then it satisfies
\begin{equation}\label{eq:pressure-likelihood}
 \log_2\frac{Q_\delta[w]}{P[w]}
 =(1-\delta)\log_2\frac1{P[w]}
 +\log_2\frac{h_{s_n}}{h_{s_0}}.
\end{equation}
For a computable system with computable $Q_\delta$, every nonterminating tape
satisfies $\eDimP(\omega)\le\delta$.  A tape that is Martin--L\"of random
under $Q_\delta$ satisfies
$\edimP(\omega)=\eDimP(\omega)=\delta$.
\end{proposition}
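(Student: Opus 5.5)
We follow the standard thermodynamic-formalism route for finite graphs, organized around the matrices $B_\theta$.

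\emph{Existence and uniqueness of $\delta$.} By strong connectivity, each $B_\theta$ is irreducible. The Perron root $\rho(B_\theta)$ is continuous in $\theta$. It is strictly decreasing in $\theta$, because every entry $p(e)^\theta$ strictly decreases when $p(e)\le\max_e p(e)<1$, and Perron roots of irreducible matrices are strictly monotone in the entries.

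At $\theta=0$, $B_0$ is the adjacency (multi)matrix of a strongly connected graph with a cycle, so $\rho(B_0)\ge1$. At $\theta=1$, $B_1$ is substochastic, so $\rho(B_1)\le1$. Hence there is a unique root $\delta\in[0,1]$. In the degenerate case $\rho(B_1)=1$ we get $\delta=1$.

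\emph{Partition and escape limits.} Next, $\sum_{w\in\mathcal W_n}P[w]^\theta=e_{s_0}^{\mathsf T}B_\theta^n\mathbf1$. Let $h^{(\theta)}>0$ be a Perron eigenvector. Sandwiching $\mathbf1$ between multiples of $h^{(\theta)}$ gives
\[
c\,\rho(B_\theta)^n\le e_{s_0}^{\mathsf T}B_\theta^n\mathbf1\le C\rho(B_\theta)^n,
\]
which proves \eqref{eq:partition-pressure}. The lower bound uses $h^{(\theta)}_{s_0}>0$, and that $s_0$ lies in the strongly connected reachable graph. Taking $\theta=1$ and noting $\Prb\{\tau\ge n\}=\sum_{\mathcal W_n}P[w]$ gives \eqref{eq:escape-pressure}.

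\emph{The equilibrium law and the likelihood identity.} Since $B_\delta h=h$, the weights $Q_\delta(\cdot\mid s)$ sum to one at each state, so they define a Markov law on infinite surviving paths, supported on $\Ncal$. Telescoping the ratios $h_{s_i}/h_{s_{i-1}}$ along $w$ gives
\[
Q_\delta[w]=P[w]^\delta\,h_{s_n}/h_{s_0},
\]
which is \eqref{eq:pressure-likelihood}.

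\emph{Dimension formula.} For the upper bound, fix $\theta>\delta$, so $\rho(B_\theta)<1$. Cover $\Ncal$ by the cylinders in $\mathcal W_n$. Their diameters tend to zero uniformly, since $P[w]\le(\max_e p(e))^n$, and their total $\theta$-weight is $O(\rho(B_\theta)^n)\to0$. Hence $\dim_H^P\Ncal\le\theta$.

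For the lower bound, use $Q_\delta[w]\le(\max h/\min h)\,P[w]^\delta$. Any cylinder meeting $\Ncal$ is a surviving-path cylinder, so the mass distribution principle applies with the measure $Q_\delta$. This gives $\dim_H^P\Ncal\ge\delta$. The only subtlety is that covering sets are cylinders in the tape tree, and arbitrary prefixes must be handled by reducing to surviving prefixes. That reduction is harmless, because $Q_\delta$ vanishes off them.

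\emph{Effective statements.} For the upper bound we use a computable semimeasure. For $\theta>\delta$ rational, take the computable weights $m_\theta(w)=P[w]^\theta\rho(B_\theta)^{-|w|}/C_n$. Better, mix over $n$ with weights $n^{-2}$, which gives a lower-semicomputable semimeasure. The coding theorem then yields
\[
K(w)\le\theta\log_2(1/P[w])+n\log_2\rho(B_\theta)+O(\log n).
\]
The second term is negative, and $\log_2(1/P[w])\ge cn$, so $\eDimP(\omega)\le\theta$ for every $\omega\in\Ncal$. Letting $\theta\downarrow\delta$ along rationals gives $\eDimP(\omega)\le\delta$. Alternatively, one can directly use $Q_\delta$ when it is computable, which gives $K(w)\le\delta\log_2(1/P[w])+O(\log n)$ by \eqref{eq:pressure-likelihood}. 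The $O(\log n)$ term is negligible against $\log_2(1/P[w])\ge n\log_2(1/\max_e p(e))$.

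For the lower bound, a $Q_\delta$-Martin--L\"of random $\omega$ satisfies the Levin--Schnorr bound $K(\omega_{1:n})\ge-\log_2Q_\delta[\omega_{1:n}]-O(1)$. Combined with \eqref{eq:pressure-likelihood}, this gives $K\ge\delta\log_2(1/P)-O(1)$, so $\edimP(\omega)\ge\delta$. Together with the upper bound, $\edimP(\omega)=\eDimP(\omega)=\delta$.

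The main obstacle is the strict monotonicity and boundary behavior at $\theta=0,1$ needed for uniqueness and $\delta\in[0,1]$, together with the uniform diameter control needed in the Hausdorff upper bound. Both rest on $\max_e p(e)<1$.
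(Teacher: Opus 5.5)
Your proposal is correct and follows essentially the same route as the paper. Both argue as follows: $\rho(B_0)\ge1\ge\rho(B_1)$ together with strict Perron comparison gives the unique root $\delta$, and the identity $\sum_{w\in\mathcal W_n}P[w]^\theta=e_{s_0}^{\mathsf T}B_\theta^n\mathbf 1$ with Perron--Frobenius gives the pressure limits. Cylinder covers give the upper dimension bound, and the mass-distribution principle applied to the telescoped law $Q_\delta$ gives the lower one. Coding under $Q_\delta$ together with Levin--Schnorr gives the effective statements.

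Your additional coding route, which normalizes $P[w]^\theta$ over $\mathcal W_n$ for rational $\theta>\delta$, is a minor bonus: it yields $\eDimP(\omega)\le\delta$ on $\Ncal$ without needing $Q_\delta$ to be computable.
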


The requirement that $Q_\delta$ be computable is the effectivity condition
needed for the pointwise conclusion.  In the strict-interior finite cases
below it follows automatically: if the edge probabilities are computable and
\[
 \rho(B_0)>1>\rho(B_1),
\]
then $\theta\mapsto\rho(B_\theta)$ is uniformly computable and strictly
decreasing, so bisection computes $\delta$.  Irreducibility makes the Perron
eigenvalue simple; solving $B_\delta h=h$ with $\sum_sh_s=1$ gives a
computable positive eigenvector and hence a computable $Q_\delta$.  Endpoint
roots used below are explicit.

\begin{proof}[Proof of Proposition~\ref{prop:pressure-details}]
Since the graph contains a directed cycle, $\rho(B_0)\ge1$.  Since $B_1$ is
substochastic, $\rho(B_1)\le1$.  If $\theta'>\theta$, then every nonzero
entry contributed by an edge is smaller in $B_{\theta'}$ than in $B_\theta$,
because $p(e)<1$.  Strict Perron comparison on the common strongly
connected support
therefore gives $\rho(B_{\theta'})<\rho(B_\theta)$.  Together with continuity,
this proves existence and uniqueness of $\delta$.  The relevant matrix
identity is
\[
 \sum_{w\in\mathcal W_n}P[w]^\theta
 =e_{s_0}^{\mathsf T}B_\theta^n\mathbf 1,
\]
where $e_{s_0}$ is the coordinate vector of the initial state and
$\mathbf 1$ is the all-ones column vector.  Perron--Frobenius theory proves
\eqref{eq:partition-pressure}; setting
$\theta=1$ gives \eqref{eq:escape-pressure}.

If $\theta>\delta$, then $\rho(B_\theta)<1$, so the total
$\theta$-weight of the length-$n$ surviving cylinders tends to zero.  These
cylinders cover $\Ncal$, giving $\dim_H^P\Ncal\le\delta$.  At
$\theta=\delta$, \eqref{eq:equilibrium-law} has row sums one, and telescoping
gives
\[
 Q_\delta[w]=P[w]^\delta\frac{h_{s_n}}{h_{s_0}}.
\]
Because $h$ is positive on a finite state space, there is a constant
$C<\infty$, independent of $w$, such that
$Q_\delta[w]\le C P[w]^\delta$.  The mass-distribution principle states that a
probability measure satisfying this cylinder bound forces
$\dim_H^P\Ncal\ge\delta$.  It therefore gives the required lower bound,
proving
\eqref{eq:dimension-pressure} and \eqref{eq:pressure-likelihood}
\citep{bowen-1975,mauldin-williams-1988}.

For a computable system with computable $Q_\delta$, coding under $Q_\delta$
gives, for every surviving
prefix $w$ of length $n$,
\[
 K(w)\le-\log_2Q_\delta[w]+O(\log_2 n)
 =\delta\log_2\frac1{P[w]}+O(\log_2 n).
\]
Division by source self-information proves the upper effective-dimension
bound.  Levin--Schnorr gives the reverse inequality for a
$Q_\delta$-Martin--L\"of-random path
\citep{lutz-2003,athreya-et-al-2007,simpson-2015}.
\end{proof}

Under a uniform $q$-ary source, let $M=(M_{st})$, where $M_{st}$ counts
the source symbols that move $s$ to $t$.  Then $B_\theta=q^{-\theta}M$.  If
\[
 \alpha=-\lim_{n\to\infty}\frac1n\log_2\Prb\{\tau\ge n\}
\]
is the survival exponent, then
\[
 \dim_H\Ncal=\log_q\rho(M)
 =1-\frac{\alpha}{\log_2q}.
\]
Writing $h_{\rm surv}=\log_2\rho(M)$ for the topological entropy, in bits
per step, of the surviving path space gives
\[
 \alpha=\log_2q-h_{\rm surv},
 \qquad
 \dim_H\Ncal=\frac{h_{\rm surv}}{\log_2q}.
\]
Thus $\alpha$ is the entropy deficit imposed by survival, not an entropy
function.  Equation~\eqref{eq:pressure-likelihood} is the finite-state form
of the comparison used in Theorem~\ref{thm:identity}.

\section{Proofs of the main results}\label{app:proof-backward}

\subsection{The consumed-tape model}
\label{sec:consumed-tape}

The tape is the rooted tree of source choices actually queried by the
computation.  A finite node is a feasible transcript
$w=(s_0,z_1,\ldots,z_t)$.  The root branches to $s_0$ with probability
$\pi(s_0)$.  If $w$ is live and the rule next selects $A_{t+1}$, its
children are indexed by local blocks $z$, with edge probability
$\pi_{v(A_{t+1})}(z)$.  Hence
\begin{equation}\label{eq:consumed-cylinder}
 P_{\rm tape}[w]
 =\pi(s_0)\prod_{j=1}^t\pi_{v(A_j)}(z_j).
\end{equation}
Terminal transcripts are leaves, and infinite branches are exactly the
nonterminating consumed tapes.  For two distinct infinite branches, let
$\omega\wedge\eta$ be their longest common consumed prefix and set
\begin{equation}\label{eq:consumed-metric}
 d_P(\omega,\eta)=P_{\rm tape}[\omega\wedge\eta].
\end{equation}
All dimension statements assume that these cylinder masses tend to zero
along every infinite branch.

Fixed computable enumerations of the outcome alphabets give a canonical
self-delimiting encoding of each transcript.  The dimensions are unchanged
under a computable prefix-tree isomorphism $\Psi$ with computable inverse and
bounded source distortion
\[
 C^{-1}P_{\rm tape}[w]
 \le P'_{\rm tape}[\Psi(w)]
 \le CP_{\rm tape}[w],
\]
because self-information and prefix complexity then change by at most a
bounded additive term.  No invariance under arbitrary recodings is asserted.
A pre-sampled resampling table is only a coupling device: unused entries are
excluded, since adjoining independent unused coordinates can change tape
dimension without changing the computation.

\subsection{Backward likelihood}

\begin{proof}[Proof of Theorem~\ref{thm:identity}]
Take the ratio of \eqref{eq:record-law} and \eqref{eq:tape-factor}.
The final-state factor cancels, as do all conditional overwritten-value
factors.  The
remaining terms give
\[
 \frac{G_T(S_T,L_T,Y_{1:T})}{P_{\rm tape}(\Xi_T)}
 =\frac{Q_T(L_T)}{\prod_{t=1}^T p_{A_t}}
 =2^{\Delta_T},
\]
which proves \eqref{eq:identity}.

It remains to use this likelihood ratio.  Let $\mathcal W_T$ be the
prefix-free set of minimal tape prefixes that produce at least $T$ repairs,
and let $\mathcal R_T$ be the set of formal records on which $G_T$ is defined.
For $w\in\mathcal W_T$, write
\[
 R_T(w)=\bigl(S_T(w),L_T(w),Y_{1:T}(w)\bigr).
\]
The reconstruction argument shows that $R_T$ is injective.

First observe that $G_T$ has total mass at most one.  Indeed, for a fixed list
$\ell=(A_1,\ldots,A_T)$, the final-state law and each conditional
overwritten-value law have total mass one.  Consequently,
\[
 \sum_{r\in\mathcal R_T}G_T(r)
 =\sum_{\ell}Q_T(\ell)\le1.
\]
Define the corresponding mass on live tape prefixes by
\[
 \mu_T(w)=G_T(R_T(w)),
 \qquad w\in\mathcal W_T.
\]
Injectivity of $R_T$ gives
\[
 \sum_{w\in\mathcal W_T}\mu_T(w)
 =\sum_{r\in R_T(\mathcal W_T)}G_T(r)
 \le\sum_{r\in\mathcal R_T}G_T(r)
 \le1.
\]
Thus $\mu_T$ is a subprobability mass function.  Its total mass can be
strictly less than one because $Q_T$ may itself have mass less than one, or
because $G_T$ may assign mass to formal records that are not produced by the
algorithm.

For every $w\in\mathcal W_T$, the identity already proved gives
\[
 2^{\Delta_T(w)}
 =\frac{G_T(R_T(w))}{P_{\rm tape}(w)}
 =\frac{\mu_T(w)}{P_{\rm tape}(w)}.
\]
Because $\Delta_T=-\infty$ outside $E_T$ and the cylinders determined by
$\mathcal W_T$ are disjoint, for every $u\ge0$,
\begin{align*}
 \Prb\{\Delta_T\ge u\}
 &=\sum_{\substack{w\in\mathcal W_T\\\Delta_T(w)\ge u}}
      P_{\rm tape}(w)\\
 &=\sum_{\substack{w\in\mathcal W_T\\\Delta_T(w)\ge u}}
      2^{-\Delta_T(w)}\mu_T(w)\\
 &\le 2^{-u}
      \sum_{\substack{w\in\mathcal W_T\\\Delta_T(w)\ge u}}
      \mu_T(w)\\
 &\le 2^{-u}.
\end{align*}
This proves \eqref{eq:tail-bound}; equivalently, it is Markov's inequality
applied to a subprobability likelihood ratio.

Use the canonical effective encoding of the consumed-choice tree from
Section~\ref{sec:consumed-tape}, and extend $\mu_T$ by zero off
$\mathcal W_T$.  Membership in $\mathcal W_T$ is decidable uniformly from
$(T,w)$ by simulating the deterministic computation, and $w\mapsto R_T(w)$
is uniformly computable.  The finite source factors and $Q_T$ are uniformly
computable by hypothesis.  Hence $(\mu_T)_{T\ge1}$ is a uniformly lower
semicomputable family of subprobability masses (indeed, uniformly
computable).  The Kraft--Chaitin coding theorem for lower semicomputable
semimeasures, equivalently Shannon--Fano coding in this discrete setting
\citep{Shannon1948,Fano1949,li-vitanyi-2019}, gives
\[
 K(\Xi_T\mid T,\textup{instance})
 \le -\log_2\mu_T(\Xi_T)+O(1).
\]
The constant is uniform in $T$ and in the realized run.
Since
\[
 -\log_2\mu_T(\Xi_T)
 =-\log_2P_{\rm tape}(\Xi_T)-\Delta_T,
\]
rearranging proves \eqref{eq:complexity-bound}.
\end{proof}

\begin{proof}[Proof of Corollary~\ref{cor:exact-slack}]
On $\mathcal W_T$,
\[
 \Delta_T(w)=\log_2\frac{m_T\nu_T(w)}{p_TP_T(w)}.
\]
Taking $P_T$-expectations gives \eqref{eq:exact-slack}.  For the second
claim, $\Delta_T=J_T+\log_2Q_T(L_T)$.  If $r_T$ is the conditional law of
$L_T$ given $E_T$, then
\[
 \sum_\ell r_T(\ell)\log_2Q_T(\ell)
 =-H_2(r_T)-D_2(r_T\Vert Q_T)
\]
when $Q_T$ has total mass one; allocating less mass can only decrease the
left side.  The maximum is attained at $Q_T=r_T$.
\end{proof}

\subsection{Powered matrix theorem}
\label{app:powered-commutative}

\begin{proof}[Proof of Lemma~\ref{lem:all-power-commutation}]
The $(\sigma,\tau)$ entry of $K_{f,s}K_{g,s}$ is the sum over all
action-labelled two-step paths
\[
 \sigma\xrightarrow{f,a}\sigma'
       \xrightarrow{g,b}\tau
\]
of
\[
 \bigl(\rho_f(a\mid\sigma)\rho_g(b\mid\sigma')\bigr)^s.
\]
The diamond bijection maps these paths onto the corresponding $g,f$ paths
with the same endpoints and preserves each product before it is powered.
The two matrix entries are therefore equal.

For the fixed-arity statement, let $m_f(\sigma,\tau)$ be the number of
action labels that move $\sigma$ to $\tau$ when $f$ is addressed.  If $f$ is
present at $\sigma$, then
\[
 K_{f,s}(\sigma,\tau)
 =m_f(\sigma,\tau)M_f^{-s}
 =M_f^{\,1-s}K_{f,1}(\sigma,\tau).
\]
Both sides vanish if $f$ is absent.  Hence, for nonadjacent $f,g$ whose
ordinary matrices commute,
\[
 K_{f,s}K_{g,s}
 =M_f^{\,1-s}M_g^{\,1-s}K_{f,1}K_{g,1}
 =K_{g,s}K_{f,s}.
\]
\end{proof}

\begin{proof}[Proof of Lemma~\ref{lem:exact-prefix-matrix}]
First observe that $K_{H,s}$ is well defined.  If $x$ and $y$ are two
sources of $H$, their labels are nonadjacent; otherwise $H$ would contain an
edge between them.  Their powered matrices commute, and induction on $|H|$
shows that deleting the sources in either order gives the same product.

A stronger conditional statement is proved by induction.  Fix an arbitrary
past history $\mathfrak h$ ending at state $\sigma$, and let
$M_{\mathfrak h}(H)$ be the powered mass of all continuations whose next
$|H|$ repairs induce $H$.  The claim is
\begin{equation}\label{eq:conditional-prefix-matrix}
 M_{\mathfrak h}(H)
 \le e_\sigma^{\mathsf T}K_{H,s}\mathbf1.
\end{equation}
It is immediate for $H=\varnothing$.

Suppose $H\ne\varnothing$, and let $f$ be the flaw selected after
$\mathfrak h$.  If no source of $H$ is labelled $f$, then
$M_{\mathfrak h}(H)=0$.  Otherwise there is a unique such source $x$, since
two vertices with the same label are adjacent and cannot both be sources.
After an action $a\in\mathcal A_f(\sigma)$ leading to
$\tau=\Phi_f(\sigma,a)$, the remaining repairs must induce $H-x$.
Applying the induction hypothesis separately after every possible action
gives
\begin{align*}
 M_{\mathfrak h}(H)
 &\le
 \sum_{a\in\mathcal A_f(\sigma)}
 \rho_f(a\mid\sigma)^s
 e_{\Phi_f(\sigma,a)}^{\mathsf T}
 K_{H-x,s}\mathbf1\\
 &=e_\sigma^{\mathsf T}K_{f,s}K_{H-x,s}\mathbf1
 =e_\sigma^{\mathsf T}K_{H,s}\mathbf1.
\end{align*}
Actions leading to the same state are added in $K_{f,s}$; no recovery of an
action from its endpoints is used.  The induction is uniform over
$\mathfrak h$, so the selector may use the complete past action history.

Finally, sum \eqref{eq:conditional-prefix-matrix} over the initial source
symbols.  Their powered masses at state $\sigma$ sum to $b_s(\sigma)$,
which gives \eqref{eq:exact-prefix-matrix}.
\end{proof}

\begin{proof}[Proof of Theorem~\ref{thm:powered-commutative}]
For every full history DAG $H$, the definition of $\lambda_f(s)$ gives
\[
 v_s^{\mathsf T}K_{H,s}\mathbf1
 \le
 \left(\prod_{x\in H}\lambda_{L(x)}(s)\right)
 v_s^{\mathsf T}\mathbf1
 =\prod_{x\in H}\lambda_{L(x)}(s).
\]
Since $b_s\le\gamma_sv_s$ coordinatewise,
Lemma~\ref{lem:exact-prefix-matrix} yields
\begin{equation}\label{eq:one-history-dag-weight}
 \sum_{w\in\mathcal L_T^\Lambda(H)}P_{\rm tape}[w]^s
 \le\gamma_s\prod_{x\in H}\lambda_{L(x)}(s).
\end{equation}

The full history DAG of a flaw word is its labelled dependence graph.
Full history DAGs, understood up to the label-preserving isomorphism fixed
above, are therefore in one-to-one correspondence with traces; only a subset
need be realizable by the algorithm.  Summing
\eqref{eq:one-history-dag-weight} over realized DAGs and then enlarging to
all traces of length $T$ proves
\[
 Z_T^\Lambda(s)
 \le\gamma_s
 \sum_{\substack{h\in\mathcal M(D)\\|h|=T}}
 \prod_f\lambda_f(s)^{N_f(h)}
 =\gamma_sa_T(s).
\]
The Cartier--Foata identity gives the asserted term-by-term formal-series
inequality.

If $q\boldsymbol\lambda(s)$ lies in the open Shearer region, the series in
\eqref{eq:powered-trace-series} converges at $z=q$ and has nonnegative
terms.  Hence
\[
 q^Ta_T(s)
 \le\sum_{t\ge0}q^ta_t(s)
 =\frac1{P_D(q\boldsymbol\lambda(s))},
\]
which proves \eqref{eq:powered-shearer-tail}.

It remains to derive the local cluster estimate.  Assume first that
$0<\theta_s<1$ and set
\[
 \bar\lambda_f=\frac{\lambda_f(s)}{\theta_s}.
\]
By definition of $\theta_s$,
\[
 \bar\lambda_f
 \sum_{I\in\operatorname{Ind}(\Gamma(f))}\eta(I)
 \le\eta_f.
\]
The restricted cluster-expansion enumeration therefore gives, for every
independent sink set $R$,
\begin{equation}\label{eq:rescaled-trace-sum}
 \sum_{\substack{h\in\mathcal M(D)\\
                  \operatorname{sink}(h)=R}}
 \prod_f\bar\lambda_f^{N_f(h)}
 \le\eta(R).
\end{equation}
This is the standard trace-DAG form of the cluster bound; see
\citet[Proposition~3.9]{harris-iliopoulos-kolmogorov-2025}.  Restricting the
sum to traces of length $T$ and restoring the factor $\theta_s^T$ gives
\[
 a_T(s)
 \le\theta_s^T\sum_{R\in\operatorname{Ind}(F)}\eta(R).
\]
Together with \eqref{eq:powered-trace-bound}, this proves
\eqref{eq:powered-selector-bound}.  If $\theta_s=0$, then every
$\lambda_f(s)=0$, so the assertion is immediate for $T\ge1$; at $T=0$ it
follows from $\sum_\sigma b_s(\sigma)\le\gamma_s$.
\end{proof}

\begin{proof}[Proof of Corollary~\ref{cor:powered-dimension}]
The depth-$T$ cylinders indexed by $\mathcal L_T^\Lambda$ cover $\mathcal N$
and have source diameters at most a fixed initial factor times
$\rho_{\max}^T$.  By \eqref{eq:strict-powered-bound},
\[
 \sum_{w\in\mathcal L_T^\Lambda}P_{\rm tape}[w]^s
 \le C_s\vartheta_s^T\longrightarrow0.
\]
This proves that the source-relative $s$-dimensional Hausdorff measure of
$\mathcal N$ is zero.

For the effective statement, define on $\mathcal L_T^\Lambda$
\[
 \kappa_{T,s}(w)
 =\frac{P_{\rm tape}[w]^s}
       {\widehat C_s\widehat\vartheta_s^T}.
\]
These are subprobability masses because
\[
 \sum_{w\in\mathcal L_T^\Lambda}\kappa_{T,s}(w)
 \le
 \frac{C_s\vartheta_s^T}
      {\widehat C_s\widehat\vartheta_s^T}
 \le1.
\]
They are uniformly computable under the stated assumptions.  Prefix-free
coding gives
\begin{align*}
 K(w\mid T,\textup{instance})
 &\le-\log_2\kappa_{T,s}(w)+O(1)\\
 &=s\log_2\frac1{P_{\rm tape}[w]}
   +\log_2\widehat C_s+T\log_2\widehat\vartheta_s+O(1).
\end{align*}
Removing the condition $T$ costs $O(\log T)$.  Since
$\rho_{\max}<1$, the source self-information of a depth-$T$ prefix grows at
least linearly in $T$.  Divide by this self-information and take a limsup;
the term $T\log_2\widehat\vartheta_s$ is nonpositive, and the result follows.
\end{proof}

\begin{proof}[Proof of Corollary~\ref{cor:powered-dependent-source}]
The cylinders indexed by $\mathcal L_T^\Lambda$ are disjoint.  Hence
\[
 \Prb_\nu\{\tau\ge T\}
 =\sum_{w\in\mathcal L_T^\Lambda}\nu[w]
 \le D\sum_{w\in\mathcal L_T^\Lambda}P_{\rm tape}[w]^s.
\]
Apply \eqref{eq:strict-powered-bound}.
\end{proof}

\subsection{Exact disagreement repair}
\label{app:disagreement-proof}

\begin{proof}[Proof of Theorem~\ref{thm:orientation-disagreement}]
Let $M(x)=\sum_{v\in V}x_v$.  Every selected edge contains one zero and one
one.  Symbols in $\mathcal A_-$, in
$\mathcal A_{\rm p}\cup\mathcal A_{\rm r}$, and in $\mathcal A_+$ change
$M$ by $-1,0,+1$ and have total source-power weights $D_s,N_s,U_s$.
Since $G$ is connected, a state is live exactly when $1\le M\le N-1$.
Conditioning on the next symbol and inducting on $T$ proves
\eqref{eq:orientation-disagreement-partition}, even when the selector uses
the entire state and action-label history.

The matrix $R_s$ is diagonally similar to the symmetric tridiagonal matrix
with diagonal $N_s$ and off-diagonal $(D_sU_s)^{1/2}$.  Its eigenvalues are
\[
 N_s+2(D_sU_s)^{1/2}
 \cos\frac{j\pi}{N},
 \qquad j=1,\ldots,N-1.
\]
This proves \eqref{eq:orientation-disagreement-pressure}.  At $s=1$,
\[
 \varrho_1
 =N_1+2\sqrt{D_1U_1}\,c_N
 <N_1+D_1+U_1=1.
\]
As $s\downarrow0$, $D_s\to|\mathcal A_-|$, $N_s\to
|\mathcal A_{\rm p}|+|\mathcal A_{\rm r}|$, and
$U_s\to|\mathcal A_+|$.  Hence $\varrho_s>1$ near zero.  Every source atom
lies strictly between zero and one, so $\varrho_s$ is continuous and strictly
decreasing.  The process therefore terminates exponentially fast and the
critical power $\Delta$ is unique.

For the lower bound matching the dimension upper estimate, put
\[
 h_s(k)=\left(\frac{D_s}{U_s}\right)^{k/2}
         \sin\frac{k\pi}{N},
 \qquad 1\le k<N,
\]
and set $h_s(0)=h_s(N)=0$.  Then $R_sh_s=\varrho_sh_s$.  At $s=\Delta$,
if action $a$ has reference probability $\pi(a)$ and changes the Hamming
weight by $d(a)\in\{-1,0,1\}$, define
\begin{equation}\label{eq:disagreement-critical-law}
 \nu_\Delta(a\mid k,z)
 =\pi(a)^\Delta\frac{h_\Delta(k+d(a))}{h_\Delta(k)}.
\end{equation}
The eigenvector equation says that these probabilities sum to one.  Boundary
moves receive probability zero, so $\nu_\Delta$ is supported on $\mathcal N$.
For every prefix $w$ in its support, ending at weight $k_T$,
\begin{equation}\label{eq:disagreement-critical-telescope}
 \nu_\Delta[w]
 =P[w]^\Delta\frac{h_\Delta(k_T)}{h_\Delta(k_0)}.
\end{equation}
The ratio is bounded above and below on $\{1,\ldots,N-1\}$.  The
source-relative mass-distribution argument gives the lower Hausdorff bound;
\eqref{eq:orientation-disagreement-partition} at every $s>\Delta$ gives the
upper bound.  Under computability, a tape random for $\nu_\Delta$ has
effective strong dimension at least $\Delta$ by Levin--Schnorr.  Since
$R_\Delta$ is similar to a symmetric matrix with spectral radius one, its
powers are bounded; the corresponding critical-power coding masses give the
reverse effective bound.

Finally, if \eqref{eq:disagreement-power-domination} holds, disjointness of
the live cylinders and \eqref{eq:orientation-disagreement-partition} give
\eqref{eq:disagreement-weak-source-tail}.  Equation
\eqref{eq:disagreement-critical-telescope} gives the required critical law
and domination constant.  Thus the same $\Delta$ is the exact global
source-power threshold and proves all remaining claims.
\end{proof}

\begin{proof}[Proof of Corollary~\ref{cor:disagreement-separation}]
For map $H$ in \eqref{eq:common-source-action-table},
\[
 D_s^H=q^s,
 \qquad N_s^H=m^{1-s}q^s+t^s,
 \qquad U_s^H=t^s.
\]
For map $L$,
\[
 D_s^L=m^{1-s}q^s,
 \qquad N_s^L=q^s+t^s,
 \qquad U_s^L=t^s.
\]
Since $c_4=1/\sqrt2$, Theorem~\ref{thm:orientation-disagreement} gives
\eqref{eq:common-source-high-rate} and
\eqref{eq:common-source-low-rate}.  At $s=1$ both rules assign masses
$(q,q,t,t)$ to $(00,z,11,\bar z)$ for every current oriented word $z$.
Thus $K^H_{f,1}=K^L_{f,1}$ for every flaw and state.  Their common matrix
$R_1$ also makes the complete stopping-time law identical for every selector
by \eqref{eq:orientation-disagreement-partition}; this remains true when the
selector sees action labels, because the recurrence is independent of the
selected edge and the earlier labels.

For every fixed $s>0$, as $t\downarrow0$ one has
\[
 \varrho_H(s)\longrightarrow m^{1-s}2^{-s},
 \qquad
 \varrho_L(s)\longrightarrow 2^{-s}.
\]
The first limiting function has unique root
$\log m/\log(2m)$.  Strict monotonicity and continuity therefore give the
first limit in \eqref{eq:common-source-root-limits}.  For the second, fix any
$s_0>0$.  Since $2^{-s_0}<1$, one has $\varrho_L(s_0)<1$ for all sufficiently
small $t$, and hence $\Delta_L<s_0$.  Thus $\Delta_L\to0$.
Choose $m$ so that $\log m/\log(2m)>1-\varepsilon$, and then choose a
positive rational $t$ small enough to obtain
\eqref{eq:disagreement-near-maximal-separation}.  The source and both maps
are then computable, so Theorem~\ref{thm:orientation-disagreement} also gives
attainment of the maximal effective strong dimensions.

Now fix $\Delta_L<s_0<\Delta_H$.  The critical law for $H$ satisfies
$\nu_H[w]\le D_HP[w]^{\Delta_H}\le D_HP[w]^{s_0}$ for every finite action
word and is supported on $\mathcal N_H$.  Feed this same measure on the common
tape space to $L$.  The global-domination part of
Theorem~\ref{thm:orientation-disagreement} gives $L$ an exponential stopping
tail.  Finally, from an initial assignment of Hamming
weight one, $x_1^\infty$ is neutral for $H$ and immediately decreases to zero
for $L$, whereas $y^\infty$ has the opposite behavior.  This proves the
language and robustness separations.
\end{proof}

\subsection{Weak-randomness applications}
\label{app:powered-applications}

For the variable-model calculation, take
\[
 v_s(\sigma)=\frac{\pi(\sigma)^s}{\sum_{\xi\in\Omega}\pi(\xi)^s}.
\]
For $S=v(A)$ and a fixed terminal assignment $\tau$,
\begin{align*}
 \sum_\sigma v_s(\sigma)K_{A,s}(\sigma,\tau)
 &=v_s(\tau)
   \sum_{b\in A}\pi_S(b)^s\\
 &=v_s(\tau)r_A(s).
\end{align*}
This proves \eqref{eq:variable-powered-charge}.

\begin{proof}[Proof of Theorem~\ref{thm:ksat-graph-specific}]
For uniform $k$-clause resampling,
$\lambda_A(s)=2^{-ks}$ for every clause $A$.  A fixed initial assignment and
the uniform choice of $v_s$ give $\gamma_s=2^n$.  Therefore
Theorem~\ref{thm:powered-commutative} gives
\[
 Z_T^\Lambda(s)
 \le2^n2^{-ksT}
 \bigl|\{h\in\mathcal M(D):|h|=T\}\bigr|.
\]
For every $\xi>\kappa_D$, the Cartier--Foata series at
$z=\xi^{-1}$ gives
\[
 \bigl|\{h\in\mathcal M(D):|h|=T\}\bigr|
 \le\frac{\xi^T}{P_D(\xi^{-1}\mathbf1)}.
\]
Multiplying by $D_0$ proves
\eqref{eq:ksat-graph-specific-tail}.  A conditional atom bound implies
$\nu[w]\le2^{-hT}=P[w]^{h/k}$ by the chain rule, so this is the special
case $s=h/k$ and $D_0=1$.

For every $s>\log_2(\kappa_D)/k$, choose
$\xi\in(\kappa_D,2^{ks})$.  The resulting source-power sum decays
exponentially, which proves the Hausdorff and, under computability, effective
dimension bounds by Corollary~\ref{cor:powered-dimension}.  Finally, the
symmetric form of Shearer's theorem
\citep{shearer-1985,scott-sokal-2005} states that a graph of maximum degree
$d\ge2$ has $z\mathbf1$ in its open Shearer region whenever
\[
 0\le z<\frac{(d-1)^{d-1}}{d^d}.
\]
Hence $R_D\ge(d-1)^{d-1}/d^d$ and $\kappa_D\le c_d$.
\end{proof}

\begin{proof}[Proof of Corollary~\ref{cor:recurrent-core-dimension}]
Fix a nonempty $R\subseteq F$.  If
$\operatorname{Rec}(\omega)=R$, every clause outside $R$ is repaired only
finitely often.  After a finite tape prefix, the run therefore repairs only
clauses in $R$.  Fix such a prefix $u$.  On a history of the instance that
retains only the clauses in $R$, prepend $u$ and apply the original selector
whenever it chooses a currently violated clause in $R$; otherwise choose the
first currently violated clause of $R$ in a fixed order.  This defines a
deterministic nonanticipating selector for the reduced instance and agrees
with the original run on every continuation after $u$ whose later repair
labels all lie in $R$.  It is computable whenever the original selector is.
The proof of Theorem~\ref{thm:ksat-graph-specific} now applies to the induced
graph $D[R]$.
For every $s>\log_2(\kappa_{D[R]})/k$, the source-power sum of the suffix
decays exponentially.

There are countably many finite prefixes after which the run remains in
$R$.  Prefixing rescales the source metric by a constant, and Hausdorff
dimension is stable under countable unions.  This proves the setwise bound.
In the computable case, a fixed prefix and the finite set $R$ cost only an
additive constant in a program description.  Prefix-free coding of the same
suffix bound and passage to the infimum over $s$ prove
\eqref{eq:recurrent-core-ksat}.
\end{proof}

\begin{proof}[Proof of Theorem~\ref{thm:ksat-entropy-threshold}]
For a violated $k$-clause under the uniform reference source, precisely one
local assignment is bad, so its powered charge is $2^{-ks}$.  Take
$v_s$ uniform on the $2^n$ assignments.  A fixed initial assignment gives
$\gamma_s=2^n$.  Set $\eta_A=\eta=1/(d-1)$.  A clause is adjacent to every
other member of its inclusive neighborhood.  An independent subset of that
neighborhood therefore either consists of the clause itself or is contained
in its at most $d$ neighbors.  Hence
\[
 \sum_{I\in\operatorname{Ind}(\Gamma(A))}\eta(I)
 \le \eta+(1+\eta)^d.
\]
Consequently
\[
 \theta_s\le
 2^{-ks}\frac{\eta+(1+\eta)^d}{\eta}
 =2^{-ks}\left(1+\frac{d^d}{(d-1)^{d-1}}\right)
 =2^{-ks}A_d,
\]
whereas
\[
 \sum_{R\in\operatorname{Ind}(F)}\eta(R)
 \le(1+\eta)^m=\left(\frac d{d-1}\right)^m.
\]
The atom condition implies that every length-$T$ redraw word has probability
at most $2^{-hT}$.  With $s=h/k$, this equals its uniform reference
probability raised to $s$.  If $h=\log_2A_d+\varepsilon$, apply
Corollary~\ref{cor:powered-dependent-source} to obtain
\eqref{eq:weak-ksat-tail}.

Taking $h=k$ gives \eqref{eq:uniform-ksat-tail}.  For the dimension
statement, every $s>\log_2A_d/k$ makes $\theta_s<1$.
Corollary~\ref{cor:powered-dimension} applies for every such $s$; let $s$
decrease to the displayed value.  The bound by one is automatic.
\end{proof}

\begin{proof}[Proof of Corollary~\ref{cor:few-bit-ksat}]
Conditioned on the past, injectivity of $E_t$ makes every supported redraw
have probability exactly $2^{-r}$.  The encoder is chosen before $U_t$, so
its dependence on the past does not change this fact.  Apply
Theorem~\ref{thm:ksat-entropy-threshold} with $h=r$.  For a bound
$\Prb\{\tau\ge T\}\le C_0 2^{-\varepsilon T}$, bound the first
$\lceil\log_2(C_0)/\varepsilon\rceil$ terms of
$\E\tau=\sum_{T\ge1}\Prb\{\tau\ge T\}$ by one and sum the remaining
geometric series.  With
$C_0=2^n(d/(d-1))^m$ this gives \eqref{eq:few-bit-expectation}; every repair
consumes exactly $r$ bits.

If $A_d\le2^{k-1}$ and $r=\lceil\log_2A_d+1\rceil$, then $r\le k$,
$\varepsilon_r\ge1$, and $r=O(\log(d+1))$.  Moreover,
$\log_2(d/(d-1))=O(1/d)$ for $d\ge2$.  Substitution in
\eqref{eq:few-bit-expectation} proves
\eqref{eq:few-bit-asymptotic}.
\end{proof}

\begin{proof}[Proof of Corollary~\ref{cor:sv-ksat}]
Condition on the entire past before a block is read.  The chain rule and the
Santha--Vazirani inequalities give, for every $z\in\{0,1\}^r$,
\[
 \Prb\{(U_{j+1},\ldots,U_{j+r})=z\mid U_1,\ldots,U_j\}
 \le(1-\alpha)^r.
\]
After an injective map, the same upper bound holds for every redraw atom.
Thus Theorem~\ref{thm:ksat-entropy-threshold} applies with
$h=r\log_2(1/(1-\alpha))$.
\end{proof}

\begin{proof}[Proof of Proposition~\ref{prop:ksat-entropy-lower}]
Put $\ell=\lceil\log_2(d+1)\rceil$ and choose distinct words
$u_1,\ldots,u_{d+1}\in\{0,1\}^{\ell}$.  Use $\ell$ common variables and,
for each $i$, a disjoint private block $P_i$ of $k-\ell$ variables.  Let
$C_i$ be the unique clause on the common variables and $P_i$ that is false
when the common word is $u_i$ and $P_i$ is all zero.  The clauses share the
common variables, so their dependency graph is $K_{d+1}$, but their scopes
are not equal.  Distinct clauses cannot be false simultaneously.  The
formula is satisfiable, for example by setting one private variable in each
$P_i$ to one.

Start with every private block zero and the common word equal to $u_1$.
Call $i$ active when $P_i$ is still zero and let $q$ be the number of active
indices.  In every live state the common word is $u_i$ for a unique active
$i$, so $C_i$ is the only violated clause.  Write
$M=2^{k-\ell}$.  When $C_i$ is repaired, exactly $q$ of its $2^k$ redraws
leave $q$ unchanged: the new private block is zero and the new common word
is $u_j$ for an active $j$.  Exactly $(q-1)(M-1)$ redraws move to $q-1$:
the new private block is nonzero and the new common word is $u_j$ for one of
the other active indices.  Every other redraw terminates.  Thus the
all-power live matrix on $q=1,\ldots,d+1$ is bidiagonal, with
\begin{equation}\label{eq:ksat-lower-matrix}
 R_s(q,q)=q\,2^{-ks},
 \qquad
 R_s(q,q-1)=(q-1)(M-1)2^{-ks}.
\end{equation}
Its spectral radius is $(d+1)2^{-ks}$.  Proposition~\ref{prop:pressure-details}
therefore gives \eqref{eq:ksat-lower-survival} and
\eqref{eq:ksat-lower-dimension}, including pointwise attainment.

At $q=d+1$, instead redraw the selected scope uniformly from the $d+1$
assignments $(u_j,0)$, $1\le j\le d+1$.  This source has conditional block
min-entropy $\log_2(d+1)$ and remains at $q=d+1$ forever.  If $d+1=2^r$,
the redraw is the image of an injective map applied to $r$ fresh fair bits.
Finally,
\[
 \frac{c_d}{d+1}
 =\frac{d}{d+1}\left(1+\frac1{d-1}\right)^{d-1}<e,
\]
which proves the stated gap.
\end{proof}

\begin{proof}[Proof of Theorem~\ref{thm:tree-cnf-sharpness}]
Orient every edge of $T_{d,\ell}$ away from the root and introduce one
Boolean variable $X_e$ for each edge.  For a vertex $v$, form a clause
$C_v$ containing one literal for each incident edge, using opposite signs
at the two endpoints.  Add $k-\deg(v)$ private variables to $C_v$, each used
in no other clause.  This is possible because $\deg(v)\le d\le k$.  Every
clause has exactly $k$ distinct variables, and every variable occurs once or
twice.

Two clauses share a variable exactly when their vertices are adjacent.
Their bad events prescribe opposite values to the shared edge variable and
are therefore disjoint; nonadjacent clauses have disjoint scopes.  The
formula is satisfiable.  For example, use the sign convention in which the
parent literal on an edge is $X_e$ and the child literal is $\neg X_e$, set
every $X_e=0$, and set one private literal of the root to true.  Every
nonroot clause is satisfied by its parent-edge literal, while the root has a
private literal because its degree is $d-1<k$.

Every bad event is atomic and has probability $2^{-k}$.  Let
\[
 N_T(D)=\bigl|\{h\in\mathcal M(D):|h|=T\}\bigr|.
\]
Let $n$ be the number of variables and fix the selector.  Every depth-$T$
prefix has probability $2^{-(n+kT)}$.  Fix a trace $H$ and a satisfying
terminal assignment $y$.  Reconstruct the prefix by induction on $|H|$.
For $H=\varnothing$, the reconstructed state is $y$.  Otherwise choose a
source $x$ of $H$, first reconstruct $H-x$, and then restore on the scope of
$L(x)$ the unique local assignment that makes $L(x)$ bad; atomicity gives
uniqueness.  The redraw block associated with $x$ is the restriction of the
reconstructed post-repair assignment to the scope of $L(x)$; hence both the
predecessor state and the corresponding tape symbol are determined.  The
result does not depend on the chosen source.  Any two source-removal orders
are connected by swaps of adjacent incomparable nodes, and incomparable
nodes have nonadjacent labels and hence disjoint scopes.  Their reverse
restorations, including the recorded redraw blocks, therefore commute.

Inductively, the bad events in the reconstructed initial state are exactly
the labels of the sources of $H$.  The restored event $L(x)$ is bad; every
adjacent bad event is absent by extremality, while every nonadjacent event is
unchanged and is governed by the induction hypothesis for $H-x$.  Nodes with
the same label are comparable, so each source label occurs only once.  The
selector therefore chooses the label of a unique source node.  Removing that
node and repeating produces a unique selector-compatible linear extension of
$H$, and hence exactly one live prefix ending at $y$.  Different traces give
different repair traces.  Conversely, a trace and a terminal assignment
determine at most one live prefix, and there are at most $2^n$ terminal
assignments.  This is the elementary finite-horizon form of the extremal
probability-weighted trace identity
\citep{guo-jerrum-liu-2019,jerrum-2024-prs}.  Hence, for every $s>0$,
\begin{equation}\label{eq:tree-two-sided-trace}
 2^{-sn}2^{-ksT}N_T(T_{d,\ell})
 \le Z_T^\Lambda(s)
 \le 2^{(1-s)n}2^{-ksT}N_T(T_{d,\ell}).
\end{equation}
The limit of the trace growth exists without an aperiodicity assumption.
Choose one canonical word for each trace.  Splitting the canonical word of a
length-$(m+n)$ trace after $m$ letters maps it injectively to a pair of traces
of lengths $m$ and $n$: the pair determines its product and hence the original
trace.  Therefore
\[
 N_{m+n}(D)\le N_m(D)N_n(D).
\]
Fekete's lemma applied to $\log N_T(D)$ shows that $N_T(D)^{1/T}$ has a
limit, whose value is $\kappa_D$ by \eqref{eq:trace-growth-kappa}.  It follows from
\eqref{eq:tree-two-sided-trace} that
\[
 \lim_{T\to\infty}Z_T^\Lambda(s)^{1/T}
 =2^{-ks}\kappa_{T_{d,\ell}}.
\]
At $s=1$ this proves \eqref{eq:tree-cnf-survival}.  The unique power at which
the rate equals one is
$\log_2(\kappa_{T_{d,\ell}})/k$.  Applying
Proposition~\ref{prop:pressure-details} to the reachable live components gives
\eqref{eq:tree-cnf-dimension} and, for a computable selector, effective
attainment.

It remains to calculate the trace growth.  Put $b=d-1$ and define
\[
 r_0(z)=z,
 \qquad
 r_{j+1}(z)=\frac{z}{(1-r_j(z))^b}.
\]
As long as the denominators are positive, $r_j(z)$ is the signed
root-occupation ratio for a full $b$-ary subtree of height $j$.  The first
positive zero $z_{d,\ell}$ of
$P_{T_{d,\ell}}(z\mathbf1)$ is characterized by
$r_\ell(z_{d,\ell})=1$.  The critical value for the iteration is
\[
 z_*
 =\max_{0\le r<1}r(1-r)^b
 =\frac{b^b}{(b+1)^{b+1}}
 =\frac{(d-1)^{d-1}}{d^d}.
\]
If $z<z_*$, the iteration increases to the smaller fixed point of
$r=z/(1-r)^b$ and remains below one.  If $z>z_*$, there is no fixed point in
$[0,1)$, so the iteration reaches one after finitely many steps.  Hence
$z_{d,\ell}\downarrow z_*$ and
\[
 \kappa_{T_{d,\ell}}
 =z_{d,\ell}^{-1}\uparrow z_*^{-1}=c_d.
\]

Finally put
$\delta_\ell=\log_2(\kappa_{T_{d,\ell}})/k$.  For the fixed selector,
consider the finite live chain on full assignments.  The exact growth
calculation above and the finite-state decomposition give a reachable
irreducible component $C$ whose powered transition matrix satisfies
$\rho(B_{\delta_\ell,C})=1$.  Choose a positive right Perron vector $h$ and
a live prefix $u$ entering $C$.  On an internal edge $e:x\to y$ of $C$, set
\[
 Q(e\mid x)=P(e)^{\delta_\ell}\frac{h(y)}{h(x)}.
\]
Its row sums are one, so prefixing $Q$ by $u$ defines a probability law
$\nu_{\delta_\ell}$ supported on $\mathcal N$.  Telescoping along extensions
of $u$, and absorbing the source probability of $u$ and the finitely many
ratios of entries of $h$ into $D_\ell$, gives
\eqref{eq:tree-frostman-source} for every finite prefix $w$.  Since
$P[w]\le1$, the same inequality holds with $P[w]^s$ for every
$0\le s\le\delta_\ell$.
\end{proof}

\begin{proof}[Proof of Corollary~\ref{cor:weak-rainbow}]
For the matching oracle of \citet{kolmogorov-2016}, every flaw has
\[
 A=(2n-1)(2n-3)
\]
equiprobable actions, distinct oracle executions have distinct outputs in
the action-expanded graph, and the oracle is strongly commutative.  Thus its
powered charge is $A^{-s}$.  Put $C=(2n-1)(q-1)$.  The neighborhood of a flaw
is covered by four families, one for each endpoint, each containing at most
$C$ flaws.  Therefore, for constant $\eta$,
\[
 \sum_{I\in\operatorname{Ind}(\Gamma(f))}\eta^{|I|}
 \le(1+C\eta)^4.
\]
Taking $\eta=1/(3C)$ gives
\[
 \theta_s\le
 \frac{A^{-s}}{\eta}(1+C\eta)^4
 =\frac{256}{27}CA^{-s}.
\]
There are at most
\[
 |F|\le\frac{n(2n-1)(q-1)}2
\]
flaws, and hence
\[
 \sum_{R\in\operatorname{Ind}(F)}\eta^{|R|}
 \le(1+\eta)^{|F|}
 \le e^{|F|\eta}\le e^{n/6}.
\]
With $v_s$ uniform on the $(2n-1)!!$ perfect matchings, a fixed initial
matching has $\gamma_s=(2n-1)!!$.  The conditional atom bound and the
uniform reference oracle give power domination with $s=h/\log_2A$.
Corollary~\ref{cor:powered-dependent-source} now proves
\eqref{eq:weak-rainbow-tail}.  Letting
$s$ decrease to
$\log_2((256/27)C)/\log_2A$ proves the dimension bound.  Under uniform
actions, the strict condition $\theta_1<1$ is equivalent to
$q-1<27(2n-3)/256$.
\end{proof}

\section*{Acknowledgements}
The author used ChatGPT and Codex as research and editorial aids.  The author
assumes full responsibility for the statements and proofs in the paper.

\clearpage
\bibliographystyle{plainnat}
\bibliography{references}

\end{document}